\theoremstyle{plain}
\newtheorem{thm}{Theorem}[section]
\newtheorem{lem}[thm]{Lemma}
\newtheorem{pro}[thm]{Proposition}
\newtheorem{cor}[thm]{Corollary}
\theoremstyle{definition}
\newtheorem{rem}[thm]{Remark}
\numberwithin{equation}{section}
\newcommand{\DOM}{\operatorname{dom}}
\newcommand{\te}{\mathrm{t}}
\newcommand{\teu}{\underline{\mathbf{t}}}
\newcommand{\teo}{\overline{\mathbf{t}}}
\DeclareMathOperator*{\argmin}{arg\,min}
\newcommand{\PG}{\mathrm{PG}}
\newcommand{\AC}{\mathrm{AC}}
\newcommand{\EEE}{\color{black}}
\renewcommand{\ti}{{\times}}
\newcommand{\ulG}{\underline{G}}
\newcommand{\Mto}{\xrightarrow{\,\rmM\,}}
\newcommand{\Gto}{\xrightarrow{\Gamma}}
\newcommand{\Gweak}{\overset{\Gamma}{\weak}}
\DeclareMathOperator*{\Div}{div}
\newcommand{\inveps}{{\ts\frac{\ds1}{\ds\eps}}}
\begin{document}

\title{An existence result and evolutionary $\Gamma$-convergence for 
   perturbed gradient systems\thanks{The research was partially
     supported by DFG via SFB\,910 via the subprojects A5 and A8}}

\author{Aras Bacho\footnotemark[1], \ 
        Etienne Emmrich\footnotemark[1], \  
        Alexander Mielke\footnotemark[2]
}
\date{January 13, 2018}
\maketitle

\footnotetext[1]{Technische Universit\"at Berlin, Sekretariat MA 5-3,
  Straße des 17.\,Juni 136, 10623 Berlin, Germany.}
\footnotetext[2]{Weierstra\ss{}-Institut f\"ur Angewandte Analysis und
  Stochastik, Mohrenstra\ss{}e 39, 10117 Berlin and Institut f\"ur
  Mathematik, Humboldt-Universit\"at zu Berlin, Germany.}

\begin{abstract}
The initial-value problem for the perturbed gradient flow 
\begin{align*}
\label{eq:I.1}
\begin{cases}\quad
B(t,u(t)) \in \partial\Psi_{u(t)}(u'(t))+\partial \calE_t(u(t)) &
\text{ for  a.a. } t\in (0,T),\\
\quad u(0)=u_0, &
\end{cases}
\end{align*} 
with a perturbation $B$ in a \textsc{Banach} space $V$ is
investigated, where the dissipation potential $\Psi_u: V\rightarrow
[0,+\infty)$ and the energy functional $\calE_t:V\rightarrow
(-\infty,+\infty]$ are nonsmooth and supposed to be convex and
nonconvex, respectively. The perturbation $B:[0,T]\times V \rightarrow
V^*,\ (t,v)\mapsto B(t,v)$ is assumed to be continuous and satisfies a
growth condition. Under additional assumptions on the dissipation
potential and the energy functional, existence of strong solutions is
shown by proving convergence of a semi-implicit discretization scheme
with a variational approximation technique.
\end{abstract}

\section{Introduction}
\label{se:Intro}

The aim of this paper is to provide existence results for the
initial-value problem for the doubly nonlinear evolution inclusion
\begin{align}
  B(t,u(t)) \in \partial\Psi_{u(t)}(u'(t))+\partial \calE_t(u(t))
  &\quad \text{in } V^* \text{ for a.a. } t\in (0,T),
\end{align} 
with a continuous perturbation $B$ in the separable and reflexive real
\textsc{Banach} space $(V,\Vert\cdot\Vert)$, where $\partial \Psi_u$
and $\partial \calE_t$ denote the subdifferential of $\Psi_u$ and
$\calE_t$, respectively. The functional $\Psi_u$ is supposed to be a
dissipation potential for all $u\in \DOM(\calE_t)$, i.e., it is proper,
lower semicontinuous and convex with $\Psi_u(0)=0$ for all $u\in
\DOM(\calE_t)$. If the functionals $\Psi_u$ and $\calE_t$ are
\textsc{Fr\'echet} differentiable, the differential inclusion (1.1)
becomes the abstract evolution equation (also called doubly nonlinear
equation in \cite{ColVis90CDNE,Coll92DNEE})
\begin{align*}
 \rmD\Psi_{u(t)}(u'(t))=-\rmD\calE_t(u(t))+B(t,u(t)) &\quad \text{in } 
V^* \text{ a.e. in } (0,T),
\end{align*} 
where $\rmD\Psi_u$ and $\rmD\calE_t$ denote the \textsc{Fr\'echet} derivative
of $\Psi_u$ and $\calE_t$ respectively. The question arises why it is
interesting to study perturbed gradient systems. First of all, to
consider perturbed systems is sometimes important in order to describe
physical systems near or far from equilibrium properly. There are many
ways to incorporate the perturbation in the equation.

The most frequently used method is to consider an $\varepsilon$-family
of equations, where the occurring terms depend on the parameter
$\varepsilon$, and then to pass to the limit as
$\varepsilon\rightarrow \infty$, where the limit equation corresponds
to the unperturbed system. Another way to treat perturbed systems is
to use an additional term in the equations like the term $B_t$ in
\eqref{eq:I.1} or even a combination of both as in \cite{Miel16DEMM},
where the author considered the family of equations
\begin{align*}
 \rmD\Psi^\varepsilon_{u(t)}(u'(t))=-\rmD\calE^\varepsilon_t(u(t))+B^\varepsilon(t,u(t)) 
\end{align*} to derive results on the so-called evolutionary
$\Gamma$-convergence.

Second, \cite [p.\,235]{Miel16DEMM} highlights with an example that in
some cases it can be easier to treat a system with a nontrivial but
exact gradient structure $(X,\wt\calE, \wt \Psi)$ perturbed gradient
system $(V,\calE,\Psi,B)$ with a simpler energy $\calE$ and simpler
dissipation potentials $\Psi_u$.

While Section \ref{se:ExistResult} provides the main existence result
in Theorem \ref{th:MainExist}, we devote Section \ref{se:EGC} to the
question of evolutionary $\Gamma$-convergence of families 
$(V,\calE^\eps,\Psi^\eps,B^\eps)$ of perturbed gradient systems. This
provides a generalization of the results developed in
\cite{SanSer04GCGF, Serf11GCGF, Miel16EGCG} for exact gradient flows,
i.e.\ the case where $B^\eps \equiv 0$. Following the ideas in
\cite[Thm.\,4.8]{MiRoSa13NADN}, our Theorem \ref{th:EGC.main} shows that
under suitable technical assumptions, including convexity of
$\calE^\eps$, it is enough to establish $\calE_t^\eps \Gto \calE^0_t$
(strong $\Gamma$-convergence in $V$) and $\Psi^\eps_{u_\eps} \Mto
\Psi^0_{u_0}$ in $V$, where \textsc{Mosco} convergence means weak and strong
$\Gamma$-convergence.  

In Section \ref{se:Example} we show that the abstract result on
evolutionary $\Gamma$-convergence can be used for the homogenization
of quasilinear parabolic systems. For that application the \textsc{Mosco}
convergence $\Psi^\eps_{u_\eps} \Mto \Psi^0_{u_0}$ is too restrictive,
such that it is necessary to generalize it to situations where the
strong $\Gamma$-convergence $\Psi^\eps_{u_\eps} \Gto \Psi^0_{u_0}$ is
sufficient, see Corollary \ref{co:StrongGa}. Here we rely on an 
novel argument from \textsc{Liero-Reichelt} \cite{LieRei15?HCHT},
where the weak convergence of $u_\eps\weak u_0$ in $\rmW^{1,1}(0,T;V)$ is
circumvented by exploiting the strong convergence of the piecewise
affine interpolants $\wh u^\tau_\eps \to \wh u^\tau_0$ in
$\rmW^{1,1}(0,T;V)$ for $\eps \to 0$ and $\tau>0$ fixed. 
 
The general structure is that we provide a full and detailed proof of
the existence result in Section \ref{se:ExistResult}, where we use
\textsc{De Giorgi}'s minimization scheme using variational
interpolators. The result on the evolutionary $\Gamma$-convergence in
Section \ref{se:EGC} follows the same lines but is considerably
simpler as existence of solutions is assumed to be shown. Hence, for
getting an overview of the strategy in Section \ref{se:ExistResult} it
might be helpful to browse through the more compact proof of Theorem
\ref{th:EGC.main} first. This will facilitate the subsequent reading
of the full details in Section \ref{se:ExistResult}. In particular,
the elaborate time-discretization using \textsc{De Giorgi}'s
variational interpolants is only needed there.

\section{The main existence result}
\label{se:ExistResult}

Before making all the assumptions concerning the dissipation
potential, the energy functional and the perturbation, we need some
basic tools from convex analysis.

\subsection{Preliminaries and notation}
\label{su:PrelimNot}
In this section we collect some important notions and results on
convex analysis and $\Gamma$-convergence, which we need later on for
the existence result. First of all, we introduce the so-called
\textsc{Legendre-Fenchel} transform (or conjugate) $\Psi^*$ of a
proper, lower semicontinuous and convex functional $\Psi:V\rightarrow
(-\infty,+\infty]$ that is defined by
\begin{align*}
  \Psi^*(\xi):=\sup_{u\in V}\left \lbrace \langle \xi,u\rangle -
    \Psi(u)\right \rbrace, \quad \xi\in V^*,
\end{align*} where $\langle\cdot,\cdot\rangle$ denotes the duality
pairing between the \textsc{Banach} space $V$ and it's topological
dual space $V^*$. From the definition, the \textsc{Fenchel-Young}
inequality
\begin{align*}
 \langle \xi,u\rangle \leq  \Psi(u)+\Psi^*(\xi), \quad v\in V, \xi\in V^*,
\end{align*} 
immediately follows. It is also easy to check that the conjugate
itself is proper, lower semicontinuous and convex, see for example
\textsc{Ekeland} and \textsc{T\'emam} \cite{EkeTem74ACPV}. If, in
addition, $\Psi(0)=0$, then $\Psi^*(0)=0$ holds too. For a proper
functional $F:V\rightarrow (-\infty,+\infty]$, the
$($\textsc{Fr\'echet}$)$-subdifferential of $F$ is given by the
multivalued map $\partial F:V\rightarrow 2^{V^*}$ with
\begin{align*}
\partial F(u):=\left\lbrace \xi\in V^*: \ \liminf_{v\rightarrow u}
  \frac{F(v)-F(u)-\langle \xi,v-u\rangle}{\Vert v-u\Vert} \geq 0\right
\rbrace 
\end{align*} 
for all elements $u$ in the effective domain $\DOM(F):=\lbrace v\in V
\mid F(v)<+\infty \rbrace$ of $F$.  For convex and proper functions
$F$, it follows by simple calculations that the subdifferential of $F$
is given by
\begin{align*}
\partial F(u)=\left\lbrace \xi\in V^*: \ F(u)\leq F(v)+\langle
  \xi,u-v\rangle \quad \text{ for all  } v\in V \right\rbrace. 
\end{align*}
The following lemma gives a relation between the subdifferential of a
functional and it's \textsc{Legendre-Fenchel} transform.

\begin{lem}\label{le:Leg.Fen}
  Let $\Psi:V\rightarrow (-\infty,+\infty]$ be a proper, lower
  semicontinuous and convex functional and let $\Psi^*:V^*\rightarrow
  (-\infty,+\infty]$ be the \textsc{Legendre-Fenchel} transform of
  $\Psi$. Then for all $(u,\xi)\in V\times V^*$ the following
  assertions are equivalent:
\begin{itemize}
\item[$i)$]\quad $\xi\in \partial \Psi(u) \quad \text{in } V^*;$
\item[$ii)$]\quad  $u\in \partial \Psi^*(\xi)\quad \text{in } V;$
\item[$iii)$]\quad $\langle \xi, u\rangle=\Psi(u)+\Psi^*(\xi) \quad \text{in
  } \mathbb{ R}.$
\end{itemize}
\end{lem}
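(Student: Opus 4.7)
The plan is to prove the chain of equivalences (i) $\Leftrightarrow$ (iii) $\Leftrightarrow$ (ii), exploiting the convexity and lower semicontinuity of $\Psi$ so that the biconjugation identity $\Psi^{**}=\Psi$ is available for the second equivalence.

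First I would establish (i) $\Leftrightarrow$ (iii). Since $\Psi$ is proper and convex, the subdifferential admits the global characterization recalled just before the lemma: $\xi\in\partial\Psi(u)$ iff $\Psi(v)\geq\Psi(u)+\langle\xi,v-u\rangle$ for every $v\in V$. Rearranging, this inequality is equivalent to $\langle\xi,v\rangle-\Psi(v)\leq\langle\xi,u\rangle-\Psi(u)$ for all $v\in V$, which by the very definition of the Legendre-Fenchel conjugate says $\Psi^*(\xi)\leq\langle\xi,u\rangle-\Psi(u)$. Combined with the Fenchel-Young inequality $\Psi^*(\xi)\geq\langle\xi,u\rangle-\Psi(u)$ (valid whenever $u\in\DOM(\Psi)$, which is forced here because otherwise $\partial\Psi(u)=\emptyset$), this yields precisely the equality in (iii). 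Conversely, if (iii) holds then for every $v\in V$ the definition of $\Psi^*$ gives $\langle\xi,v\rangle-\Psi(v)\leq\Psi^*(\xi)=\langle\xi,u\rangle-\Psi(u)$, which rearranges to the subgradient inequality; hence (i).

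For (ii) $\Leftrightarrow$ (iii) I would apply the argument just given to the pair $(\Psi^*,\Psi^{**})$ on $V^*\times V^{**}$, with $u$ playing the role of a subgradient at $\xi$. This produces the equivalence between $u\in\partial\Psi^*(\xi)$ and the identity $\langle\xi,u\rangle=\Psi^*(\xi)+\Psi^{**}(u)$, where we regard $u\in V\subset V^{**}$ via the canonical embedding (the reflexivity of $V$ stated in the introduction makes this identification natural, though it is not strictly required). At this point one invokes the Fenchel-Moreau theorem: because $\Psi$ is proper, convex and lower semicontinuous on $V$, one has $\Psi^{**}(u)=\Psi(u)$ for all $u\in V$. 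Substituting gives (iii) in the original form, closing the loop.

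The one step that requires a bit of care, and where I expect the main (though still mild) obstacle to lie, is the application of biconjugation in the second equivalence: one must be sure that the version of Fenchel-Moreau invoked genuinely yields $\Psi^{**}=\Psi$ on $V$ itself rather than only on $V^{**}$, and that the subdifferential of $\Psi^*$ at $\xi\in V^*$ is characterized by the analogue of the convex-subgradient inequality. Both are standard in the reflexive Banach space setting of the paper, and are contained, for instance, in \textsc{Ekeland} and \textsc{T\'emam} \cite{EkeTem74ACPV}, so after citing those results the argument reduces to the symmetric calculation already carried out for (i) $\Leftrightarrow$ (iii).
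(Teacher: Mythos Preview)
Your argument is correct and is the standard proof of this classical fact. The paper itself does not give any argument at all: its proof consists of a single citation to \textsc{Ekeland} and \textsc{T\'emam} \cite[Prop.\,5.1 and Cor.\,5.2 on pp.\,21]{EkeTem74ACPV}. What you have written is essentially a spelled-out version of the argument found at that reference, so there is no substantive difference in mathematical content---you simply supply the details that the paper outsources.
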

\begin{proof}
  \textsc{Ekeland} and \textsc{T\'emam} \cite[Prop.\,5.1 and
  Cor.\,5.2 on pp.\,21]{EkeTem74ACPV}.
\end{proof}

For the dissipation potentials $\Psi_u$ we need the notion of
$\Gamma$-convergence, see \cite{Dalm93IGC, Brai02GCB, Brai06HGC} (also called
epigraph convergence in \cite{Atto84VCFO}). We consider a functional
$\Psi:V\to (-\infty,\infty]$ and a sequence $(\Psi_n)_{n\in \N}$ of
functionals all of which are lower semicontinuous convex functionals.
The (strong) $\Gamma$-convergence $\Psi_n \Gto \Psi$ in $V$ is defined via 
\[
\Psi_n\Gto \Psi\ \Longleftrightarrow \ \left\{ \ba{cl} 
\text{(a) }&\ds\ v_n \to v \  \Longrightarrow \ \Psi(v) \leq
               \liminf_{n\to \infty} \Psi_n(v_n),\\[0.4em]
\text{(b) }&\ \forall\, \wh v\in V\ \exists\, (\wh v_n)_{n\in \N}: \ \
\wh v_n\to \wh v \text{ and }\Psi(v) \geq
               \limsup_{n\to \infty} \Psi_n(v_n).\\
  \ea \right. 
\]
Here (a) is called the (strong) liminf estimate, while (b) is called
the (strong) limsup estimate or the existence of recovery
sequences. Similarly we define the (sequential) weak
$\Gamma$-convergence $\Psi_n \Gweak \Psi$ in $V$ via 
\[
\Psi_n\Gweak \Psi\ \Longleftrightarrow \ \left\{ \ba{cl} 
\text{(a) }&\ds\ v_n \weak v \  \Longrightarrow \ \Psi(v) \leq
               \liminf_{n\to \infty} \Psi_n(v_n),\\[0.4em]
\text{(b) }&\ \forall\, \wh v\in V\ \exists\, (\wh v_n)_{n\in \N}: \ \
\wh v_n\weak \wh v \text{ and }\Psi(v) \geq
               \limsup_{n\to \infty} \Psi_n(v_n).\\
  \ea \right. 
\] 
If both convergences hold, then we say that $\Psi_n$ \textsc{Mosco}
converges to $\Psi$ and write $\Psi_n \Mto \Psi$. In \cite[pp.\,271]{Atto84VCFO}
the following fundamental relation between $\Gamma$-convergence and
the \textsc{Legendre-Fenchel} transform was established:
\begin{equation}
 \label{eq:Gcvg.sw*}
  \Psi_n \Gto \Psi \quad \Longleftrightarrow \quad \Psi_n^* \Gweak \Psi^*,
\end{equation}
which always holds on reflexive Banach spaces $V$ if all $\Psi_n$ and
$\Psi_n^*$ are nonnegative (as for our dissipation potentials).

\subsection{Semi-implicit variational approximation scheme}
\label{su:VarApprox}
The basic idea to show the existence of strong solutions to \eqref{eq:I.1} with an initial condition $u=u_0\in V$ is to construct a solution via a
particular discretization scheme, more precisely, with a semi-implicit
\textsc{Euler} method. The usual implicit Euler method does not work
since the equation \eqref{eq:I.1} does not possess the gradient flow structure due to the nonpotential perturbation. With our approach, it is
possible to construct time-discrete solutions via a variational
approximation scheme. To illustrate this let for $N\in
\mathbb{N}\backslash \lbrace 0\rbrace$
\begin{align}
\label{eq:II.1}
 I_\tau=\lbrace 0=t_0<t_1<\cdots< t_n=n\tau<\cdots t_N=T \rbrace
\end{align} be an equidistant partition of the time interval $[0,T]$ with step size $\tau:=T/N$, where we omit the dependence of $t_n$ on the step size $\tau$ for simplicity. The approximation of \eqref{eq:I.1} is then given by
\begin{align}
\label{eq:II.2}
B(t_{n-1},U_\tau^{n-1})\in \partial \Psi_{U_\tau^{n-1}}\left(\frac{U_\tau^{n}-U_\tau^{n-1}}{\tau}\right)+\partial \calE_{t_n}(U_\tau^{n}), \quad n=1,\cdots,N,
\end{align} where the values $U_\tau^{n}\approx u(t_n)$, which shall
approximate the exact solution of \eqref{eq:I.1} at $t_n$, are to determine. If both the dissipation potential and the energy functional are
\textsc{Fr\'echet}-differentiable the inclusion \eqref{eq:II.2} becomes the
equation
\begin{align}
\label{eq:II.3}
B(t_{n-1},U_\tau^{n-1})=
\rmD\Psi_{U_\tau^{n-1}}\left(\frac{U_\tau^{n}-U_\tau^{n-1}}{\tau}\right)+\rmD
\calE_{t_n}(U_\tau^{n}), \quad n=1,\cdots,N.
\end{align} 
It is now simple to see that the value $U_\tau^{n}$ can be
characterized as a solution of the \textsc{Euler-Lagrange} equation
associated to the map
\begin{align*}
v\mapsto \Upphi(\tau,t_{t_{n-1}},{U_\tau^{n-1}},B(t_{n-1},U_\tau^{n-1});v),
\end{align*} where 
\begin{align}
\label{eq:II.4}
  \Upphi(r,t,u,w;v)=
  r\Psi_{u}\left(\frac{v-u}{r}\right)+\calE_{t+r}(v)-\langle w,v\rangle
\end{align}
for $ r\in \mathbb{R}^{>0},t\in[0,T)$ with $r+t\in[0,T]$,  $u,v\in V$,
and $ w\in V^*$. In fact, we determine the value $U_\tau^{n}$ by
minimizing the functional $\Upphi$ in the variable $v\in V$ under
suitable conditions on the dissipation potential and the energy
functional. To assure that the value $U_\tau^{n}$ satisfies the
inclusion \eqref{eq:II.2} also in the nonsmooth case, which is in general not
true, we make an assumption to enforce property.

\subsection{Assumptions for the main existence result}
\label{su:AssumpExistRes}

We now collect the assumption on the perturbed gradient system
$\PG=(V,\calE,\Psi,B)$ for our existence result. They will be denoted
in via (2.En), (2.$\Uppsi$m), and (2.Bk). 

The assumptions for the energy functional are the following. 

\begin{enumerate}[label=(\thesection .E\alph*),
 leftmargin=3.2em]
\item \label{eq:cond.E.1}
 \textbf{Constant domain.} For all $t\in[0,T]$, the
  functional $\calE_t:V \rightarrow (-\infty,+\infty]$ is proper and lower
  semicontinuous with the time-independent effective domain $D\equiv
  \DOM(\calE_t)\subset V$ for all $t\in [0,T]$.
\item \label{eq:cond.E.2} \textbf{Compactness of sublevels.}  There exists
  $t^*\in [0,T]$ such that the functional $E_{t^*}$ has compact
  sublevels in $V$.
\item \label{eq:cond.E.3} \textbf{Energetic control of power.} For all $u\in D$,
  the power map $t\mapsto \calE_t(u)$ is continuous on $[0,T]$ and
  differentiable in $(0,T)$ and its derivative $\partial_t\calE_t$ is
  controlled by the function $\calE_t$, i.e., there exist $C>0$ such that
\begin{align*}
  \vert \partial_t \calE_t(u)\vert \leq C \calE_t(u)\quad \text{for all } t\in
  (0,T) \text{ and } u\in D.
\end{align*} 
\item \label{eq:cond.E.4}\textbf{Chain rule.} For every absolutely continuous
  curve $v\in \AC([0,T];V)$ and every \textsc{Bochner} integrable
  functions $\xi \in \rmL^1(0,T;V^*)$ such that
\begin{align*}
  \sup_{t\in[0,T]}\vert \calE_t(u(t))\vert <+\infty, \quad \xi(t)\in \partial \calE_t(u(t)) \quad \text{ a.e. in } (0,T),\\
  \int_0^T\Psi_{u(t)}(u'(t))\dd t <+\infty \quad \text{and} \quad
  \int_0^T\Psi^*_{u(t)}(\xi(t))\dd t <+\infty,
\end{align*} the map $t\mapsto \calE_t(u(t))$ is absolutely continuous on
$[0,T]$ and
\begin{align*}
  \frac{d}{\dd t}\calE_t(u(t))\geq \langle \xi(t),u'(t)\rangle + \partial_t
  \calE_t(u(t))\quad \text{a.e. in }(0,T).
\end{align*}
\item \label{eq:cond.E.5} \textbf{Strong-weak closedness.} For all $t\in[0,T]$ and all sequences $(u_n,\xi_n)_{n\in \mathbb{N}} 
     \subset V\ti V^*$ with $\xi_n\in \partial \calE^{\eps_n}_t(u_n)$ such that 
\begin{align*}
  u_n \rightarrow u\in V, \quad \xi_n \rightharpoonup \xi\in V^*,
  \quad \calE_t(u_n)\rightarrow \mathcal{E}\in \mathbb{R} \quad \text{and}
  \quad \partial_t \calE_t(u_n)\rightarrow \calP \in
  \mathbb{R}
\end{align*} as $n\rightarrow\infty$, we have the relations
\begin{align*} 
\xi \in \partial \calE_t(u), \quad \calP \leq \partial_t \calE_t(u)\quad
\text{and} \quad \mathcal{E}=\calE_t(u). 
\end{align*}
\end{enumerate}
We first give a few relevant comments on these assumptions that will
be important below. 

\begin{rem}\label{re:Assump.E} \mbox{}\vspace{-0.6em}

\begin{itemize}

\item[$i)$] From Assumption \ref{eq:cond.E.3} we deduce with
  \textsc{Gronwall}'s lemma the chain of inequalities
\begin{align}
\label{eq:II.5}
\ee^{-C\vert t-s\vert}\calE_s(u)\leq \calE_t(u)\leq \ee^{C\vert
  t-s\vert}\calE_s(u) \quad \text{ for all }s,t\in [0,T].
\end{align} 
In particular there exists a constant $C_1>0$ such that
\begin{align}
  \label{eq:II.6}
  G(u)=\sup_{t\in [0,T]}\calE_t(u)\leq C_1 \inf_{t\in[0,T]} \calE_t(u)
  \quad \text{for all } u\in D.
\end{align}

\item[$ii)$] From Assumptions \ref{eq:cond.E.2} and \ref{eq:cond.E.3}
  we deduce the existence of a real number $S$ which bounds the energy
  functional from below, i.e.,
\begin{align}
\label{eq:II.7}
\calE_t(u)\geq S \quad \text{for all }u\in V,\, t\in [0,T].
\end{align}

\item[$iii)$] From the strong-weak closedness property of the graph of
  $\partial E$ in \ref{eq:cond.E.5} and \textsc{Mordukhovich}
  \cite[Lem.\,2.32, p.\,214]{Mord06VAGD1} one can argue as in
  \cite[Prop.\,4.2, p.\,273]{MiRoSa13NADN}, in order to show the
  following variational sum rule:
  \\
  If for $u_0\in V,\ r>0$, and $t\in[0,T]$ the point $u\in V$ is a
  global minimizer of $\Upphi(\tau,t,u_0,w;\cdot)$, then 
\begin{align}
\label{eq:II.8}
\exists\, \xi \in \partial \calE_t(u):\quad w-\xi \in \partial
\Psi_{u_0}\left(\frac{u-u_0}{r}\right); 
\end{align} 
or equivalently $\ds w\in \partial
\Psi_{u_0}\left(\frac{u-u_0}{r}\right)+ \partial \calE_{t+r}(u)$. 

\item[$iv)$] Assumption \ref{eq:cond.E.2} and point $i)$ in this
  remark yields immediately that the functional $\calE_t$ has compact
  sublevels for all $t\in[0,T]$.
%
\item[$v)$] It is possible to relax Assumption \ref{eq:cond.E.3} by
  assuming not the time differentiability but a kind of
  \textsc{Lipschitz } continuity and a conditioned one-sided time
  differentiability of the map $t\mapsto \calE_t(u)$, see
  \cite{MiRoSa13NADN}. We shall confine ourselves to Assumption
  \ref{eq:cond.E.3} just to simplify the proofs.
 \end{itemize} 
\end{rem}

Now, we collect the assumptions concerning the dissipation potential $\Psi$.

\begin{enumerate}[label=(\thesection.$\Uppsi$\alph*), leftmargin=3.2em]
\item \label{eq:Psi.1} \textbf{Dissipation potential.} For all $u\in
  V$ the functional $\Psi_u: V\rightarrow [0,+\infty)$ is lower
  semicontinuous and convex with $\Psi(0)=0$. Furthermore if $w_1,w_2
  \in \partial \Psi_u(v)$ for any $v \in V$ then
  $\Psi_u^*(w_1)=\Psi_u^*(w_2)$.

\item \label{eq:Psi.2} \textbf{Superlinearity.} The functionals
  $\Psi_u$ and $\Psi_u^*$ are coercive uniformly with respect to $u\in
  V$ in sublevels of $E$, i.e., for all $R>0$ there hold
  \begin{align*}
    \lim_{\Vert \xi\Vert_*\rightarrow +\infty}\frac{1}{\Vert
      \xi\Vert_*}\Big(\inf_{\overset{u\in V}{G(u)\leq R}
    }\Psi^*_u(\xi)\Big)=\infty,\quad \lim_{\Vert v\Vert\rightarrow
      +\infty}\frac{1}{\Vert v\Vert}\Big(\inf_{\overset{u\in
        V}{G(u)\leq R} }\Psi_u(v)\Big)=\infty,
  \end{align*} 
  where $G(u):=\sup_{t\in[0,T]}\calE_t(u)$ for all $u\in V$.

\item \label{eq:Psi.3} \textbf{State-dependence is \textsc{Mosco}
    continuous.} The functional $\Psi$ is continuous in the sense of
  \textsc{Mosco}-convergence, i.e., for all $R>0$ and sequences
  $(u_n)_{n\in\mathbb{N}}\subset V$ with $u_n\rightarrow u\in V$ as
  $n\rightarrow\infty$ and $\sup_{n\in\mathbb{N}}G(u_n)\leq R$, we
  have $\Psi_{u_n} \Mto \Psi_u$.
\end{enumerate}

\begin{rem}\label{re:Assump.Psi} \mbox{} \vspace{-0.6em}
\begin{itemize}
\item[$i)$] Since $\DOM(\Psi_u)=V$ for all $u\in V$, the lower
  semicontinuity and convexity of $\Psi_u$ yields the continuity of
  $\Psi_u$ and $\partial \Psi_u(v)\neq \emptyset$ for all $u\in V,
  u\in D$. Together with Assumption \ref{eq:Psi.2}, this implies that the
  \textsc{Legendre-Fenchel} conjugate $\Psi^*$ is everywhere finite,
  i.e., $\DOM(\Psi^*)=V^*$, and the operator $\partial \Psi_u:
  V\rightarrow 2^{V^*}$ is for all $u\in D$ bounded, i.e., it maps
  bounded subsets of $V$ into bounded subsets of $V^*$. The former in
  turn entail the same properties for $\Psi^*_u$ for all $u\in V$.

\item[$ii)$] The \textsc{Mosco} convergence of $\Psi_{u_n} \Mto \Psi_u$
  from Assumption \ref{eq:Psi.3} implies \textsc{Mosco} convergence of the dual
  potentials, namely $\Psi^*_{u_n} \Mto \Psi^*_n$, see
  \eqref{eq:Gcvg.sw*}. In particular, this implies that for all
  $R>0$, all 
sequences $(u_n)_{n\in\mathbb{N}}\subset V$ with $u_n\rightarrow u\in
V$ and $\sup_{n\in \N} G(u_n)\leq R$, and all sequences $ (\xi_n
)_{n\in\mathbb{N}}\subset V^*$ with  $\xi_n\rightharpoonup \xi\in V^*$
we have 
\begin{align}
\label{eq:Psi*liminf}
\Psi^*_u(\xi)\leq  \liminf_{n\rightarrow\infty}\Psi^*_{u_n}(\xi_n).
\end{align}
\end{itemize}
\end{rem}

Finally, we make the following assumptions on the non-variational perturbation $B$.
\begin{enumerate}[label= (\thesection.B\alph*), leftmargin=3.2em]

\item \label{eq:B.1} \textbf{Continuity.} The map $(t,u) \mapsto
  B(t,u):[0,T]\times V \rightarrow V^*$ is continuous on sublevels of
  $G$, i.e.\ $(t_n,u_n)\to (t,u)$ in $[0,T]\ti V$ and
  $\sup_{n\in\mathbb{N}}G(u_n)\leq R$ implies $B(t_n,u_n)\to B(t,u)$
  in $V^*$.
  
\item \label{eq:B.2} \textbf{Control of $B$ by the energy.} There
  exist  $\beta>0$ and $c\in(0,1)$ such that
  \begin{align*}
    c\,\Psi^*_u\left(\frac1c B(t,u)\right)\leq \beta \big(1+\calE_t(u)\big)
    \quad \text{ for all } u\in D, \, t\in[0,T].
  \end{align*}
\end{enumerate}


\begin{rem}\label{re:Assump.B} 
  We note that Assumption \ref{eq:B.1} ensures that the
  \textsc{Nemytskij} operator associated to $B$ maps strongly
  measurable functions contained in sublevels of $G$ into strongly measurable functions, i.e., for
  all strongly measurable functions $u$ with $\sup_{t\in[0,T]}G(u(t))\leq R$, the map $t\mapsto B(t,u(t))$
  is strongly measurable.
\end{rem}

\subsection{Statement of the existence result}
\label{su:StateExistRes}

Before we state the main result, we say that $u\in \AC([0,T];V)$ is a
solution to \eqref{eq:I.1} with the initial datum $u_0\in D$ if $u$ satisfies the differential inclusion \eqref{eq:I.1} with $u(0)=u_0$.

\begin{thm}[Main existence result for $\PG=(V,\calE,\Psi,B)$]
  \label{th:MainExist}  Let the perturbed
  gradient system $(V,\calE,\Psi,B)$ satisfy the Assumptions
  \textnormal{(2.E), (2.$\Uppsi$)}, and 
  \textnormal{(2.B)}.  Then for every $u_0\in D$ there exists a solution
  $u\in \AC([0,T];V)$ to \eqref{eq:I.1} with $u(0)=u_0$ and an
  integrable function $\xi \in \rmL^1(0,T;V)$ with $\xi(t)\in \partial
  \calE_t(u(t))$ for a.a. $t\in(0,T)$ such that the following
  energy-dissipation balance holds:
\begin{align}
\label{eq:EDB}
\begin{split}
  & \calE_t(u(t))+ \int_s^t \left( \Psi_{u(r)}(u'(r)) +
    \Psi_{u(r)}^*\big(B(r,u(r))-\xi(r)\big) \right) \dd r 
\\ 
  &= \calE_s(u(s))+\int_s^t \partial_r \calE_r(u(r))\dd r +\int_s^t \langle
  B(r,u(r)),u'(r)\rangle \dd r \quad \text{for all } s,t\in [0,T].
\end{split}
\end{align}
\end{thm}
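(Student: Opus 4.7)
My plan is to implement the variational minimizing-movement scheme sketched in Section~\ref{su:VarApprox}, coupled with \textsc{De Giorgi}'s variational interpolants to obtain a discrete energy-dissipation inequality strong enough to pass to the limit. Fix a uniform partition $I_\tau$ as in \eqref{eq:II.1}, set $U_\tau^0:=u_0$, and define $U_\tau^n$ inductively as a global minimizer of $v\mapsto\Upphi(\tau,t_{n-1},U_\tau^{n-1},B(t_{n-1},U_\tau^{n-1});v)$. Existence follows by the direct method: \ref{eq:Psi.2} and \ref{eq:B.2} make $\Upphi$ coercive, \ref{eq:cond.E.2} together with Remark~\ref{re:Assump.E}$iv)$ supplies compactness of sublevels of $\calE_{t_n}$, and lower semicontinuity is immediate from \ref{eq:cond.E.1} and \ref{eq:Psi.1}. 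The variational sum rule \eqref{eq:II.8} then yields a selection $\xi_\tau^n\in\partial\calE_{t_n}(U_\tau^n)$ realizing \eqref{eq:II.2}.

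The critical step is the discrete energy-dissipation inequality. Here I would introduce \textsc{De Giorgi}'s variational interpolants $\widetilde U_\tau(r)$, $r\in(t_{n-1},t_n)$, each a global minimizer of $\Upphi(r-t_{n-1},t_{n-1},U_\tau^{n-1},B(t_{n-1},U_\tau^{n-1});\cdot)$; differentiating the optimal value in $r$ and using Lemma~\ref{le:Leg.Fen} to convert the resulting subdifferential inclusion into a Fenchel--Young equality yields, after summation, an estimate of the form
\begin{equation*}
  \calE_{t_n}(U_\tau^n) + \sum_{k=1}^n \int_{t_{k-1}}^{t_k}\!\!\bigl(\Psi_{\widetilde U_\tau}(\widetilde U_\tau') + \Psi^*_{\widetilde U_\tau}\bigl(B(t_{k-1},U_\tau^{k-1})-\widetilde\xi_\tau\bigr)\bigr)\,\dd r \;\leq\; \calE_0(u_0) + R_\tau^n,
\end{equation*}
where $R_\tau^n$ collects the power term $\int_0^{t_n}\partial_r\calE_r(\widetilde U_\tau)\,\dd r$ and the work $\int_0^{t_n}\langle B(t_{k-1},U_\tau^{k-1}),\widetilde U_\tau'\rangle\,\dd r$. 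Splitting this work by Fenchel--Young with the constant $c$ of \ref{eq:B.2}, controlling the resulting $\Psi^*(B/c)$ term by the energy through \ref{eq:B.2}, and absorbing $\partial_r\calE_r$ via \textsc{Gronwall}'s lemma and \ref{eq:cond.E.3}, I obtain uniform bounds on $\sup_{t\in[0,T]}\calE_t(U_\tau(t))$ and on the dissipation integrals $\int_0^T\Psi_{\widetilde U_\tau}(\widetilde U_\tau')\,\dd r$ and $\int_0^T\Psi^*_{\widetilde U_\tau}(\widetilde\xi_\tau)\,\dd r$.

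Compactness is then at hand: Remark~\ref{re:Assump.E}$iv)$ makes the trajectories pointwise relatively compact in $V$, while superlinearity \ref{eq:Psi.2} combined with the de la Vall\'ee--Poussin criterion upgrades the $\rmL^1$-bound on $\widetilde U_\tau'$ to equi-absolute-continuity. A refined \textsc{Arzel\`a--Ascoli} argument provides a subsequence with $U_\tau\to u$ uniformly on $[0,T]$, $\widetilde U_\tau'\rightharpoonup u'$ in $\rmL^1(0,T;V)$, and $\widetilde\xi_\tau\rightharpoonup\xi$ in a suitable weak sense; \ref{eq:B.1} together with the uniform energy bound then delivers convergence of the perturbation in the sense of Remark~\ref{re:Assump.B}. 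Passing to the liminf in the discrete inequality uses \eqref{eq:Psi*liminf} (\textsc{Mosco}-continuity of $\Psi$) to bound both dissipation terms from below, \ref{eq:cond.E.5} to conclude $\xi(t)\in\partial\calE_t(u(t))$ a.e.\ and to estimate the limit of $\partial_r\calE_r(\widetilde U_\tau)$, and \ref{eq:cond.E.3} combined with dominated convergence for the remaining power contributions.

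Combining the resulting upper bound in the energy-dissipation inequality with the chain rule \ref{eq:cond.E.4} applied in the reverse direction forces equality throughout \eqref{eq:EDB}; Lemma~\ref{le:Leg.Fen} then converts this equality pointwise into $B(t,u(t))-\xi(t)\in\partial\Psi_{u(t)}(u'(t))$ for a.a.\ $t\in(0,T)$, which is precisely \eqref{eq:I.1}. I expect the main obstacle to be the careful bookkeeping with the variational interpolants, ensuring that the selection $\widetilde\xi_\tau$ appearing in the $\Psi^*$ term is the \emph{same} one whose weak limit is identified with an element of $\partial\calE_t(u(t))$, and that the time-shift $t_{n-1}\leftrightarrow t_n$ hidden in the state-dependence of $\Psi^*_{U_\tau^{n-1}}$ versus the argument $B(t_{n-1},U_\tau^{n-1})$ causes no defect upon summation---this is precisely where the \textsc{Mosco}-continuity \ref{eq:Psi.3} of the state-dependent dissipation is indispensable.
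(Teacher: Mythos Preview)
Your overall plan matches the paper's strategy: variational minimizing-movement scheme, \textsc{De Giorgi} interpolants, a discrete energy-dissipation inequality, \textsc{Gronwall} for a priori bounds, \textsc{Arzel\`a--Ascoli} plus \textsc{de la Vall\'ee--Poussin} for compactness, then a limit passage and the chain rule to close. The architecture is right.

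There is, however, a genuine gap in your limit passage. You propose to go from $\widetilde\xi_\tau\rightharpoonup\xi$ weakly in $\rmL^1(0,T;V^*)$ directly to $\xi(t)\in\partial\calE_t(u(t))$ a.e.\ via \ref{eq:cond.E.5}. But \ref{eq:cond.E.5} is a \emph{pointwise} strong--weak closedness statement in $V\times V^*$; it does not apply to $\rmL^1$-weak limits, and since $\calE_t$ is nonconvex the \textsc{Fr\'echet} subdifferential $\partial\calE_t(u(t))$ need not be convex, so Mazur-type averaging arguments fail as well. The same issue blocks the liminf for the dissipation integrals: \eqref{eq:Psi*liminf} is again a pointwise statement, and \textsc{Fatou} alone is not enough without pointwise weak convergence of $\widetilde\xi_\tau(t)$. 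The paper resolves all of this by passing to a parametrized \textsc{Young} measure $\mathbold\mu=(\mu_t)_{t\in[0,T]}$ generated by $(\widehat U'_\tau,\widetilde\xi_\tau,\partial_t\calE_t(\widetilde U_\tau))$: the support of $\mu_t$ is contained in the set of pointwise weak accumulation points, so \ref{eq:cond.E.5} can be applied on $\mathrm{sppt}(\mu_t)$; \textsc{Balder}'s lower-semicontinuity result (Theorem~\ref{th:A1}) then yields the liminf for both $\Psi$ and $\Psi^*$ integrals; a \textsc{Young}-measure chain rule (Lemma~\ref{le:A3}) replaces your direct appeal to \ref{eq:cond.E.4}; and finally a measurable selection (Lemma~\ref{le:A4}) extracts the specific $\xi(t)\in\partial\calE_t(u(t))$ minimizing $\Psi^*_{u(t)}(B(t,u(t))-\,\cdot\,)$ that enters \eqref{eq:EDB}. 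Your closing remark about identifying the weak limit of $\widetilde\xi_\tau$ with an element of $\partial\calE_t(u(t))$ is exactly the difficulty---\textsc{Young} measures are the tool that fixes it, and you should name them.

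A smaller point: in your discrete inequality you write $\Psi_{\widetilde U_\tau}(\widetilde U_\tau')$, but the \textsc{De Giorgi} interpolant $\widetilde U_\tau$ is not differentiable. The correct arrangement (cf.\ \eqref{eq: DUEE}) carries the state $\underline U_\tau$ (left-constant) in the subscript and the velocity $\widehat U_\tau'$ (piecewise affine) in the argument; $\widetilde U_\tau$ enters only through $\widetilde\xi_\tau$ in the $\Psi^*$-term and through the power $\partial_r\calE_r(\widetilde U_\tau)$.
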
 

 It is clear that every solution of \eqref{eq:EDB} is already a
solution for the perturbed gradient system $\PG=(V,\calE,\Psi,B)$,
since by the chain rule can and the \textsc{Legendre-Fenchel} theory we easily
recover \eqref{eq:I.1}, see e.g.\ \cite{AmGiSa05GFMS,RosSav06GFNC}. 

Our proof will be done by time discretization and solving variational
problems for each time interval $(t_n,t_{n+1}]$. To obtain a useful
discrete counterpart of the energy-dissipation balance proper we
employ \textsc{De Giorgi}'s variational interpolant, see
\cite[Lem.\,2.5]{Ambr95MM} or
\cite[Sec.\,4.2]{RosSav06GFNC}. We then follow the ideas in
\cite{MiRoSa13NADN}, but need to generalize to the case of a
nontrivial perturbation $B$, which only satisfies our mild assumptions
\ref{eq:B.1} and \ref{eq:B.2}. The proof will be completed in Section
\ref{su:Proof}.

\subsection{Estimates on the {\mdseries\scshape Moreau-Yosida} regularization}
\label{su:MoreauYosida}

In order to prove the existence result, we need to show some
properties of the $\Psi$-\textsc{Moreau-Yosida} regularization
\begin{align*}
\Phi_{r,t}(w;u):= \inf_{v\in V} \Upphi(r,t,u,w;v)
\end{align*} for $r>0, t\in [0,T)$ with $r+t\in [0,T]$ and $u\in D$ as
well as $w\in V^*$. Therefore, we have to ensure that the resolvent
set $J_{r,t}(w;u):= \argmin_{v\in V} \Upphi(r,t,u,w;v)$ is not empty.

\begin{lem}\label{le:Exist.Min} Let the perturbed
  gradient system $(V,\calE,\Psi,B)$ satisfy the Assumptions
  \textnormal{\ref{eq:cond.E.1}-\ref{eq:cond.E.2}} and
  \textnormal{\ref{eq:Psi.1}}. Then for all $r>0$, $t\in [0,T)$ with
  $t+r\leq T$, $u\in D $, and $w\in V^*$, the resolvent set
  $J_{r,t}(w;u)$ is nonempty.
\end{lem}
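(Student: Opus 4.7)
The plan is to apply the direct method of the calculus of variations to the functional $v\mapsto\Upphi(r,t,u,w;v)$. Concretely, I would show that (a) $\Upphi$ is proper, (b) minimizing sequences are relatively compact in the strong topology of $V$, and (c) $\Upphi$ is sequentially strongly lower semicontinuous; together these yield a minimizer.

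For properness, the choice $v=u$ gives $\Upphi(r,t,u,w;u)=r\Psi_u(0)+\calE_{t+r}(u)-\langle w,u\rangle=\calE_{t+r}(u)-\langle w,u\rangle<+\infty$ since $u\in D$ and $\Psi_u(0)=0$. This already shows $\inf_V \Upphi<+\infty$. For a lower bound, I invoke two facts: by Remark~\ref{re:Assump.E}(ii) the energy is bounded from below by a constant $S\in\mathbb R$, and by Assumption~\ref{eq:Psi.2} applied with $R:=G(u)$ the function $\Psi_u$ is superlinear, so for $\mu:=\|w\|_*+1$ there exists $C_\mu\geq 0$ with $\Psi_u(z)\geq \mu\|z\|-C_\mu$ for every $z\in V$. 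Applied to $z=(v-u)/r$ this gives
\begin{align*}
r\Psi_u\!\left(\tfrac{v-u}{r}\right)-\langle w,v-u\rangle
\ \geq\ (\|w\|_*+1)\|v-u\|-rC_\mu-\|w\|_*\|v-u\|
\ =\ \|v-u\|-rC_\mu,
\end{align*}
so that $\Upphi(r,t,u,w;v)\geq \|v-u\|-rC_\mu-\langle w,u\rangle+S$. This proves coercivity in the sense that sublevel sets of $\Upphi$ are norm-bounded.

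To upgrade this to compactness I exploit the energy term. If $(v_k)\subset V$ is a minimizing sequence with $\Upphi(r,t,u,w;v_k)\leq M$, then using $\Psi_u\geq 0$ we obtain
\begin{align*}
\calE_{t+r}(v_k)\ \leq\ M+\langle w,v_k\rangle\ \leq\ M+\|w\|_*\|v_k\|,
\end{align*}
and the previous step already bounds $\|v_k\|$. Hence $(v_k)$ lies in a sublevel set of $\calE_{t+r}$, which by Assumption~\ref{eq:cond.E.2} together with Remark~\ref{re:Assump.E}(iv) is relatively compact in $V$. Extract a subsequence (not relabeled) with $v_k\to v^*$ strongly in $V$.

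Finally, for lower semicontinuity I combine three pieces: the map $v\mapsto \langle w,v\rangle$ is strongly continuous, $\calE_{t+r}$ is strongly lower semicontinuous by \ref{eq:cond.E.1}, and $\Psi_u$ is strongly lower semicontinuous by \ref{eq:Psi.1}. Since $(v_k-u)/r\to (v^*-u)/r$ strongly, passing to the limit yields
\begin{align*}
\Upphi(r,t,u,w;v^*)\ \leq\ \liminf_{k\to\infty}\Upphi(r,t,u,w;v_k)\ =\ \inf_{v\in V}\Upphi(r,t,u,w;v),
\end{align*}
so $v^*\in J_{r,t}(w;u)$, proving nonemptiness. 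The step I expect to be most delicate is the coercivity argument, because the linear perturbation $-\langle w,v\rangle$ must be absorbed by the dissipation term alone (the energy is only known to be bounded below, not coercive); this is exactly what the pointwise superlinearity of $\Psi_u$ provided by \ref{eq:Psi.2} buys us.
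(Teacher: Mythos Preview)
Your argument is correct and follows the same direct-method skeleton as the paper: properness via $v=u$, minimizing sequence lands in an energy sublevel, compactness of sublevels, strong lower semicontinuity. The one substantive difference is how the linear perturbation $-\langle w,v\rangle$ is absorbed. You invoke the superlinearity of $\Psi_u$ (Assumption~\ref{eq:Psi.2}) to first obtain \emph{norm}-boundedness of a minimizing sequence and only then deduce that it sits in an energy sublevel. The paper instead applies the \textsc{Fenchel--Young} inequality once,
\[
r\Psi_u\!\left(\tfrac{v-u}{r}\right)\;\geq\;\langle w,v{-}u\rangle - r\Psi_u^*(w),
\]
which gives directly $\calE_{t+r}(v)\leq \Upphi(r,t,u,w;v)+r\Psi_u^*(w)+\langle w,u\rangle$, so sublevel containment is immediate and no intermediate norm bound is needed. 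The paper's route is one step shorter; yours buys a quantitative norm bound that is not used again. Note also that the lemma as stated only assumes \ref{eq:Psi.1}, so your use of \ref{eq:Psi.2} formally goes beyond the hypotheses; however, the paper's own argument needs $\Psi_u^*(w)<\infty$, which (cf.\ Remark~\ref{re:Assump.Psi}\,$i)$) likewise relies on \ref{eq:Psi.2}, so this mild inconsistency is already present in the paper.
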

\begin{proof} Let $u\in D, w\in V^*$ and $r>0, t\in [0,T)$ with
  $r+t\in [0,T]$ be given.  First of all, we see with the
  \textsc{Fenchel-Young} inequality and with \eqref{eq:II.7} that
\begin{align}
\label{eq:II.11}
  \Upphi(r,t,u,w;v)&= r\Psi_{u}\left(\frac{v-u}{r}\right)+\calE_{t+r}(v)-\langle w,v\rangle \notag \\
  &\geq -r\Psi_u^*(w)+\calE_{t+r}(v)-\langle w,u\rangle\\
  &\geq -r\Psi_u^*(w) +S -\langle w,u\rangle \notag.
\end{align} 
This implies $\Phi_{r,t}(w;u)>-\infty$. On the other hand, we observe that
\begin{align}
\label{eq:II.12}
  \inf_{v\in V}\Big\lbrace
  r\Psi_{u}\left(\frac{v-u}{r}\right)+\calE_{t+r}(v)-\langle
  w,v\rangle \Big\rbrace \leq \calE_{t+r}(u)-\langle w,u\rangle,
\end{align} 
so that we also have $\Phi_{r,t}(w;u)<+\infty$.  Let now $(v_n)_{n \in
  \mathbb{N}}\subset V$ be a minimizing sequence for
$\Upphi(r,t,u,w;\cdot)$. From \eqref{eq:II.11}, we deduce with \eqref{eq:II.5} that
$(v_n)_{n \in \mathbb{N}}\subset V$ is contained in a sublevel set of
$\calE_t$. Thus, by Assumption \ref{eq:cond.E.2} and Remark \ref{re:Assump.E} $iv)$ there exists
a subsequence (not relabeled) which converges strongly in $V$ towards a
limit $ v\in V$. Together with the lower semicontinuity of the map
$v\mapsto \Upphi(r,t,u,w;v)$, we have
\begin{align*}
  \Upphi(r,t,u,w;v)\leq \liminf_{n\rightarrow
    \infty}\Upphi(r,t,u,w;v_n)=\inf_{\tilde{v}\in V}
  \Upphi(r,t,u,w;\tilde{v})
\end{align*} and therefore $v\in J_{r,t}(w;u)\neq \emptyset$ from what $v\in D$ follows.
\end{proof}
Lemma \ref{le:Exist.Min} is important for justifying the existence of a sequence of approximate values
$(U_\tau^n)_{n=1}^N\subset D$ that complies with
\begin{align}
\label{eq:II.13}
  U_\tau^n \in J_{\tau,t_{n-1}}(B(t_{n-1},U_\tau^{n-1}),U_\tau^{n-1})
  \quad \text{for all } n=1, \cdots, N,
\end{align} 
in order to construct discrete solutions of \eqref{eq:II.2}, where
$U_\tau^0:=u_0$ and the time $t\in [0,T)$ as well as the time step
$\tau\in (0,T-t)$ are fixed. 

The following lemma is crucial in order to proof the existence result
and in particular to derive a priori estimates for the interpolation
functions we define later on. The result is an adaptation to the
case $w\neq 0$ of \cite[Lem.\,4.2]{RosSav06GFNC} and
\cite[Lem.\,6.1]{MiRoSa13NADN}.

\begin{lem}\label{le:Main.Lem} Let the perturbed
  gradient system $(V,\calE,\Psi,B)$ satisfy the Assumptions
  \textnormal{(2.E), (2.$\Uppsi$)}, and 
  \textnormal{(2.B)}. Then for every $\te\in [0,T), u\in D$ and 
  $w\in V^*$ there exists a measurable selection $r\mapsto u_r:
  (0,T-\te)\rightarrow J_{r,\te}(w;u)$ such that
\begin{align}
\label{eq:II.14}
w\in \partial \Psi_u \left(\frac{u_r-u}{r}\right) + \partial \calE_{\te+r}(u) 
\end{align} and there exists a constant $\widetilde{C}>0$ such that
\begin{align}
\label{eq:II.15}
G(u_r)\leq \widetilde{C} ( G(u)+r\Psi_u^*( w))\quad \text{for all } r\in(0,T-\te)
\end{align} Furthermore, there holds
\begin{align}
\label{eq:II.16}
\lim_{r\rightarrow 0}\sup_{u_r\in J_{r,\te}(w;u)}\Vert u_r-u\Vert=0 \quad \text{and}\quad \lim_{r\rightarrow 0} \Phi_{r,\te}(w;u)= \calE_\te(u)-\langle w,u\rangle
\end{align}
 for all $\te\in [0,T), u\in D$ and $w\in V^*$. Finally the map $r\mapsto \Phi_{r,\te}(w;u)$ is almost everywhere differentiable in $(0,T-\te)$ and for every $r_0\in (0,T-\te)$ and every measurable selection $r\mapsto u_r: (0,r_0)\rightarrow J_{r,\te}(w;u)$ there exists a measurable selection $r\mapsto \xi_r: (0,T-\te) \rightarrow  \partial \calE_{\te+r}(u)$ with $w-\xi_r\in \partial \Psi_u \left(\frac{u_r-u}{r}\right)$ such that
\begin{align}
\label{eq:II.17}
\begin{split}
E_{\te+r_0}(u_{r_0}) + r_0\Psi_u\left(\frac{u_{r_0}-u}{r_0}\right)
+&\int_0^{r_0} \Psi_u^*(w-\xi_r)\dd r \\ 
&\leq \calE_t(u)+\int_0^{r_0}\partial_r \calE_{\te+r}(u_r)\dd r +\langle w,u_{r_0}-u\rangle.
\end{split}
\end{align} 
\end{lem}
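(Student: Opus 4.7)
The strategy is to exploit the minimization property defining $J_{r,\te}(w;u)$ together with the variational sum rule from Remark~\ref{re:Assump.E}~iii) and the Fenchel--Young relations of Lemma~\ref{le:Leg.Fen}. The argument splits naturally into four blocks: (a) the existence and pointwise variational characterization of $u_r$; (b) the a priori bound \eqref{eq:II.15}; (c) the asymptotic behaviour as $r\to 0^+$ in \eqref{eq:II.16}; and (d) the differentiability of $r\mapsto\Phi_{r,\te}(w;u)$ and the integral inequality \eqref{eq:II.17}.

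\textbf{Block (a): inclusion and measurable selection.} Lemma~\ref{le:Exist.Min} ensures that $J_{r,\te}(w;u)\neq\emptyset$ for each $r\in(0,T-\te)$, and Remark~\ref{re:Assump.E}~iii) applied to $\Upphi(r,\te,u,w;\cdot)$ immediately yields \eqref{eq:II.14}: there exists $\xi_r\in\partial\calE_{\te+r}(u_r)$ with $w-\xi_r\in\partial\Psi_u((u_r-u)/r)$. Measurability of $r\mapsto u_r$ on $(0,T-\te)$ follows from the strong-weak closedness in \ref{eq:cond.E.5} (so that $\mathrm{graph}(J_{\cdot,\te}(w;u))$ is measurable with closed sections) together with a classical selection theorem of Kuratowski--Ryll-Nardzewski. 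The corresponding $\xi_r$ is then selected from the closed convex set $\{\eta\in\partial\calE_{\te+r}(u_r):w-\eta\in\partial\Psi_u((u_r-u)/r)\}$ in the same manner.

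\textbf{Block (b): the energy bound \eqref{eq:II.15}.} Testing the minimizing inequality $\Upphi(r,\te,u,w;u_r)\le\Upphi(r,\te,u,w;u)$ and applying the Fenchel--Young inequality $\langle w,u_r-u\rangle\le r\Psi_u((u_r-u)/r)+r\Psi_u^*(w)$ gives, after cancellation,
\begin{equation*}
\calE_{\te+r}(u_r)\le\calE_{\te+r}(u)+r\Psi_u^*(w).
\end{equation*}
Combining this with $\calE_{\te+r}(u)\le G(u)$ and with the chain of inequalities \eqref{eq:II.6} from Remark~\ref{re:Assump.E}~i), $G(u_r)\le C_1\calE_{\te+r}(u_r)$, produces \eqref{eq:II.15} with $\widetilde C=C_1$.

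\textbf{Block (c): limits as $r\to 0$.} From the same minimizing inequality, now keeping the dissipation term, we obtain
\begin{equation*}
r\Psi_u\!\left(\tfrac{u_r-u}{r}\right)\le\calE_{\te+r}(u)-S+r\|w\|_*\,\|(u_r-u)/r\|.
\end{equation*}
The superlinearity \ref{eq:Psi.2} then forces $\|(u_r-u)/r\|$ to satisfy, whenever it exceeds the radius associated to a fixed $M>\|w\|_*$, the bound $\|u_r-u\|\le(G(u)-S)/(M-\|w\|_*)$. Letting first $r\to 0$ and then $M\to\infty$ delivers the first convergence in \eqref{eq:II.16} uniformly in the choice of $u_r\in J_{r,\te}(w;u)$. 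For the second one, $\Phi_{r,\te}(w;u)\le\calE_{\te+r}(u)-\langle w,u\rangle$ gives the $\limsup$ by continuity of $t\mapsto\calE_t(u)$ (Assumption \ref{eq:cond.E.3}), while lower semicontinuity of $\calE_\te$ combined with \eqref{eq:II.5} and $u_r\to u$ yields the matching $\liminf$.

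\textbf{Block (d): differentiability and the energy inequality.} Set $g(r):=\Phi_{r,\te}(w;u)$. For $0<r<s<T-\te$ the minimality of $u_s$ at scale $s$ and of $u_r$ at scale $r$ give the two-sided chordal estimate
\begin{equation*}
(s-r)\Big[\Psi_u(v_s)+\partial_r\calE_{\te+r}(u_r)\Big]\le g(s)-g(r)\le (s-r)\Big[\Psi_u(v_r)+\partial_r\calE_{\te+s}(u_s)\Big]
\end{equation*}
modulo error terms that vanish by \ref{eq:cond.E.3}, where $v_\rho=(u_\rho-u)/\rho$. This shows that $g$ is locally Lipschitz, hence almost everywhere differentiable, with $g'(r)=\Psi_u(v_r)-\langle w-\xi_r,v_r\rangle+\partial_r\calE_{\te+r}(u_r)$ at differentiability points, using the inclusion from Block~(a). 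The key identity of Lemma~\ref{le:Leg.Fen}~iii) applied to $w-\xi_r\in\partial\Psi_u(v_r)$ rewrites the first two terms as $-\Psi_u^*(w-\xi_r)$, so that $g'(r)=-\Psi_u^*(w-\xi_r)+\partial_r\calE_{\te+r}(u_r)$ a.e.\ in $(0,r_0)$. Integrating from $0$ to $r_0$ and invoking the limit from Block~(c) to identify $g(0^+)=\calE_\te(u)-\langle w,u\rangle$ yields \eqref{eq:II.17}.

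\textbf{Expected main obstacle.} The delicate point is Block~(d): turning the envelope-type computation for $g(s)-g(r)$ into an a.e.\ derivative formula in the nonsmooth, state-dependent setting, with $\partial_r\calE_{\te+r}$ integrable along the measurable curve $r\mapsto u_r$. This requires combining the a priori bound \eqref{eq:II.15} with \ref{eq:cond.E.3} to dominate $\partial_r\calE_{\te+r}(u_r)$ uniformly, and using \ref{eq:cond.E.5} to pass to the limit in the chordal inequalities, so that the Legendre--Fenchel identity can legitimately be applied to recover $\Psi_u^*(w-\xi_r)$.
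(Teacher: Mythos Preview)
Your Blocks~(a)--(c) are essentially correct and track the paper's argument closely. The difficulty is Block~(d), where there is a genuine gap.

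\textbf{The Lipschitz claim is not available.} You assert a two-sided chordal estimate and conclude that $g(r)=\Phi_{r,\te}(w;u)$ is locally Lipschitz. The upper side (comparing with $u_r$ at scale $s>r$) does yield $g(s)-g(r)\le \calE_{\te+s}(u_r)-\calE_{\te+r}(u_r)\le C(s{-}r)$ via \ref{eq:cond.E.3} and \eqref{eq:II.15}. But the lower side, obtained by testing with $u_s$ at scale $r<s$, produces the term $s\Psi_u(v_s)-r\Psi_u\big((s/r)v_s\big)$; convexity and $\Psi_u(0)=0$ only give $\Psi_u((s/r)v_s)\ge (s/r)\Psi_u(v_s)$, so this difference is $\le 0$ with no uniform lower bound unless one imposes an \emph{upper} growth condition on $\Psi_u$, which is not part of \textnormal{(2.$\Uppsi$)}. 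Hence $g$ need not be locally Lipschitz, the envelope-theorem equality $g'(r)=-\Psi_u^*(w-\xi_r)+\partial_r\calE_{\te+r}(u_r)$ is unjustified, and you cannot simply integrate $g'$ by the fundamental theorem of calculus.

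\textbf{What actually works.} Only the one-sided estimate is used: one shows (cf.\ \eqref{eq:II.19}--\eqref{eq:II.21}) that $r\mapsto g(r)-Cr$ is non-increasing for a suitable constant $C$. Monotone functions are a.e.\ differentiable, and at each differentiability point one passes to the limit in the upper difference quotient by taking $w_n^r\in\partial\Psi_u\big((u_r{-}u)/(r{+}h_n)\big)$, using boundedness of $\partial\Psi_u$ (Remark~\ref{re:Assump.Psi}\,$i)$), the strong--weak closedness of the graph of $\partial\Psi_u$, and weak lower semicontinuity of $\Psi_u^*$. This yields only the \emph{inequality}
\[
g'(r)+\Psi_u^*(w-\xi_r)\;\le\;\partial_t\calE_{\te+r}(u_r)\quad\text{for a.a.\ }r.
\]
The identification of $\Psi_u^*$ at the weak limit with $\Psi_u^*(w-\xi_r)$ relies on the second clause of \ref{eq:Psi.1} (that $\Psi_u^*$ is constant on each fibre $\partial\Psi_u(v)$), a structural assumption your sketch does not invoke. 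Finally, since $g-Cr$ is non-increasing one has $g(r_0)-g(0^+)\le\int_0^{r_0}g'(r)\,\mathrm{d}r$ even without absolute continuity; combining this with the displayed inequality and the limit $g(0^+)=\calE_\te(u)-\langle w,u\rangle$ from Block~(c) gives \eqref{eq:II.17}.

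In short: replace ``locally Lipschitz, hence $g'=\ldots$'' by ``monotone up to a linear correction, hence a.e.\ differentiable with $g'\le\ldots$ and $g(r_0)-g(0^+)\le\int_0^{r_0} g'$'', and keep track of where the second part of \ref{eq:Psi.1} enters.
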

\begin{proof}
  Let $\te\in [0,T), u\in D$ and $w\in V^*$ be given. The
  non-emptiness of the resolvent set $J_{r,\te}(w;u)$ for all
  $r\in(0,T-\te)$ is guaranteed by Lemma \ref{le:Main.Lem}. The existence of a
  measurable selection $r \mapsto u_r: (0,T-\te)\rightarrow
  J_{r,\te}(w;u)$ is provided by \textsc{Castaing} and
  \textsc{Valadier} \cite[Cor.\,III.3, Prop.\,III.4, Thm.\,III.6,
  pp.\,63]{CasVal77CAMM}. The inclusion \eqref{eq:II.14} follows then by the
  variational sum rule \eqref{eq:II.8}. Further, we obtain from \eqref{eq:II.11} for
  $v=u_r, r\in(0,T-\te)$ and \eqref{eq:II.12} the inequality
\begin{align*}
\calE_{\te+r}(u_r)\leq \calE_{\te+r}(u)+r\Psi_u^*(w),
\end{align*} so that together with the estimate \eqref{eq:II.6} it follows the inequality \eqref{eq:II.15} with $\widetilde{C}=C_1$, where $C_1>0$ is the constant in \eqref{eq:II.6}. In order to show the convergences in \eqref{eq:II.16}, we note that Assumption \ref{eq:Psi.2} implies: For all $R>0$ and $\gamma>0$, there exists $K>0$ such that 
\begin{align*}
\Psi_u(v)\geq \gamma\Vert v\Vert
\end{align*} for all $u\in D$ with $G(u)\leq R$ and all $v\in V$ with $\Vert v\Vert\leq K$.
 Based on this fact, we infer
\begin{align}
\label{eq:II.18}
\gamma \left \Vert \frac{u_r-u}{r} \right\Vert &\leq  \Psi_u\left(\frac{u_r-u}{r}\right)+\gamma K \quad \text{ for every }r>0.
\end{align} Together with \eqref{eq:II.7}, \eqref{eq:II.11} and \eqref{eq:II.12}, we obtain
\begin{align*}
\gamma \Vert u_r-u \Vert &\leq \langle w, u_r-u\rangle +\calE_{\te+r}(u)-\calE_{\te+r}(u_r) +r\gamma K\\
&\leq \Vert w \Vert \Vert u_r-u\Vert +\calE_{\te+r}(u)-S+r\gamma K.
\end{align*} 
This implies the estimate
\begin{align*}
(\gamma-\Vert w\Vert_*) \Vert u_r{-} u \Vert &\leq
\calE_{\te+r}(u)-S+r\gamma K
\leq \ee^{CT}\calE_0(u)-S+r\gamma K 
\end{align*} for all $\gamma>0, r\in (0,T-\te)$ and $u_r\in J_{r,\te}(w;u)$, where we used again \eqref{eq:II.5}. By taking the supremum over all $u_r\in J_{r,\te}(w;u)$ and taking the limes superior as $r\rightarrow 0$, we finally obtain
\begin{align*}
(\gamma-\Vert w\Vert_*) \limsup_{r\rightarrow 0} \sup_{u_r\in J_{r,\te}(w;u)} \Vert u_r-u \Vert \leq  \ee^{CT} \calE_0(u)-S\quad \text{for every }\gamma>\Vert w\Vert_*.
\end{align*} By choosing $\gamma>0$ sufficiently large, we conclude 
\begin{align*}
\limsup_{r\rightarrow 0} \sup_{u_r\in J_{r,\te}(w;u)} \Vert u_r-u \Vert=0,
\end{align*} 
which shows the first convergence in \eqref{eq:II.16}. We now use 
the lower semicontinuity and the time continuity of the energy
functional, the estimate
\begin{align*}
&\calE_{\te+r}(u_r)-\langle w,u_r\rangle \leq \Phi_{r,\te}(w;u)\\
&\quad =  r\Psi_{u}\left(\frac{u_r-u}{r}\right)+\calE_{\te+r}(u_r)-\langle
w,u_r\rangle 
\leq \calE_{\te+r}(u)-\langle w,u\rangle,
\end{align*} 
and the fact that $\liminf_{r\rightarrow
  0}\calE_{\te+r}(u_r)=\liminf_{r\rightarrow 0}\calE_{\te}(u_r)$,
which follows from \eqref{eq:II.5}. Hence, the second convergence in
\eqref{eq:II.16} follows from the estimate 
\begin{align*}
&\hspace*{-1em}\calE_\te(u)-\langle w, u\rangle\leq \liminf_{r\rightarrow 0} \left( \calE_{\te+r}(u_r)-\langle w, u_r\rangle\right) \\
&\leq \liminf_{r\rightarrow 0} \Phi_{r,t}(w;u) 
\ \leq \ \limsup_{r\rightarrow 0}\Phi_{r,\te}(w;u)\\
&\leq \limsup_{r\rightarrow 0} \left( \calE_{\te+r}(u)-\langle
  w,u\rangle \right)
\ = \ \calE_\te(u)-\langle w,u\rangle.
\end{align*} 

In order to show the last assertion of this lemma, let $u_{r_i}\in J_{r,\te}(w;u), i=1,2,$ with $0<r_1<r_2<T-\te$. Then we have
\begin{align}
&\Phi_{r_2,\te}(w;u)-\Phi_{r_1,\te}(w;u) - \left( \calE_{\te+r_2}(u_{r_1}) - \calE_{\te+r_1}(u_{r_1})\right) \notag \\
&\leq r_2\Psi_u\left(\frac{u_{r_1}-u}{r_2}\right)-r_1\Psi_u\left(\frac{u_{r_1}-u}{r_1}\right)\notag \\
&= (r_2-r_1)\Psi_u \left(\frac{u_{r_1}-u}{r_2} \right)+r_1 \left( \Psi_u \left(\frac{u_{r_1}-u}{r_2}\right)-\Psi_u \left(\frac{u_{r_1}-u}{r_1} \right) \right) \notag\\
\label{eq:II.19}
&\leq (r_2-r_1)\left( \Psi_u \left(\frac{u_{r_1}-u}{r_2} \right)-\left\langle w_2^1,\frac{u_{r_1}-u}{r_2} \right\rangle \right)\\
\label{eq:II.20}
&=-(r_2-r_1)\Psi_u^*(w_2^1)\leq 0,
\end{align} 
where we used in \eqref{eq:II.19} the fact from Remark
\ref{re:Assump.Psi} $i)$ which states $w_2^1\in \partial \Psi_u \left
  (\frac{u_{r_1}-u}{r_2}\right)\neq \emptyset$, in \eqref{eq:II.20}
the statement of Lemma \ref{le:Leg.Fen} and the last inequality the
fact that by the \textsc{Fenchel-Young} inequality we have
$\Psi_u^*(w)\geq 0$ for all $w\in V^*$. Further, we deduce with the
aid of \ref{eq:cond.E.3}, \eqref{eq:II.6} and the already proven
inequality \eqref{eq:II.15} that
\begin{align}
\label{eq:II.21}
&\Phi_{r_2,\te}(w;u)\leq \Phi_{r_1,\te}(w;u) + \left( \calE_{\te+r_2}(u_{r_1}) - \calE_{\te+r_1}(u_{r_1})\right) \notag \\
&= \Phi_{r_1,\te}(w;u) - \int_{r_1}^{r_2} \partial_r \calE_{\te+r}(u_{r_1})\dd r  \notag\\
&\leq \Phi_{r_1,\te}(w;u) +(r_2-r_1) CC_1 G(u_{r_1}) \notag\\
&\leq \Phi_{r_1,\te}(w;u) +(r_2-r_1) CC_1 ( G(u)+r_1\Psi^*(w)) \notag\\
&\leq \Phi_{r_1,\te}(w;u) +(r_2-r_1) CC_1 ( G(u)+T\Psi^*(w)).
\end{align} We conclude that the map $r\mapsto \Phi_{r,\te}(w;u) -r
CC_1 ( G(u)+T\Psi^*(w))$ is non-increasing on $(0,T-\te)$ and
therefore as a real-valued function almost everywhere
differentiable. Since the map $r \mapsto \Phi_{r,\te}(w;u)$ is a
linear perturbation of a monotone function, it is also almost
everywhere differentiable in $(0,T-\te)$. Thus there exists a
negligible set $\mathscr{N}\subset (0,T-\te)$, such that the map
$r\mapsto \Phi_{r,\te}(w;u)$ is differentiable on $(0,T-\te)\backslash
\mathscr{N}$. We remark that the negligible set depends on $u$ and
$w$, that is $\mathscr{N}=\mathscr{N}_{u,w}$. Now, to conclude, we
want to use the inequality \eqref{eq:II.20}. For this let
$r\in(0,T-\te)\backslash \mathscr{N}$ be fixed. Additionally let
$(h_n)_{n\in \mathbb{N}}\in \mathbb{R}^{>0}$ be a sequence which
converges from above towards zero and whose elements are sufficiently
small. Let also the sequence $(w^r_n)_{n\in \mathbb{N}}\subset V^*$ be
given by $w^r_n\in \partial \Psi_u \left
  (\frac{u_{r}-u}{r+h_n}\right)$ for all $n\in \mathbb{N}$. The
boundedness of the operator $\partial\Psi_u$ according to Remark
\ref{re:Assump.Psi} $i)$ implies that the sequence $(w^r_n)_{n\in
  \mathbb{N}}\subset V^*$ is bounded in $V^*$. Thus there exists a
subsequence, labeled as before, and an element $w_r\in V^* $ such that
$w^r_n\rightharpoonup w_r$ weakly in $V^*$. From the strong-weak
closedness of the graph of $\partial \Psi_u$ in $V\times V^*$ it
follows $w_r \in \partial \Psi_u \left
  (\frac{u_{r}-u}{r}\right)$. Since the conjugate $\Psi_u^*$ is convex
and lower semicontinuous, it is also weakly lower semicontinuous. Then
we find with Lemma \ref{le:Leg.Fen} and the continuity of $\Psi_u$
that
\begin{align*}
  \Psi_u^*(w_r)&\leq \liminf_{n\rightarrow \infty} \Psi_u^*(w^r_n)
  \leq\limsup_{n\rightarrow \infty} \Psi_u^*(w^r_n)\\
  &=  \limsup_{n\rightarrow \infty} \left( \left\langle w_n^r,\frac{u_{r}-u}{r+h_n} \right\rangle- \Psi_u \left(\frac{u_{r}-u}{r+h_n} \right) \right)\\
  &= \left\langle w_r,\frac{u_{r}-u}{r} \right\rangle- \Psi_u
  \left(\frac{u_{r}-u}{r} \right)=\Psi_u^*(w_r)
\end{align*} and thus $\lim_{n\rightarrow \infty}
\Psi_u^*(w^r_n)=\Psi_u^*(w_r)$. Due to the inclusion \eqref{eq:II.14} there
exist $\xi_r\in \partial \calE_{\te+r}(u)$ such that $w-\xi_r \in \partial
\Psi_u \left(\frac{u_{r}-u}{r} \right)$. By \textsc{Aubin} and
\textsc{Frankowska} \cite[Thm.\,8.2.9, p.\,315]{AubFra90SVA}, the
selection $r\mapsto \xi_r: (0,T-\te)\rightarrow \partial \calE_{\te+r}(u)$
can be chosen to be measurable. Further, from Assumption \ref{eq:Psi.1} we
get $\Psi_u^*(w_r)= \Psi_u^*(w-\xi_r)$. By the differentiability of
the map $r \mapsto\Phi_{r,\te}(w;u)$ in $r$, we obtain with \eqref{eq:II.20}
\begin{align}
 & \frac{\rmd}{\rmd r} \Phi_{r,\te}(w;u)\vert_{r=r}+ \Psi_u^*(w-\xi_r)
   \ = \ \lim_{n\rightarrow \infty}\left( \frac{\Phi_{r+h_n,\te}(w;u)- 
  \Phi_{r,\te}(w;u)}{h_n} + \Psi_u^*(w_n^r) \right)\notag \\
\label{eq:II.22}
  &\leq \liminf_{n\rightarrow \infty}	\left( 
   \frac{ E_{\te+r+h_n}(u_{r}) -
    \calE_{\te+r}(u_{r})}{h_n}\right)
   \ = \ \partial_t E_{\te +r}(u_r) \quad \text{for a.a. }r\in(0,T{-}\te),
\end{align} 
where we also used the fact that the map $t \mapsto \calE_t$ is
differentiable. The claim finally follows by integrating \eqref{eq:II.22} from
$r=0$ to $r=r_0$ and by using \eqref{eq:II.16}.
\end{proof}

\subsection{Time discretization and discrete energy-dissipation estimate}
\label{su:TimeDiscret}

With the help of the preceding lemma, we derive in the forthcoming
result a priori estimates for the approximate solutions, more
precisely for both the piecewise constant interpolation functions
$\overline{U}_\tau $ and $\underline{U}_\tau$, and for the piecewise
linear interpolation function $\widehat{U}_\tau$ as well as for the
so-called \textsc{De Giorgi} interpolation function
$\widetilde{U}_\tau$. In order to define the interpolation functions,
let the initial value $u_0\in D$ and the time step $\tau>0$ be
fixed. Further let $(U_\tau^n)_{n=1}^N \subset D$ be the sequence of
approximate values, which are defined by the variational approximation
scheme
\begin{align}
\label{eq:II.23}
\begin{cases} \quad U_\tau^0=u_0, \\ \quad 
U_\tau^n\in J_\tau(B(t_{n-1},U_\tau^{n-1});U_\tau^{n-1})), \quad n=1,2,\dots,N.
\end{cases} 
\end{align}
The piecewise constant and linear interpolation functions we define by
\begin{align}
\label{eq:Approx.tau}
  &\overline{U}_\tau(0)=\underline{U}_\tau(0)=\widehat{U}_\tau(0):=U_\tau^0 \text{ and } \notag \\
  &\underline{U}_\tau(t):=U^{n-1}, \quad \widehat{U}_\tau(t):=\frac{t_n-t}{\tau}U_\tau^{n-1}+\frac{t-t_{n-1}}{\tau}U_\tau^{n} \quad \text{for } t\in[t_{n-1},t_n),\notag\\
  &\overline{U}_\tau(t):=U_\tau^n \quad \text{ for } t\in(t_{n-1},t_n]
  \quad \text{and all } n=1,\dots,N.
\end{align} 

Furthermore, we define by the approximation scheme  
\begin{align}
\label{eq:II.25}
\begin{cases}\quad
  \widetilde{U}_\tau(0):=U_\tau^0,  \\ \quad
  \widetilde{U}_\tau(t) \in J_r(B(t_{n-1},U_\tau^{n-1});U_\tau^{n-1}))
  \quad \text{ for }t =t_{n-1}+r \in (t_{n-1},t_n],
\end{cases} 
\end{align} $n=1,2,\dots,N,$ the \textsc{De Giorgi} interpolation
$\widetilde{U}_\tau$. We note that we can assume the measurability of
the function $\widetilde{U}_\tau$ since by Lemma \ref{le:Main.Lem} there always
exists a measurable selection of the \textsc{De Giorgi}
interpolation. Due to the fact that for all $t\in I_\tau $ the approximation scheme \eqref{eq:II.25}
yields the usual scheme in \eqref{eq:II.23}, we can assume without loss of
generality that all interpolation functions coincide on the nodes
$t_n$, i.e.,
\begin{align*}
\widetilde{U}_\tau(t_n)=\overline{U}_\tau(t_n)=\underline{U}_\tau(t_n)=\widehat{U}_\tau(t_n)=U_\tau^n \quad \text{for all } n=1,\cdots,N.
\end{align*} Moreover, we denote by $\widetilde{\xi_\tau}$ the
interpolation function obtained from Remark \ref{re:Assump.E}  $iii)$ with the
variational sum rule by choosing $t=t_{n-1},
u_0=\widetilde{U}_\tau(t), u=U_\tau^{n-1}$ and
$w=B(t_{n-1},U_\tau^{n-1})$, and which satisfies
\begin{align}
\label{eq:II.26}
  \widetilde{\xi_\tau}(t)\in \partial
  \calE_{t_{n-1}+r}(\widetilde{U}_\tau(t)) \quad \text{ for }
  t=t_{n-1}+r\in (t_{n-1},t_n],
\end{align} and 
\begin{align}
\label{eq:II.27}
  B(t_{n-1},U_\tau^{n-1})-\widetilde{\xi_\tau}(t) \in \partial
  \Psi_{U_\tau^{n-1}}\left (\frac{\widetilde{U_\tau}(t)-U_\tau^{n-1}
    }{t-t_{n-1}}\right) \text{ for } t=t_{n-1}+r\in (t_{n-1},t_n]
\end{align} for all $n=1,\dots,N$. The measurability of the function
$\widetilde{\xi_\tau}:(0,T)\rightarrow V^*$ again follows from Lemma \ref{le:Main.Lem}.

 For notational convenience, we also introduce the piecewise
constant interpolation functions $\bar{\mathbf{t}}_\tau : [0,T]
\rightarrow [0,T]$ and $\teu_\tau:[0,T]\rightarrow [0,T]$ given by
\begin{align*}
&\teo_{{\tau}}(0):= 0\,\,\,\, \text{ and } \, \teo_{{\tau}}(t):=
t_n\quad \, \, \, \, \, \text{ for } t\in (t_{n-1},t_n], \quad n=1,\dots,N,\\
&\teu_{{\tau}}(T):= T \, \text{ and } \, \teu_{{\tau}}(t):=
t_{n-1} \quad \text{ for } t\in [t_{n-1},t_n), \quad n=1,\dots,N.
\end{align*} Obviously, there holds $\teo_\tau(t)\rightarrow t$ and
$\teu_\tau(t)\rightarrow t$ as $\tau\rightarrow 0$.

We are now in the position to show a priori estimates for the
approximate solutions.

\begin{lem}\label{le:DUEE} Let the perturbed
  gradient system $(V,\calE,\Psi,B)$ satisfy the Assumptions
  \textnormal{(2.E), (2.$\Uppsi$)}, and 
  \textnormal{(2.B)}.  Furthermore, let $\widetilde{U}_\tau,
  \overline{U}_\tau,\underline{U}_\tau, \widehat{U}_\tau$ and
  $\widetilde{\xi}_\tau$ be the interpolation functions defined in
  \eqref{eq:Approx.tau}-\eqref{eq:II.26} associated to a fixed initial datum $u_0\in D$ and a
  step size $\tau>0$.\\ Then, the discrete upper energy estimate
\begin{align}
\label{eq: DUEE}
  \calE_{\teo_\tau(t)}(\overline{U}_\tau(t)) + \int_{\teo_\tau(s)}^{\teo_\tau(t)}\left(
    \Psi_{\underline{U}_\tau(r)}\left( \widehat{U}'_\tau(r)\right) +
    \Psi^*_{\underline{U}_\tau(r)}\left(
      B(\teu_\tau(r),\underline{U}_\tau(r))-
      \widetilde{\xi}_\tau(r)\right) \right) \dd r
   \notag \\ 
  \leq
  \calE_{\teo_\tau(s)}(\overline{U}_\tau(s))+\int_{\teo_\tau(s)}^{\teo_\tau(t)} \partial_r
  \calE_r(\widetilde{U}_\tau(r))\dd r+\int_{\teo_\tau(s)}^{\teo_\tau(t)}
  \langle B(\teu_\tau(r),\underline{U}_\tau(r)), \widehat{U}'_\tau
  (r) \rangle \dd r
\end{align} holds for all $0\leq s< t\leq T$. Moreover, there exist positive constants $M,\tau^*>0$ such that the estimates 
\begin{align}
\label{eq:II.29}
\sup_{t\in (0,T)} \calE_t(\overline{(U}_\tau(t)) \leq M,\quad \sup_{t\in (0,T)} \calE_t(\widetilde{(U}_\tau(t))\leq M,\quad \sup_{t\in (0,T)}\vert \partial_t \calE_t(\widetilde{(U}_\tau(t))\vert \leq M \\ 
\label{eq:II.30}
\int_0^T \left( \Psi_{\underline{U}_\tau(r)}\left( \widehat{U}'_\tau(r)\right) + \Psi^*_{\underline{U}_\tau(r)}\left( B(\teu_\tau(r),\underline{U}_\tau(r))- \widetilde{\xi}_\tau(r)\right) \right) \dd r\leq M
\end{align} hold for all $0<\tau\leq \tau^*$. Besides, the families $(\widehat{U}'_\tau)_{0<\tau\leq \tau^*}\subset \rmL^1(0,T;V)$ as well as $(B(\teu_\tau,\underline{U}_\tau))_{0<\tau\leq \tau^*}\subset \rmL^1(0,T;V^*)$ and  $(\widetilde{\xi}_\tau)_{0<\tau\leq \tau^*} \subset \rmL^1(0,T;V^*)$ are integrable uniformly with respect to $\tau$ in the respective spaces. Finally, there holds
\begin{align}
\label{eq:II.31}
\Vert \underline{U}_\tau-\overline{U}_\tau\Vert_{\infty}+\Vert \widehat{U}_\tau-\overline{U}_\tau\Vert_{\infty}
+\Vert \widetilde{U}_\tau-\underline{U}_\tau\Vert_{\infty} \rightarrow 0
\end{align} as $\tau \rightarrow 0$.
\end{lem}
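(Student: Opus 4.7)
The plan is to derive the four claims in order, with the \textsc{De Giorgi} variational interpolant mechanism of Lemma~\ref{le:Main.Lem} as the cornerstone. On each subinterval $(t_{n-1},t_n]$ I would invoke Lemma~\ref{le:Main.Lem} with $\te=t_{n-1}$, $u=U_\tau^{n-1}$, $w=B(t_{n-1},U_\tau^{n-1})$, and $r_0=\tau$; by construction \eqref{eq:II.25}--\eqref{eq:II.27} the measurable selections $\widetilde U_\tau$ and $\widetilde\xi_\tau$ are precisely the ones delivered by that lemma, so its key inequality \eqref{eq:II.17} reads as the local version of the stated estimate over $(t_{n-1},t_n]$. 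Recognising that $\underline U_\tau$ and $\widehat U'_\tau$ are constant on each open subinterval turns the $\tau\Psi_{U_\tau^{n-1}}\bigl((U_\tau^{n}{-}U_\tau^{n-1})/\tau\bigr)$ term and the boundary duality pairing $\langle w,u_{r_0}{-}u\rangle$ into the required integrals, and telescoping over the nodes contained in $[\teo_\tau(s),\teo_\tau(t)]$ produces \eqref{eq: DUEE}.

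For the a priori bounds \eqref{eq:II.29}--\eqref{eq:II.30} the non-variational pairing on the right-hand side must be absorbed. Fenchel--Young gives $\langle B,\widehat U'_\tau\rangle \le \Psi_{\underline U_\tau}(\widehat U'_\tau)+\Psi^*_{\underline U_\tau}(B)$, and convexity of $\Psi^*_u$ with $\Psi^*_u(0)=0$ yields $\Psi^*_u(B)\le c\,\Psi^*_u(B/c)$, which by Assumption~\ref{eq:B.2} is controlled by $\beta(1+\calE_{\teu_\tau}(\underline U_\tau))$; the $\Psi$-piece absorbs into the dissipation term on the left. Assumption~\ref{eq:cond.E.3} bounds the power integral by $C\int \calE_r(\widetilde U_\tau(r))\,\dd r$, and \eqref{eq:II.5} together with the one-step estimate \eqref{eq:II.15} applied to $\widetilde U_\tau$ controls $\calE_r(\widetilde U_\tau(r))$ by the node energies $\calE_{t_{n-1}}(U_\tau^{n-1})$ up to a universal multiplicative constant. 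This reduces the discrete estimate to a discrete Grönwall inequality for $n\mapsto \calE_{t_n}(U_\tau^n)$; iterating yields the first two bounds of \eqref{eq:II.29} uniformly in $n,\tau$, and \ref{eq:cond.E.3} then yields the third. Substituting back into \eqref{eq: DUEE} gives \eqref{eq:II.30}.

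Uniform integrability of $(\widehat U'_\tau)_\tau$ in $\rmL^1(0,T;V)$ and of $(\widetilde\xi_\tau)_\tau$ and $(B(\teu_\tau,\underline U_\tau))_\tau$ in $\rmL^1(0,T;V^*)$ follows by the de la Vallée Poussin criterion from the $\rmL^1$-bound \eqref{eq:II.30} on the superlinear integrands, since Assumption~\ref{eq:Psi.2} gives coercivity of $\Psi_u$ and $\Psi^*_u$ that is uniform on sublevels of $G$, and \eqref{eq:II.29} keeps $\underline U_\tau$ in one such fixed sublevel. For $B(\teu_\tau,\underline U_\tau)$ one combines the bound on $\Psi^*_{\underline U_\tau}(B-\widetilde\xi_\tau)$ with the one on $\widetilde\xi_\tau$, or alternatively applies Assumption~\ref{eq:B.2} directly.

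The convergence \eqref{eq:II.31} splits into two types. The differences $\|\underline U_\tau-\overline U_\tau\|_\infty$ and $\|\widehat U_\tau-\overline U_\tau\|_\infty$ are both bounded by $\max_n\|U_\tau^n-U_\tau^{n-1}\|\le \max_n\int_{t_{n-1}}^{t_n}\|\widehat U'_\tau\|\,\dd r$, which vanishes as $\tau\to 0$ by the equi-integrability just established. The harder piece is $\|\widetilde U_\tau-\underline U_\tau\|_\infty\to 0$, because this needs a \emph{uniform} version of the first limit in \eqref{eq:II.16}. I expect this to be the main technical obstacle. I would revisit the quantitative inequality $(\gamma-\|w\|_*)\|u_r-u\|\le e^{CT}\calE_0(u)-S+r\gamma K$ that appears in the proof of Lemma~\ref{le:Main.Lem}, applied with $u=U_\tau^{n-1}$, $w=B(t_{n-1},U_\tau^{n-1})$, $r=t-t_{n-1}\le\tau$. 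Thanks to \eqref{eq:II.29} the quantity $\calE_0(u)$ is uniformly bounded, and Assumption~\ref{eq:B.2} combined with the uniform coercivity of $\Psi^*$ on energy sublevels furnishes a uniform bound on $\|w\|_*$. Choosing first $\gamma$ large enough to make the $\gamma$-independent remainder $(e^{CT}\calE_0(u)-S)/(\gamma-\|w\|_*)$ arbitrarily small, and then $\tau$ small enough to deal with the remaining term $r\gamma K/(\gamma-\|w\|_*)$, promotes the pointwise statement \eqref{eq:II.16} to uniform convergence across all nodes and completes the proof.
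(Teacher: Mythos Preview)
Your outline tracks the paper's proof closely and most steps are sound, but there is one real gap in the absorption argument for the a~priori bounds. You apply the \textsc{Fenchel--Young} inequality in the plain form $\langle B,\widehat U'_\tau\rangle \le \Psi_{\underline U_\tau}(\widehat U'_\tau)+\Psi^*_{\underline U_\tau}(B)$, which puts the coefficient $1$ in front of $\Psi_{\underline U_\tau}(\widehat U'_\tau)$. This is enough for the discrete Gr\"onwall step leading to \eqref{eq:II.29}, since the $\Psi$-terms simply cancel and the nonnegative $\Psi^*$-term may be discarded. But when you ``substitute back'' into \eqref{eq: DUEE} to obtain \eqref{eq:II.30}, the same cancellation leaves you with no control on $\int_0^T\Psi_{\underline U_\tau}(\widehat U'_\tau)\,\dd r$: both sides carry it with coefficient $1$. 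Without that bound the de la Vall\'ee Poussin argument for the uniform integrability of $\widehat U'_\tau$ collapses, and so does your equicontinuity route to the first two terms of \eqref{eq:II.31}.

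The fix, and the reason the constant $c\in(0,1)$ appears in Assumption~\ref{eq:B.2}, is to scale before applying \textsc{Fenchel--Young}: write $\langle B,\widehat U'_\tau\rangle = c\,\langle B/c,\widehat U'_\tau\rangle \le c\,\Psi_{\underline U_\tau}(\widehat U'_\tau)+c\,\Psi^*_{\underline U_\tau}(B/c)$, so that after absorption a strictly positive fraction $(1-c)\int_0^T\Psi_{\underline U_\tau}(\widehat U'_\tau)\,\dd r$ survives on the left. This is precisely what the paper does. Apart from this point, your handling of \eqref{eq:II.31} differs slightly from the paper (which derives all three convergences from the same estimate \eqref{eq:II.32}, specialising to $t=t_n$ for the piecewise-constant interpolants), but your equi-integrability argument for $\|\overline U_\tau-\underline U_\tau\|_\infty$ and $\|\widehat U_\tau-\overline U_\tau\|_\infty$ is perfectly valid once \eqref{eq:II.30} is in place.
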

\begin{proof}
In order to show the discrete upper energy estimate \eqref{eq: DUEE}, it is sufficient to restrict ourselves to the case $s=t_{n-1}$ and $t=t_n$ for $n\in n=1,\dots, N$. The general case follows by summing up the particular inequalities on the subintervals. But this case follows from \eqref{eq:II.17} in Lemma \ref{le:Main.Lem} by choosing $\te=t_{n-1},u=U_\tau^{n-1}, r_0=t-t_{n-1}, u_{r_0}=\widetilde{U}_\tau(t),  u_{r}=\widetilde{U}_\tau(t_{n-1}+r)$ and $\xi_r=\widetilde{\xi}_\tau(t_{n-1}+r)$, where we chose $t\in(t_{n-1},t_n]$ to be fixed. Then, we find 
\begin{align}
\label{eq:II.32}
(t-t_{n-1})\Psi_{U_\tau^{n-1}}\left( \frac{\widetilde{U}_\tau(t)-U_\tau^{n-1}}{t-t_{n-1}} \right)+\int_{t_{n-1}}^t \Psi_{U_\tau^{n-1}} \left( B(t_{n-1},U_\tau^{n-1})- \widetilde{\xi}_\tau(r)\right)\dd r+\calE_t(\widetilde{U}_\tau(t)) \notag \\
\leq \calE_{t_{n-1}}(U_\tau^{n-1})+\int_{t_{n-1}}^{t} \partial_r \calE_r(\widetilde{U}_\tau(r))\dd r+ \langle B(t_{n-1},U_\tau^{n-1}), U_\tau^n-U_\tau^{n-1} \rangle.
\end{align} By choosing $t=t_n$, we obtain
\begin{align}
\label{eq:II.33}
\int_{t_{n-1}}^{t_n}\left( \Psi_{\underline{U}_\tau(r)}\left( \widehat{U}'_\tau(r)\right) + \Psi^*_{\underline{U}_\tau(r)}\left( B(t_{n-1},\underline{U}_\tau(r))- \widetilde{\xi}_\tau(r)\right) \right) \dd r +\calE_{t_n}(\overline{U}_\tau(t_n))\notag \\
\leq \calE_{t_{n-1}}(\underline{U}_\tau(t_{n-1}))+\int_{t_{n-1}}^{t_n} \partial_r \calE_r(\widetilde{U}_\tau(r))\dd r+\int_{t_{n-1}}^{t_n}\langle B(t_{n-1},\underline{U}_\tau(r)), \widehat{U}'_\tau (r) \rangle \dd r
\end{align} for all $n=1,\cdots,N$, which yields the discrete upper energy estimate. Further, we notice that from Assumption \ref{eq:B.2}, we obtain the estimation
\begin{align}
\label{eq:II.34}
&\int_{t_{n-1}}^{t_n}\langle B(t_{n-1},\underline{U}_\tau(r)), \widehat{U}'_\tau (r) \rangle \dd r \notag\\
 &\leq c \int_{t_{n-1}}^{t_n}\Psi_{\underline{U}_\tau(r)}\left( \widehat{U}'_\tau(r)\right) \dd r+c \int_{t_{n-1}}^{t_n} \Psi^*_{\underline{U}_\tau(r)}\left( \frac{B(t_{n-1},\underline{U}_\tau(r))}{c}\right) \dd r \notag \\
&\leq c \int_{t_{n-1}}^{t_n}\Psi_{\underline{U}_\tau(r)}\left( \widehat{U}'_\tau(r)\right) \dd r+\tau \beta(1+  \calE_{t_{n-1}}(U_\tau^{n-1}))\notag \\
&\leq c \int_{t_{n-1}}^{t_n}\Psi_{\underline{U}_\tau(r)}\left( \widehat{U}'_\tau(r)\right) \dd r+ \tau \beta  (1+G(U_\tau^{n-1})),
\end{align} where we used also the \textsc{Fenchel-Young} inequality. Since $c\in(0,1)$, inequality \eqref{eq:II.33} and \eqref{eq:II.34} together yield the estimation
\begin{align}
\calE_{t_n}(U_\tau^n) 
&\leq \calE_{t_{n-1}}(U_\tau^{n-1})+\int_{t_{n-1}}^{t_n} \partial_r \calE_r(\widetilde{U}_\tau(r))\dd r+ \tau \beta(1+ G(U_\tau^{n-1}))\notag\\
&\leq  \calE_{t_{n-1}}(U_\tau^{n-1})+ \tau \beta 
 (1+G(U_\tau^{n-1}))+C\widetilde{C}\int_{t_{n-1}}^{t_n} G(U_\tau^{n-1})\dd r \notag\\
 \label{eq:II.35}
&+  \int_{t_{n-1}}^{t_n}(r-t_{n-1})\Psi^*_{U_\tau^{n-1}}(B(t_{n-1},U_\tau^{n-1}))\dd r \\ 
&\leq  \calE_{t_{n-1}}(U_\tau^{n-1})+ \tau \beta 
 (1+G(U_\tau^{n-1}))+C\widetilde{C}\int_{t_{n-1}}^{t_n} G(U_\tau^{n-1})\dd r \notag\\
 \label{eq:II.36}
&+  \int_{t_{n-1}}^{t_n} c \tau\Psi^*_{\underline{U}_\tau(r)}\left( \frac{B(t_{n-1},\underline{U}_\tau(r))}{c}\right) \dd r  \\
&\leq  \calE_{t_{n-1}}(U_\tau^{n-1})+ \tau \beta(1+  G(U_\tau^{n-1}))+C\widetilde{C}\tau G(U_\tau^{n-1})\notag \\
&+  \tau \beta (1+G(U_\tau^{n-1}))\notag \\
\label{eq:II.37}
&= \calE_{t_{n-1}}(U_\tau^{n-1})+\tau (2\beta  +C\widetilde{C}) G(U_\tau^{n-1})+2\tau\beta
\end{align} for all $n=1,\dots, N$ and $0<\tau\leq1 $, where we used in \eqref{eq:II.35} the inequality 
\begin{align*}
G(\widetilde{U}_\tau(t))\leq \widetilde{C}(G(U_\tau^{n-1})+(t-t_{n-1})\Psi_{U_\tau^{n-1}}^*(B(t_{n-1},U_\tau^{n-1}))), \quad t\in(t_{n-1},t_n],
\end{align*}
 from Lemma \ref{le:Main.Lem} and in \eqref{eq:II.36} the fact that the map $r\mapsto  r\Psi^*_u\left(\frac{\xi}{r}\right)$ is non-decreasing on $(0,+\infty)$ for every $\xi\in V^*$. Defining $A:=(2\beta  +C\widetilde{C})$ and summing up the inequalities \eqref{eq:II.37},  we obtain
\begin{align}
\label{eq:II.38}
G(U_\tau^n)\leq C_1 \calE_{t_n}(U_\tau^n) \leq C_1 \calE_0(u_0)+2C_1T\beta+ \tau C_1A
\sum_{k=1}^n G(U_\tau^{k-1}) 
\end{align}  for all $n=1,\dots, N$ and $0<\tau\leq1 $. Then, applying the discrete version of the \textsc{Gronwall} Lemma to \eqref{eq:II.38} yields the uniform boundedness of $G(U_\tau^n)$ for all $n= 1,\dots, N$ and $0<\tau<\min\lbrace 1, 1/(2C_1A)\rbrace=:\tau^*$, from what we deduce
\begin{align}
\label{eq:II.39}
\sup_{t\in(0,T)} \calE_t(\overline{U}_\tau(t)) \leq C_1\quad \text{for all } 0<\tau<\tau^*
\end{align} for a positive constant $C_1>0$ independent from $\tau$. Taking into account the inequality \eqref{eq:II.37} and the Assumptions \ref{eq:B.2} and \ref{eq:cond.E.3}, we also obtain the last two inequalities in \eqref{eq:II.29}. By employing \eqref{eq:II.33} and \eqref{eq:II.34}, and arguing as before, we also get \eqref{eq:II.30}. The constant $M$ can be chosen by the sum of all constants obtained from the shown inequalities of this lemma. Further, the uniform integrability of $(\widehat{U}'_\tau)_{0<\tau\leq \tau^*}$ as well as $(B(\teu_\tau,\underline{U}_\tau))_{0<\tau\leq \tau^*}$ and  $(\widetilde{\xi}_\tau)_{0<\tau\leq \tau^*}$ in $\rmL^1(0,T;V)$ and $\rmL^1(0,T;V^*)$, respectively, 
follows from the superlinear growth of $\Psi_u$ and $\Psi^*_u$ (Assumption \ref{eq:Psi.2}), inequality \eqref{eq:II.30} and the growth condition \ref{eq:B.2}. To clarify this, let $\varepsilon>0$ and $\widetilde{M}:=\max\lbrace \beta(1+M),M\rbrace$ be given, where $M$ is the constant obtained from the boundedness in \eqref{eq:II.29} and \eqref{eq:II.30}. Then, by Assumption \ref{eq:Psi.2} there exists for $M$ and  $\widetilde{M}/\varepsilon$ positive numbers $K_1,K_2$, such that 
\begin{align}
\label{eq:II.40}
\Psi_u(v)\geq \frac{\widetilde{M}}{\varepsilon}\Vert v\Vert \quad \text{ and }\quad \Psi_{u}^*(\eta)\geq \frac{\widetilde{M}}{\varepsilon}\Vert \eta \Vert_* 
\end{align} for all $v\in V$ with $\Vert v\Vert\geq K_1$, all $\eta \in V^*$ with $\Vert \eta \Vert_*\geq K_2$ and all $u\in D$ with $G(u)\leq M$, 
For notational convenience, we define $f_\tau: [0,T]\rightarrow V$, $g_\tau: [0,T]\rightarrow V^*$ and $h_\tau: [0,T]\rightarrow V^*$ by $f_\tau(t):= \widehat{U}'_\tau(t)$, $g_\tau(t):= B(\teu_\tau(t),\underline{U}_\tau(t))$ and $h_\tau(t):= (B(\teu_\tau(t),\underline{U}_\tau(t))-\widetilde{\xi}_\tau(t))$ for all $t\in [0,T]$. Then, by \eqref{eq:II.40}, \eqref{eq:II.29} and \eqref{eq:II.30} there hold
\begin{align*}
\int_{\lbrace t\in [0,T] : f_\tau(t)\geq K_1 \rbrace} \Vert f_\tau(t) \Vert \dd t \leq \frac{\varepsilon}{\widetilde{M}} \int_{\lbrace t\in [0,T] : f_\tau(t)\geq K_1 \rbrace} \Psi_{\underline{U}_\tau(t)}(f_\tau(t)) \dd t\leq \varepsilon \\
\int_{\lbrace t\in [0,T] : g_\tau(t)\geq K_2 \rbrace} \Vert g_\tau(t) \Vert_* \dd t \leq \frac{\varepsilon}{\widetilde{M}} \int_{\lbrace t\in [0,T] : g_\tau(t)\geq K_2 \rbrace} \Psi^*_{\underline{U}_\tau(t)}(g_\tau(t)) \dd t\leq \varepsilon\\
\int_{\lbrace t\in [0,T] : h_\tau(t)\geq K_2 \rbrace} \Vert h_\tau(t) \Vert_* \dd t \leq \frac{\varepsilon}{\widetilde{M}} \int_{\lbrace t\in [0,T] : h_\tau(t)\geq K_2 \rbrace} \Psi^*_{\underline{U}_\tau(t)}(h_\tau(t)) \dd t\leq \varepsilon
\end{align*} for all $0< \tau \leq \tau^*$, which yields the uniform integrability. Since the sum of two uniformly integrable functions is again uniformly integrable, it follows that $(\widetilde{\xi}_\tau)_{0<\tau\leq \tau^*}$ is also uniformly integrable in $\rmL^1(0,T;V^*)$ with respect to $\tau>0$. For the last assertion, we first notice that inequality \eqref{eq:II.32} considering \eqref{eq:II.29} and \eqref{eq:II.30} implies
\begin{align*}
\sup_{t\in [0,T]}(t-\teu_\tau(t))\Psi_{\underline{U}_\tau(t)}\left( \frac{\widetilde{U}_\tau(t)-\underline{U}_\tau(t)}{t-\teu_\tau(t)} \right)\leq C_2 .
\end{align*} for a constant $C_2>0$. Then, again Assumption \ref{eq:Psi.2} implies that for every $R>0$ and $\gamma>0$ there exists $K>0$ such that
\begin{align}
\label{eq:II.41}
\gamma \Vert \widetilde{U}_\tau(t)-\underline{U}_\tau(t) \Vert &\leq (t-\teu_\tau(t))\Psi_{\underline{U}_\tau(t)}\left( \frac{\widetilde{U}_\tau(t)-\underline{U}_\tau(t)}{t-\teu_\tau(t)} \right)+ (t-\teu_\tau(t))\gamma K \notag \\
&\leq M+\tau \gamma K \quad \text{ for all } t\in [0,T] \text{ and all } 0<\tau<\tau^*.
\end{align} 
Taking the supremum of the left hand side over all
$t\in[0,T]$ and taking then the limes superior as $\tau \rightarrow
0$, we obtain
\begin{align}
\label{eq:II.42}
  \gamma \limsup_{\tau \rightarrow 0}\sup_{t\in[0,T]}\Vert
  \widetilde{U}_\tau(t)-\underline{U}_\tau(t) \Vert \leq M,
\end{align} 
for any $\gamma>0$, which implies necessarily $\lim_{\tau \rightarrow
  0}\sup_{t\in[0,T]}\Vert \widetilde{U}_\tau(t)-\underline{U}_\tau(t)
\Vert=0$. Since \eqref{eq:II.42} holds for every $t\in [0,T]$, it is
particularly satisfied for $t=t_n$, $n=1,\dots,N$, so that we also get
$\lim_{\tau \rightarrow 0}\sup_{t\in[0,T]}\Vert
\overline{U}_\tau(t)-\underline{U}_\tau(t) \Vert=0$. The latter
convergence in turn implies finally $\lim_{\tau \rightarrow
  0}\sup_{t\in[0,T]}\Vert \widehat{U}_\tau(t)
-\overline{U}_\tau(t)\Vert=0$ which completes the proof.
\end{proof}

\subsection{Limit passage and completion of the proof}
\label{su:Proof}

The next step in constructing a solution to our
\textsc{Cauchy}-Problem relies on compactness arguments in order to
show the existence of a limit function, which obeys the differential
inclusion (1.1) and satisfies the initial datum. For this, it is
natural to make use of the fact that the interpolation functions are
contained in a sublevel set of the energy functional, which by
hypothesis is compact. We elaborate on this in the following result,
which provides also the characterization of the limit function by
\textsc{Young} measures.

\begin{lem} \label{le:LimitPass} Under the same assumptions of Lemma
  \ref{le:Main.Lem}, 
 let $u_0 \in D$ and $(\tau_n)_{n \in \mathbb{N}}$ be a
  vanishing sequence of positive real numbers. Then, there exists a
  subsequence $(\tau_{n_k})_{k \in \mathbb{N}}$, a absolutely
  continuous curve $u\in \AC([0,T];V)$ with $u(0)=u_0$, an integrable
  function $\widetilde{\xi}\in \rmL^1(0,T;V^*)$, a function
  $\mathscr{E}:[0,T]\rightarrow \mathbb{R}$ of bounded variation, an
  essentially bounded function $\mathscr{P}\in \rmL^{\infty}(0,T)$,
  and a time-depended \textsc{Young} measure
  $\mathbold{\mu}=(\mu_t)_{t\in[0,T]}\in \mathscr{Y}(0,T;V\times
  V^*\times \mathbb{R})$, such that
\begin{subequations}
\label{eq:LP.all}
\begin{align}
\label{eq:LP.u}
\overline{U}_{\tau_{n_k}},\underline{U}_{\tau_{n_k}},\widetilde{U}_{\tau_{n_k}},\widehat{U}_{\tau_{n_k}} \rightarrow u \quad \text{in } &\rmL^{\infty}(0,T;V),\\
\label{eq:LP.ud}
\widehat{U}'_{\tau_{n_k}}\rightharpoonup u' \quad \text{in } &\rmL^1(0,T;V),\\
\label{eq:LP.xi}
\widetilde{\xi}_{\tau_{n_k}}\rightharpoonup \widetilde{\xi} \quad \text{in } &\rmL^1(0,T;V^*),\\
\label{eq:LP.B}
B(\teu_{\tau_{n_k}},\underline{U}_{\tau_{n_k}})\rightarrow B(\cdot,u(\cdot))\quad \text{in } &\rmL^{\infty}(0,T;V^*),\\
\label{eq:LP.Ed}
\partial_t \calE_t(\widetilde{U}_{\tau_{n_k}}(t))\rightharpoonup^* \mathscr{P} \quad \text{in } &\rmL^{\infty}(0,T),
\end{align}
\end{subequations} 
and 
\begin{align}
\begin{cases}
\label{eq:LP.E.ptw}
\calE_t(\overline{U}_{\tau_{n_k}}(t))\rightarrow \mathscr{E}(t) \quad &\text{for all }t\in[0,T], \quad \calE_0(u_0)=\mathscr{E}(0),\\
\calE_t(u(t))\leq \mathscr{E}(t)\quad &\text{for all }t\in [0,T]\\
\calE_t(u(t))= \mathscr{E}(t) \quad &\text{for a.a. } t\in (0,T),
\end{cases}
\end{align} 
as $k\rightarrow \infty$. Furthermore, there holds
\begin{subequations}
\label{eq:YM.all}
\begin{align}
\label{eq:YM.ud}
u'(t)&= \int_{V\times V^* \times \mathbb{R}} v \, \dd \mu_t(v,\zeta, p) \quad \text{ for a.a. }t\in[0,T],\\ 
\label{eq:YM.xi}
\widetilde{\xi}(t)&= \int_{V\times V^* \times \mathbb{R}} \zeta \, \dd \mu_t(v,\zeta, p) \quad \text{ for a.a. }t\in[0,T],\\ 
\label{eq:YM.Ed}
\mathscr{P}(t)&= \int_{V\times V^* \times \mathbb{R}} p\, \dd \mu_t(v,\zeta, p)\leq \partial_t \calE_t(u(t))  \quad \text{ for a.a. }t\in[0,T].
\end{align}
\end{subequations} 
and the following energy inequality
\begin{align}
\label{eq:EI}
\begin{split}
&\int_s^t \left( \Psi_{u(r)}(u'(r))+\Psi^*_{u(r)}(B(r,u(r))-\widetilde{\xi}(r))  \right) \dd r +\mathscr{E}(t) \\
&\leq\int_s^t \int_{V\times V^* \times \mathbb{R}} \left( \Psi_{u(r)}(v)+\Psi^*_{u(r)}(B(r,u(r))-\zeta)  \right) \dd \mu_r(v,\zeta, p) \, \dd r +\mathscr{E}(t) \\
&\leq \mathscr{E}(s)+\int_s^t \mathscr{P}(r)\dd r+ \int_s^t \langle B(r,u(r)),u'(r) \rangle \dd r\\
&\leq \mathscr{E}(s)+\int_s^t \partial_r \calE_r(u(r))\dd r+ \int_s^t \langle B(r,u(r)),u'(r) \rangle \dd r
\end{split}
\end{align} for all $0\leq s<t\leq T$.
\end{lem}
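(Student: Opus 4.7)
The plan is to combine compactness of energy sublevels, uniform integrability of the derivative and force terms, equicontinuity of $\widehat{U}_\tau$, and almost-monotonicity of the discrete energy, all of which are provided by Lemma~\ref{le:DUEE}, with the fundamental theorem on time-dependent Young measures and the Mosco continuity \ref{eq:Psi.3}. \textbf{Step 1 (Uniform convergence of the interpolants).} By \eqref{eq:II.29} the four families $\widetilde{U}_\tau, \overline{U}_\tau, \underline{U}_\tau, \widehat{U}_\tau$ are pointwise contained in the fixed sublevel $\lbrace v\in V: G(v)\leq M\rbrace$, which is relatively compact in $V$ by Remark~\ref{re:Assump.E}~$iv)$. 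Combined with the uniform integrability of $\widehat{U}_\tau'$ in $\rmL^1(0,T;V)$ from \eqref{eq:II.30}, which supplies the equicontinuity of the piecewise-affine interpolant, a refined Ascoli-Arzel\`a argument (as in \cite[Prop.\,3.3.1]{AmGiSa05GFMS} or \cite[Thm.\,6.4]{MiRoSa13NADN}) extracts a subsequence along which $\widehat{U}_{\tau_{n_k}}\to u$ uniformly in $V$ with $u\in \AC([0,T];V)$ and $u(0)=u_0$. The convergences in \eqref{eq:LP.u} for $\overline{U}_\tau, \underline{U}_\tau, \widetilde{U}_\tau$ then follow from \eqref{eq:II.31}.

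\textbf{Step 2 (Weak $\rmL^1$-limits and Young measure).} Dunford-Pettis applied to $\widehat{U}_\tau'$, $\widetilde{\xi}_\tau$ and $B(\teu_\tau,\underline{U}_\tau)$ (uniformly integrable by Lemma~\ref{le:DUEE}) yields, along a further subsequence, the weak limits \eqref{eq:LP.ud}, \eqref{eq:LP.xi}, with the identification of the weak limit of $\widehat{U}_\tau'$ as $u'$ being standard. For \eqref{eq:LP.B}, the continuity assumption \ref{eq:B.1} combined with the uniform convergence $\underline{U}_\tau\to u$ and $\teu_\tau(t)\to t$ gives even $\rmL^\infty(0,T;V^*)$-convergence. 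The uniform bound in \eqref{eq:II.29} yields weak-$*$ convergence of $\partial_t\calE_t(\widetilde{U}_\tau)$ to some $\mathscr{P}\in \rmL^\infty(0,T)$, i.e.~\eqref{eq:LP.Ed}. To capture the joint weak behaviour of $(\widehat{U}_\tau',\widetilde{\xi}_\tau,\partial_t\calE_\cdot(\widetilde{U}_\tau))$ with the Jensen-type convexity information on $\Psi$ and $\Psi^*$ in hand, I would apply the fundamental theorem on time-dependent Young measures (Balder/Valadier, see e.g.~\cite[Thm.\,4.3.2]{AmGiSa05GFMS}) to produce $\mathbold{\mu}=(\mu_t)_{t\in[0,T]}$ whose barycenters represent these weak limits, giving \eqref{eq:YM.ud}--\eqref{eq:YM.Ed}; the inequality $\mathscr{P}(t)\leq \partial_t\calE_t(u(t))$ follows from the closedness assumption \ref{eq:cond.E.5} applied pointwise to the support of $\mu_t$.

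\textbf{Step 3 (Energy passage to \eqref{eq:EI}).} From \eqref{eq: DUEE} combined with assumption \ref{eq:B.2} the map $t\mapsto \calE_{\teo_\tau(t)}(\overline{U}_\tau(t))$ is a non-increasing function plus a controlled bounded-variation perturbation, so Helly's selection theorem yields a BV-limit $\mathscr{E}$ with $\mathscr{E}(0)=\calE_0(u_0)$; lower semicontinuity of $\calE_t$ together with $\overline{U}_\tau(t)\to u(t)$ gives $\calE_t(u(t))\leq \mathscr{E}(t)$, with equality a.e.~following later by combining \eqref{eq:EI} with the chain rule \ref{eq:cond.E.4}. In the passage to the limit in \eqref{eq: DUEE}, on the right $\int \partial_r\calE_r(\widetilde{U}_\tau)\dd r\to \int \mathscr{P}\dd r$ by weak-$*$ convergence and $\int \langle B(\teu_\tau,\underline{U}_\tau),\widehat{U}_\tau'\rangle\dd r\to \int\langle B(r,u(r)),u'(r)\rangle\dd r$ by strong-weak duality between $\rmL^\infty(0,T;V^*)$ and $\rmL^1(0,T;V)$. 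On the left, an Ioffe-type lower semicontinuity theorem for integral functionals of Young measures, combined with the Mosco continuity from \ref{eq:Psi.3} and the strong convergence $\underline{U}_\tau\to u$, delivers the liminf estimate for the two dissipation integrals against $\mathbold{\mu}$, giving the second inequality of \eqref{eq:EI}; the first inequality is then Jensen's inequality applied to the convex functionals $v\mapsto \Psi_{u(r)}(v)$ and $\zeta\mapsto \Psi_{u(r)}^*(B(r,u(r))-\zeta)$ against $\mu_r$. The central difficulty is precisely this joint liminf argument: the state-dependent potentials $\Psi_{\underline{U}_\tau}$ and $\Psi_{\underline{U}_\tau}^*$ couple a strongly converging state with only weakly converging arguments, and Young measures are needed to book-keep possible concentration, while Mosco continuity \ref{eq:Psi.3} together with the uniqueness clause in \ref{eq:Psi.1} are what allow the state-dependence to be passed safely through the limit.
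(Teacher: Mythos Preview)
Your proposal is correct and follows essentially the same route as the paper: Arzel\`a--Ascoli plus \eqref{eq:II.31} for the uniform convergence, Dunford--Pettis for the weak $\rmL^1$-limits, Helly for the BV energy limit $\mathscr{E}$, Balder's Young-measure theorem (the paper's Theorem~\ref{th:A1}/\ref{th:A2}) for the barycentric identities and the liminf on the dissipation terms, and Jensen for the first inequality in \eqref{eq:EI}. The one point where you deviate is the a.e.\ identity $\calE_t(u(t))=\mathscr{E}(t)$: you defer it to the chain rule \ref{eq:cond.E.4}, but since it is part of the lemma's conclusion the paper closes it inside the proof, using the strong-weak closedness \ref{eq:cond.E.5} (specifically the clause $\mathcal{E}=\calE_t(u)$) applied along the pointwise limit sequences that also produce \eqref{eq:II.56}; you should invoke \ref{eq:cond.E.5} here rather than postpone.
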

\begin{proof} Let the initial datum $u_0 \in D$ and the sequence
$(\tau_n)_{n \in \mathbb{N}}$ of vanishing time steps be given, such
that $\tau_n<\tau^*$ for all $n\in \mathbb{N}$. In order to show the
existence of an absolutely continuous function, we employ the
\textsc{Arzel\`a-Ascoli} theorem on the family of continuous functions
$(\widehat{U}_{\tau_n})_{n\in \mathbb{N}}\subset C([0,T];V)$. First,
we notice that the uniform integrability of
$(\widehat{U}'_{\tau_n})_{n\in \mathbb{N}}$ leads to the
equicontinuity of $(\widehat{U}_{\tau_n})_{n\in \mathbb{N}}$. Second,
the fact that the set $\lbrace
\overline{U}_{\tau_n}(t)\rbrace_{t\in[0,T]}$ belongs to a sublevel set
of the energy functional $E$ for all $n\in \mathbb{N}$, which by
Assumption \ref{eq:cond.E.2} are compact, implies by \textsc{Mazur}'s
lemma that the set $\lbrace
\overline{U}_{\tau_n}(t)\rbrace_{t\in[0,T]}$ also belongs for all
$n\in\mathbb{N}$ to an compact subset of $V$. Therefore by
\textsc{Arzel\`a-Ascoli}, there exists a continuous function $u\in
C([0,T];V)$ such that $\Vert \widehat{U}_{\tau_n}-u
\Vert_{C([0,T];V)}\rightarrow 0$ as $k\rightarrow \infty$ so that in
particular $u(0)=u_0$. Then, the convergences in \eqref{eq:LP.u}
follows from those in \eqref{eq:II.31}.

Further, from the \textsc{Dunford-Pettis} theorem, see
e.g. \textsc{Dunford} and \textsc{Schwartz} \cite[Cor.\,11,
p.\,294]{DunSch59LO1}, which can be applied since both $V$ and $V^*$
are reflexive \textsc{Banach} spaces, we obtain with the uniform
integrability of $(\widehat{U}'_{\tau_n})_{n\in \mathbb{N}}$ and
$(\widetilde{\xi}_{\tau_n})_{n\in \mathbb{N}}$ in $\rmL^1(0,T;V)$ and
$\rmL^1(0,T;V^*)$, respectively, the existence of a subsequence
(labeled as before) and weak limits $v\in \rmL^1(0,T;V)$ and
$\widetilde{\xi} \in \rmL^1(0,T;V^*)$ such that
$\widehat{U}'_{\tau_n}\rightharpoonup v$ weakly in $\rmL^1(0,T;V)$ and
$\widetilde{\xi}_{\tau_n}\rightharpoonup \widetilde{\xi}$ weakly in
$\rmL^1(0,T,V^*)$ as $n\rightarrow \infty$. From a well known
argument, one can identify $v$ as weak derivative of $u$, i.e., $u'=v$
in the weak sense which yields $u\in \rmW^{1,1}(0,T;V)$ and due to
continuity of $u$, $u\in \AC([0,T];V)$.

Now, we shall prove the convergence \eqref{eq:LP.B} of the
perturbation. We first note that the functions $t\mapsto B(t,u(t))$
and $t\mapsto B(\teu_{\tau_{n_k}}(t),\underline{U}_{\tau_{n_k}}(t))$
both belongs to the space $\rmL^{\infty}(0,T;V)$, where the
measurability follows from the continuity of $u$ and $B$, and
Assumptions \ref{eq:B.1} together with \eqref{eq:II.55}, respectively,
whereas the (essential) boundedness is a consequence of Assumptions
\ref{eq:B.2} and \ref{eq:Psi.2} as well as the a priori
estimates. Now, since the interpolation functions are contained in a
sublevel set of the energy functional, uniformly in $\tau>0$ and for
all $t\in(0,T)$, it is also contained in a compact set of $V$,
uniformly in $\tau>0$ and for all $t\in(0,T)$. Therefore, there exists
a compact set $ \mathcal{K}\subset V$ such that by
\textsc{Tychonoff's} theorem the set $[0,T]\times \mathcal{K}$ is
compact with respect to the product topology of $[0,T]\times V$. This,
in turn implies with Assumption \ref{eq:B.1} the uniform continuity of
the map $(t,u)\mapsto B(t,u)$ on $[0,T]\times \mathcal{K}$. Together
with the convergence of
$(\teu_{\tau_{n_k}}(t),\underline{U}_{\tau_{n_k}}(t))) \rightarrow
(t,u(t))$ uniformly in $t\in(0,T)$, we obtain
\begin{align}
\label{eq:II.53}
  \lim_{n\rightarrow \infty}\sup_{t\in(0,T)} \Vert
  B(\teu_{\tau_{n_k}}(t),\underline{U}_{\tau_{n_k}}(t)) -B(t,u(t))\Vert_*
  \quad \quad \text{as } n\rightarrow \infty.
\end{align} 

In order to show the convergence in \eqref{eq:LP.Ed}, we notice that
due to \eqref{eq:II.29} there holds $(\partial_t
\calE_t(\widetilde{U}_{\tau_{n_k}}))_{k\in \mathbb{N}}\subset
\rmL^{\infty}(0,T)$. Since the \textsc{Lebesgue }space
$\rmL^{\infty}(0,T)$ is the dual space of a separable \textsc{Banach}
space $\rmL^1(0,T)$ there exists a limit $\mathscr{P}\in
\rmL^{\infty}(0,T)$ such that (up to a subsequence) $\partial_t
\calE_t(\widetilde{U}_{\tau_{n_k}}) \rightharpoonup^* \mathscr{P}$
weakly$^*$ in $\rmL^{\infty}(0,T)$ as $k\rightarrow \infty$.  \bigskip

Now, we shall prove \eqref{eq:LP.E.ptw}. For this, we define
\begin{align*}
  \eta_\tau(t):= \calE_{\teo_\tau(t)}
  (\overline{U}_\tau(t))-\int_0^{\teo_\tau(t)} \partial_r
  \calE_r(\widetilde{U}_\tau(r))\dd r-\int_0^{\teo_\tau(t)} \langle
  B(\teu_\tau(r),\underline{U}_\tau(r)), \widehat{U}'_\tau (r) \rangle
  \dd r
\end{align*} 
for $t\in[0,T]$ and we deduce from the discrete upper energy estimate
\eqref{eq: DUEE} that the map $t\mapsto \eta_\tau(t): [0,T]\rightarrow
\mathbb{R}$ is non-increasing. Then, by \textsc{Helly}s theorem there
exists a non-increasing function $\eta:[0,T]\rightarrow \mathbb{R}$
and a subsequence (labeled as before) such that $\eta_{\tau_{n_k}}(t)
\rightarrow \eta(t)$ as $k \rightarrow \infty$ for all
$t\in[0,T]$. Moreover, we define
\begin{align*}
  \psi_\tau(t):= \int_0^{\teo_\tau(t)} \langle
  B(\teu_\tau(r),\underline{U}_\tau(r)), \widehat{U}'_\tau (r) \rangle
  \dd r \quad \text{ for } t\in [0,T].
\end{align*} 
Since we have strong convergence of the perturbation
$B(\teu_\tau,\underline{U}_{\tau_{n_k}})$ in $\rmL^{\infty}(0,T;V^*)$
and weak convergence of the derivative $\widehat{U}'_{\tau_{n_k}} $ in
$\rmL^1(0,T;V) $ as $k\rightarrow \infty$, there holds
\begin{align}
\label{eq:II.54}
\psi_{\tau_{n_k}}(t)\rightarrow \psi(t):=\int_0^t \langle B(r,u(r)),
u'(r) \rangle \dd r \quad \text{ as } k\rightarrow \infty
\end{align} 
for all $t\in[0,T]$. Considering convergence \eqref{eq:LP.Ed}, we obtain 
\begin{align*}
  \calE_{\teo_{\tau_{n_k}}(t)}(\overline{U}_{\tau_{n_k}}(t))\rightarrow
  \mathscr{E}(t):= \eta(t)+\int_0^t \mathscr{P}(r)\dd r +\psi(t) \quad
  \text{for all } t\in[0,T]
\end{align*} 
as $k\rightarrow \infty$. Since the function $\eta$ is
monotone and both the function $\psi$ and the map $t\mapsto \int_0^t
\mathscr{P}(r)\dd r$ are absolutely continuous, it follows that the
function $\mathscr{E}$ is of bounded variation. In order to conclude
the convergence in \eqref{eq:LP.E.ptw}, we notice that
\begin{align*}
  \vert \calE_{\teo_{\tau_{n_k}}(t)}(\overline{U}_{\tau_{n_k}}(t)) -
  \calE_t(\overline{U}_{\tau_{n_k}}(t))\vert \rightarrow 0 \quad
  \text{as } k\rightarrow \infty
\end{align*} 
which follows from \eqref{eq:II.5}, \eqref{eq:II.29} and the fact that
$\teo_{\tau_{n_k}}(t)\rightarrow t$ as $k \rightarrow \infty$ for all
$t\in[0,T]$. Further, by the lower semicontinuity of the energy
functional, we obtain due to the convergence \eqref{eq:II.31}
\begin{align}
\label{eq:II.55}
\calE_t(u(t))\leq \liminf
\calE_t(\overline{U}_{\tau_{n_k}}(t))=\mathscr{E}(t)\leq M \quad
\text{for all } t\in[0,T],
\end{align} 
where the last inequality follows from \eqref{eq:II.29}. The last
assertion in \eqref{eq:LP.E.ptw} follows from Assumption
\ref{eq:cond.E.5}.
\medskip

We continue by showing \eqref{eq:YM.all}. For this
purpose, we define the (reflexive) \textsc{Banach} space
$\mathcal{V}:=V\times V^*\times \mathbb{R}$ endowed with the product
topology space and employ the fundamental theorem of weak topologies
(Theorem \ref{th:A2}) applied to the sequence
$w_k:=(\widehat{U}'_{\tau_{n_k}},
\widetilde{\xi}_{\tau_{n_k}}, \partial_t \calE_t
(\widetilde{U}_{\tau_{n_k}}) )_{k\in \mathbb{N}}$ which belongs to
$\rmL^1(0,T;\mathcal{V})$ by the a priori estimates, and is uniformly
integrable in $\rmL^1(0,T;\mathcal{V})$ since every component is in
the respective space. Thus, there exists a \textsc{Young}-measure
$\mathbold{\mu}=(\mu_t)_{t\in[0,T]}\in \mathscr{Y}(0,T;V\times
V^*\times \mathbb{R})$ such that $\mu_t$ is for almost everywhere
$t\in(0,T)$ concentrated on the set
\begin{align*}
  \mathrm{Li}(t):=\bigcap_{p=1}^\infty \mathrm{clos_{weak}}
  \big(\lbrace w_k(t) : k\geq p \rbrace \big)
\end{align*}
of all limit points of $w_k(t)$ with respect to the weak-weak-strong
topology of $V\times V^*\times \mathbb{R}$, i.e.\
$\mathrm{sppt}(\mu_t) \subset \mathrm{Li}(t) $. Since the weak limits
in \eqref{eq:LP.ud}, \eqref{eq:LP.xi} and \eqref{eq:LP.Ed} are unique,
the identities in \eqref{eq:YM.ud} and \eqref{eq:YM.xi} are direct
consequences of the fundamental theorem of weak topologies, whereas
the inequality in \eqref{eq:YM.Ed} is true due to the fact that for
almost every $t\in (0,T)$, there holds
\begin{align}
\label{eq:II.56}
\zeta \in \partial \calE_t(u(t)) \quad \text{and} \quad
p\leq \partial_t \calE_t(u(t)) \quad \text{for all } (v,\zeta,p)\in
\mathrm{Li}(t).
\end{align} 
Property \eqref{eq:II.56} in turn follows from Assumption
\ref{eq:cond.E.5} with the convergences in
\eqref{eq:LP.u}{eq:LP.all} and \eqref{eq:LP.E.ptw} as well as the inclusion
\eqref{eq:II.26}: Let $\mathcal{N}\subset (0,T)$ a negligible set such
that for all $t\in (0,T)\backslash \mathcal{N}$ the set
$\mathrm{Li}(t)$ is non-empty. Now let $t\in t\in (0,T)\backslash
\mathcal{N}$ and $(v,\zeta,p)\in \mathrm{Li}(t)$, then there exists a
subsequence $(k_l)_{l\in \mathbb{N}}$ such that
$\widehat{U}'_{\tau_{n_{k_l}}}(t) \rightharpoonup v,\,
\widetilde{\xi}_{\tau_{n_{k_l}}}(t)\rightharpoonup^* \zeta$ and
$\partial_t \calE_t(\widetilde{U}_{\tau_{n_{k_l}}}(t))\rightarrow p$
as $l\rightarrow \infty$, where the latter convergence follows from
the fact that in finite dimensional spaces the weak topology coincides
with the strong topology. In view of convergence \eqref{eq:LP.u} and
the inclusion \eqref{eq:II.26}, \eqref{eq:II.56} follows by Assumption
\ref{eq:cond.E.5}. Integrating the inequality in \eqref{eq:II.56} with
respect to the \textsc{Borel} probability measure yields
\eqref{eq:YM.Ed}. In order to show the energy inequality
\eqref{eq:EI}, we notice first of all that from \textsc{Jensen}'s
inequality, we obtain for almost every $t\in(0,T)$
\begin{align}
\label{eq:JI.psi}
\Psi_{u(t)}(u'(t))&\leq \int_{V\times V^*\times \mathbb{R}} \Psi_{u(t)}(v)\dd \mu_t(v,\zeta,p),\\
\label{eq:JI.psi*}
\Psi^*_{u(t)}(B(t,u(t))-\widetilde{\xi}(t)))&\leq \int_{V\times V^*\times \mathbb{R}} \Psi^*_{u(t)}(B(t,u(t))-\zeta)\dd \mu_t(v,\zeta,p).
\end{align} 
This can also be obtained by integrating the inequalities
\begin{align*}
  \Psi_{u(t)}(u'(t))&\leq \Psi_{u(t)}(v)+\langle w^*,u'(t)-v\rangle
  \quad \text{for all }v\in V \\
  \Psi^*_{u(t)}(B(t,u(t))-\widetilde{\xi}(t))&\leq
  \Psi^*_{u(t)}(B(t,u(t))-\zeta)+\langle \zeta-\widetilde{\xi}(t)
  ,w\rangle \quad \text{for all }\zeta \in V^*
\end{align*} 
using the identities in \eqref{eq:YM.all} as well as
the fact that $w^*\in \partial \Psi_{u(t)}(u'(t))\neq \emptyset$ and
$w\in \partial \Psi^*_{u(t)}(B(t,u(t))-\widetilde{\xi}(t))\neq
\emptyset$, see Remark \ref{re:Assump.Psi} i).
\bigskip

Defining $ \mathcal{H}_k:[0,T]\times \mathcal{V}\rightarrow \mathbb{R}$ by
\begin{align*}
\mathcal{H}_k(r,w):=\chi_{[\teo_{\tau_{n_k}}(s),\teo_{\tau_{n_k}}(t)]} \Psi_{\underline{U}_{\tau_{n_k}}(r)}(v), \quad (r,v,\zeta,p)\in [0,T]\times \mathcal{V},
\end{align*} together with \eqref{eq:II.29} and \eqref{eq:LP.u}, the \textsc{Mosco} continuity \ref{eq:Psi.3} leads to 
\begin{align}
\label{eq:II.59}
\mathcal{H}(r,w):=\chi_{[s,t]}\Psi_{u(r)}(v)\leq \liminf_{k\rightarrow \infty} \mathcal{H}_k(r,w_n),
\end{align} for all $(r,w)=(r,v,\zeta,p)\in [0,T]\times \mathcal{V}$ and all weak convergent sequences $w_k\rightharpoonup w\in \mathcal{V}$, where $s,t \in [0,T]$ with $s\leq t$ are chosen to be fixed. As the space \textsc{Banach} space $\mathcal{V}$ is reflexive, the map
\begin{align*}
(v,\zeta,p) \mapsto (\Vert v \Vert+\Vert \zeta \Vert_{*}+\vert p \vert)
\end{align*} has compact sublevel sets with respect to the weak
topology of $\mathcal{V}$. Together with the boundedness of the
afore-defined sequence $(w_k)_{k\in \mathbb{N}}$, which follows from
\eqref{eq:LP.all}, we obtain the weak-tightness of $(w_k)_{k\in
  \mathbb{N}}$. Therefore, for a subsequence of $(n_k)_{k\in
  \mathbb{N}}$ (not relabeled), Theorem \ref{th:A1} provides the
inequality
\begin{align*}
\int_0^T \int_\mathcal{V} \mathcal{H}(r,w) \dd \mu_r(w)\dd r \leq \liminf_{k\rightarrow \infty} \int_0^T  \mathcal{H}_k(r,w_k) \dd r, 
\end{align*} i.e.,
\begin{align}
\label{eq:LI.psi}
\int_s^t \int_{\mathcal{V}}\Psi_{u(r)}(v)\dd \mu(v,\zeta,p) \dd r\leq \liminf_{k\rightarrow \infty} \int_{\teo_{\tau_{n_k}}(s)}^{\teo_{\tau_{n_k}}(t)} \Psi_{\underline{U}_{\tau_{n_k}}(r)}(\widehat{U}'_{\tau_{n_k}}(r))\dd r<+\infty,
\end{align} where the boundedness follows from the a priori estimate \eqref{eq:II.30}. Taking into account Remark \ref{re:Assump.Psi}  $iii)$, then Theorem \ref{th:A1} applied to the function
\begin{align*}
\mathcal{H}^*_k(r,w):=\chi_{[\teo_{\tau_{n_k}}(s),\teo_{\tau_{n_k}}(t)]} \Psi^*_{\underline{U}_{\tau_{n_k}}(r)}(B(\teu_{\tau_{n_k}}(r),\underline{U}_{\tau_{n_k}}(r))-\zeta), \quad (r,v,\zeta,p)\in [0,T]\times \mathcal{V},
\end{align*} yields 
\begin{align}
\label{eq:LI.psi*}
\begin{split}
&\int_s^t \int_\mathcal{V}\Psi^*_{u(t)}(B(r,u(r))-\zeta)\dd \mu(v,\zeta,p) \dd r\\
&\leq \liminf_{k\rightarrow \infty} \int_{\teo_{\tau_{n_k}}(s)}^{\teo_{\tau_{n_k}}(t)} \Psi^*_{\underline{U}_{\tau_{n_k}}(r)}(B(\teu_{\tau_{n_k}}(r),\underline{U}_{\tau_{n_k}}(r))-\widetilde{\xi}_{\tau_{n_k}}(r))\dd r<+\infty,
\end{split}
\end{align} where again the boundedness follows from \eqref{eq:II.30}.
Integrating \eqref{eq:JI.psi} and \eqref{eq:JI.psi*} with respect to
$t$ yields the first inequality in \eqref{eq:EI}. The second and third
inequality follow by passing to the limit in the discrete upper energy
estimate \eqref{eq: DUEE} as $k\rightarrow \infty$ and considering
\eqref{eq:LP.Ed}, \eqref{eq:LP.E.ptw}, \eqref{eq:YM.Ed},
\eqref{eq:II.54}, \eqref{eq:II.56} as well as \eqref{eq:LI.psi} and
\eqref{eq:LI.psi*}. This proves Lemma \ref{le:LimitPass}.
\end{proof}

 We are now ready to complete the proof of our main existence result in
Theorem \ref{th:MainExist}. 

\begin{proof}[Proof of Theorem \ref{th:MainExist}]
  In order to show that the absolutely continuous curve $ $ $u\in
  \AC([0,T];V)$ obtained from Lemma \ref{le:LimitPass} is a solution to
  the differential inclusion \eqref{eq:I.1}, we make use of the chain
  rule for \textsc{Young} measures in Lemma \ref{le:A3} which is
  justified by \eqref{eq:LP.Ed}, \eqref{eq:YM.ud}, \eqref{eq:II.56},
  \eqref{eq:LI.psi} and \eqref{eq:LI.psi*}, where
  $\mathbold{\mu}=(\mu_t)_{t\in[0,T]}\in \mathscr{Y}(0,T;V\times
  V^*\times \mathbb{R})$ is to be chosen as in Lemma
  \ref{le:LimitPass}. Hence by the chain rule condition, the map $t
  \mapsto \calE_t(u(t))$ is absolutely continuous on $(0,T)$ and there
  holds
\begin{align*}
  \frac{\rmd}{\rmd t}\calE_t(u(t))\geq \int_{V\times V^*\times
    \mathbb{R}}\langle \zeta,u'(t) \rangle \dd
  \mu_t(v,\zeta,p)+\partial_t \calE_t(u(t))\quad \text{for
    a.a. }t\in(0,T).
\end{align*} 
Thus, together with \eqref{eq:LP.E.ptw}, \eqref{eq:YM.Ed} and
\eqref{eq:EI}, we obtain with $s=0$
\begin{align}
\label{eq:II.62}
\begin{split}
&\int_0^t \int_{V\times V^* \times \mathbb{R}} \left( \Psi_{u(r)}(u'(r))+\Psi^*_{u(r)}(B(r,u(r))-\zeta)  \right) \dd \mu_r(v,\zeta, p) \, \dd r +\calE_t(u(t)) \\
&\leq \calE_0(u_0)+\int_0^t  \partial_r \calE_r(u(r)) \dd r+ \int_0^t \langle B(r,u(r)),u'(r) \rangle \dd r\\
&\leq \calE_t(u(t))-\int_0^t \int_{V\times V^* \times \mathbb{R}} \langle  \zeta,u'(r) \rangle \dd \mu_r(v,\zeta,p)\dd r+\int_0^t \langle B(r,u(r)),u'(r) \rangle \dd r\\
&= \calE_t(u(t))+\int_0^t \int_{V\times V^* \times \mathbb{R}} \langle B(r,u(r))-\zeta,u'(r) \rangle \dd \mu_r(v,\zeta,p) \dd r \quad \text{for all }t\in[0,T].
\end{split}
\end{align} Therefore, there holds
\begin{align}
\label{eq:II.63}
\begin{split}
\int_0^t \int_{V\times V^* \times \mathbb{R}} &( \Psi_{u(r)}(u'(r)) +\Psi^*_{u(r)}(B(r,u(r))-\zeta) \\  &-\langle B(r,u(r))-\zeta,u'(r) \rangle ) \dd \mu_r(v,\zeta, p) \dd r \leq 0 \quad \text{for all }t\in[0,T].
\end{split}
\end{align} Then, from the \textsc{Fenchel-Young} inequality we deduce the non-negativity of the integrand in \eqref{eq:II.63} and infer therefore
\begin{align}
\label{eq:II.64}
& \int_{V\times V^* \times \mathbb{R}} \left( \Psi_{u(t)}(u'(t))+\Psi^*_{u(t)}(B(t,u(t))-\zeta) -\langle B(t,u(t))-\zeta,u'(t) \rangle \right) \dd \mu_t(v,\zeta, p)\notag \\
&=0 \quad \text{for a.a. }t\in(0,T).
\end{align} It follows that all inequalities in \eqref{eq:II.62} become equalities for all $t\in[0,T]$, so that we obtain the equation
\begin{align}
\label{eq:II.65}
\begin{split}
&\int_s^t \int_{V\times V^* \times \mathbb{R}} \left( \Psi_{u(r)}(u'(r))+\Psi^*_{u(r)}(B(r,u(r))-\zeta)  \right) \dd \mu_r(v,\zeta, p) \, \dd r +\calE_t(u(t)) \\
&= \calE_s(u(s))+\int_s^t  \partial_r \calE_r(u(r)) \dd r+ \int_s^t \langle B(r,u(r)),u'(r) \rangle  \dd r 
\end{split} 
\end{align} for all $0\leq s,t\leq T$. Defining the marginal $\mathbold{\nu}=(\nu_t)_{t\in[0,T]}:=\pi^{2,3}_\#\mathbold{\mu}$ of $\mathbold{\mu}$ by $\nu_t(B):=\mu_t((\pi^{2,3})^{-1}(B))$ for all $B\in \mathscr{B}(V^*\times \mathbb{R})$, where $\pi^{2,3}:V\times V^*\times \mathbb{R}\rightarrow V^*\times \mathbb{R}$ denotes the canonical projection and $\mathscr{B}(V^*\times \mathbb{R})$ the \textsc{Borel} $\sigma$-algebra of $V^*\times \mathbb{R}$. Setting 
\begin{align}
\label{eq:II.66}
\begin{split}
\mathcal{S}(t,u(t),u'(t)):=\lbrace &(\zeta,p)\in V^*\times \mathbb{R} \mid \zeta \in \partial
 \calE_t(u(t)) \cap (B(t,u(t))-\partial\Psi_{u(t)}(u'(t))\\
 & \text{and } p\leq \partial_t\calE_t(u(t)) \rbrace
\end{split}
\end{align} we notice that by \eqref{eq:II.56} and \eqref{eq:II.64} it follows that $\nu_t(\mathcal{S}(t,u(t),u'(t)))=1$ for a.a. $t\in(0,T)$ and assumption \eqref{eq:A7} is fulfilled. Therefore, by Lemma \ref{le:A4} there exists a measurable selections $\xi:[0,T]\rightarrow V^*$ and $p:[0,T]\rightarrow \mathbb{R}$ with
\begin{align}
\label{eq:II.67}
\int_0^T \Psi^*_{u(t)}(B(t,u(t))-\xi(t))\dd t<+\infty,
\end{align} such that $(\xi(t),p(t))\in \mathcal{S}(t,u(t),u'(t))$ and there holds
\begin{align}
\label{eq:II.68}
\Psi^*_{u(t)}(B(t,u(t))-\xi(t))-p(t)=\min_{(\zeta,p)\in \mathcal{S}(t,u(t),u'(t))} \Psi^*_{u(t)}(B(t,u(t))-\zeta)-p
\end{align} Since \eqref{eq:II.67} holds and $B(\cdot,u(\cdot))\in \rmL^{\infty}(0,T;V^*)$, we deduce from Assumption from the superlinearity of $\Psi_u^*$ that $ $  $\xi\in \rmL^1(0,T;V^*)$, so that the pair $(u,\xi)$ solves the differential inclusion \eqref{eq:I.1} and $u$ satisfies the initial condition $u(0)=u_0$, where the former follows from \eqref{eq:II.68} and the latter by Lemma \ref{le:LimitPass}.\\  Furthermore, taking into account property \eqref{eq:II.56} and equation \eqref{eq:II.64}, then Lemma \ref{le:Leg.Fen} yields $\nu_t(\mathcal{S}(t,u(t),u'(t))=1$ for almost every $t\in(0,T)$. Thus from equality \eqref{eq:II.68} and the definition of $\mathcal{S}(\cdot,u(\cdot).u'(\cdot))$, there holds
\begin{align*}
&\int_s^t \Psi^*_{u(r)}(B(r,u(r)-\xi(r))\dd r -\int_s^t p(r)\dd r\\
& \leq \int_s^t \int_{V\times V^*\times \mathbb{R}} \Psi^*_{u(r)}(B(r,u(r))-\zeta)\dd \mu_r(v,\zeta,p)\dd r - \int_s^t p(r) \dd r
\end{align*} Now, by comparison with equation \eqref{eq:II.65}, we infer
\begin{align*}
\begin{split}
&\int_s^t \left( \Psi_{u(r)}(u'(r))+\Psi^*_{u(r)}(B(r,u(r))-\xi(r))  \right)\dd r +\calE_t(u(t)) \\
&\leq  \calE_s(u(s))+\int_s^t  \partial_r \calE_r(u(r)) \dd r+ \int_s^t \langle B(r,u(r)),u'(r) \rangle  \dd r 
\end{split}
\end{align*} for all $0\leq s\leq t\leq T$. On the other hand, applying the chain rule condition \ref{eq:cond.E.4} to the pair $(u,\xi)$ yields
\begin{align*}
\frac{\rmd}{\rmd t} \calE_t(u(t))\geq \langle \xi(t),u'(t)\rangle + \partial_t \calE_t(u(t)) \quad \text{ for a.e. }t\in (0,T). 
\end{align*} Together with the identity
\begin{align*}
\Psi_{u(r)}(u'(r))+\Psi^*_{u(r)}(B(r,u(r))-\xi(r))  
= \langle B(r,u(r))-\xi(r),u'(r) \rangle  \quad \text{ a.e. in} \in (0,T),
\end{align*} which again follows from Lemma \ref{le:Leg.Fen} and the definition of $\mathcal{S}(\cdot,u(\cdot).u'(\cdot))$, we conclude the energy-dissipation balance \eqref{eq:EDB}.
\end{proof}
\begin{rem}
It is not difficult to prove that for every sequence $(\tau_n)_{n\in\mathbb{N}}$ there exists a subsequence (denoting as before) such that the following convergences holds:
\begin{align*}
  \calE_t(\overline{U}_{\tau_n}(t))&\rightarrow \calE_t(u(t))  \quad \text{for all }t\in[0,T],\\
  \int_s^t
  \Psi_{\underline{U}_{\tau_n}(r)}(\widehat{U}_{\tau_n}'(r))\dd r &
  \rightarrow  \int_s^t\Psi_{u(r)}(u'(r)) \dd r \quad \text{and }\\
  \int_s^t
  \Psi^*_{\underline{U}_{\tau_n}(r)}
  (B(\teu_{\tau_n}(r),\underline{U}_{\tau_n}(r)) -\widetilde{\xi}_{\tau_n}(r))
  \dd r &\rightarrow \int_s^t\Psi^*_{u(r)}(B(r,u(r))-\xi(r))\dd r
\end{align*} for all $0\leq s\leq t\leq T$ as $n\rightarrow \infty$. Furthermore, if we additionally assume that the dissipation potential $\Psi_u$ and its conjugate $\Psi^*_u$ are strictly convex for all $u\in V$, then there holds $\pi^1_\# \mathbold{\mu}=\delta_{u'(t)}$ and $\pi^2_\# \mathbold{\mu}=\delta_{\xi(t)}$, respectively, and there holds
\begin{align*}
\widehat{U}_{\tau_n}'(t) \rightharpoonup u'(t) \quad \text{and} \quad \widetilde{\xi}_{\tau_n}(t)\rightharpoonup \xi(t) \quad \text{for a.a. }t\in(0,T).
\end{align*} as well as $\widetilde{\xi}_{\tau_n}\rightharpoonup \xi$ in $\rmL^1(0,T;V^*)$ as $n\rightarrow \infty$.
\end{rem}

\section{A result for evolutionary $\Gamma$-convergence}
\label{se:EGC}

In this section we consider a family of perturbed gradients systems
$\PG^\eps:=(V,\calE^\eps, \Psi^\eps,B^\eps)$, where $\eps\in [0,1]$ is a small
parameter. Here the case $\eps=0$ is the supposed limit equation, also
called effective equation. The major question what type of convergence
of $\calE^\eps$, $\Psi^\eps$, and $B^\eps$ is sufficient to conclude
that solutions $u_\eps:[0,T] \to V$ for $\PG^\eps$ with $\eps>0$ 
have subsequences $\eps_k\to 0$ that convergence
pointwise in $t\in [0,T]$ to a limit function $u_0:[0,T]\to V$ and
that $u_0$ is indeed a solution for $\PG^0$. 

The theory developed here follows \cite[Thm.\,4.8]{MiRoSa13NADN},
where the case of pure gradient systems (i.e.\ $B_\eps \equiv 0$) was
considered.

\subsection{Assumptions and results}
\label{su:EGC.Ass.Res}

Our assumptions follow closely the assumption for the existence theory
in Section \ref{su:AssumpExistRes}, where we need uniformity with
respect to $\eps\in [0,1]$.  For definiteness we now list the precise
assumptions on $\PG^\eps$. For describing energy functionals
$\calE^\eps$ w define the auxiliary 
\begin{align*}
& G^\eps(u)= \sup\bigset{ \calE^\eps_t(u)}{ t\in [0,T] }\\
&\ulG(u):=\inf\bigset{ \calE_t^\eps(u) }{ t\in[0,T],\ \eps \in
  [0,1] }.
\end{align*}
Without loss of generality we may assume that $\ulG$ is bounded
from below by a positive constant $\gamma>0$.
{\renewcommand{\theequation}{\thesection.E$^\eps$}%
\begin{subequations}\label{eq:cond.E.eps}%
 \begin{align}
  \nonumber
  &\text{\textbf{Constant domains.}} \quad \forall \, t\in[0,T] \
    \forall\, \eps \in [0,1]:\\
  \nonumber
  & \quad \calE^\eps_t:V \rightarrow (0,\infty] \text{ is proper and lower
   semicontinuous with}\\
  &\quad \text{time-independent domain } D^\eps:= 
   \mathrm{dom}(\calE^\eps_t)\subset V  \text{ for all } t\in [0,T].
  \label{eq:cond.E.eps.1} 
 \\
  \nonumber 
  &\text{\textbf{Equi-compactness of sublevels.}} \\
  &\quad\text{The sublevels of } \ulG \text{ have compact closure in }  V. 
  \label{eq:cond.E.eps.2} 
  \\
   \nonumber
   &\text{\textbf{Uniform energetic control of power.}} \\ 
   \nonumber
   & \quad \forall\, \eps \in [0,1] \ \forall \, u\in D^\eps: \quad 
   t\mapsto \calE^\eps_t(u) \text{ is differentiable on } (0,T)
   \text{and } \\
  \label{eq:cond.E.eps.3} 
  & \quad \exists\, C_T>0\ \forall\, \eps \in [0,1]\ \forall\, t\in (0,T)\
    \forall\, u \in D^\eps:\quad   
    \vert \partial_t \calE^\eps_t(u)\vert \leq C_T \calE^\eps_t(u) .
  \\[0.4em]
  \label{eq:cond.E.eps.4} 
  &\text{\textbf{Chain rule.}} \quad \forall\, \eps \in [0,1]:
    \quad\text{the chain rule of \ref{eq:cond.E.4} holds for }
    (V,\calE^\eps,\Psi^\eps). \hspace*{2em} 
  \\[0.4em] 
  &\label{eq:cond.E.epsGamma}
   \text{\textbf{Liminf estimate.}} \ (\eps_k,u_k) \to (0,u) \text{
     implies } \calE^0_t(u)\leq \liminf_{k\to \infty} \calE^{\eps_k}_t(u_k).
  \\[0.4em] 
  &\nonumber
    \text{\textbf{Strong-weak closedness in the limit }} \eps
     \to 0. \quad \text{For all } t\in[0,T] \text{ and}
  \\
  &\nonumber
   \quad \text{all sequences } (\eps_n,u_n,\xi_n)_{n\in \mathbb{N}} 
     \subset [0,1]\ti V\ti V^* \text{ with }
     \xi_n\in \partial \calE^{\eps_n}_t(u_n) \text{ and } 
  \\
  &\nonumber
    \qquad\eps_n\to 0,\ \   u_n \rightarrow u\in V, \ \  \xi_n 
       \rightharpoonup \xi\in V^*,
     \ \  \calE^{\eps_n}_t(u_n)\rightarrow \calE_0,   
     \ \  \partial_t \calE^{\eps_n}_t(u_n)\rightarrow \calP
     \hspace*{-4em}\mbox{ }
  \\  
  &\nonumber 
   \quad \text{for $n\rightarrow\infty$, we have the relations }
  \\
  & \qquad \xi \in \partial \calE^0_t(u), \quad  \calE^0_t(u)=\calE_0, \quad 
     \text{and} \quad  \partial_t \calE^0_t(u)\geq \calP. 
   \label{eq:cond.E.eps.5} 
 \end{align}
\end{subequations}
\addtocounter{equation}{-1}}
%
%
 As in the existence
theory we use a control of the time-derivative, see
\eqref{eq:cond.E.eps.3}, which gives $\calE_t^\eps(u)\geq
\ee^{-C_T|t{-}s|} \calE^\eps_s(u)$. Thus, for all $\eps\in [0,1]$ and
$t\in [0,T]$ we have the relations 
\[
\ulG(u) \leq G^\eps(u) \leq \ee^{C_T T} \calE^\eps_t(u) \leq  \ee^{C_T
  T} G^\eps(u).
\]
Note that we cannot use a uniform upper bound $G^\eps(u) \leq \ol
G(u)$ as this would exclude many useful results on
$\Gamma$-convergence. 

In the present form of condition \eqref{eq:cond.E.eps.5} we do not ask for the
strong-weak closedness for $\calE^\eps_t$ with a given positive
$\eps$. However, in our main result we simply assume the existence of
solutions $u_\eps:[0,T]\to V$ for $\PG^\eps$. If we want to show this
with the theory of Section \ref{se:ExistResult}, then one
has to impose \ref{eq:cond.E.5} for all $\eps>0$ as well (which is the same as
allowing constant sequences $\eps_n=\eps$ in \eqref{eq:cond.E.eps.5}.

The closedness condition \eqref{eq:cond.E.eps.5} looks rather strong,
however in Remark \ref{re:SWClosGamma}, see after the statement of the main
convergence result, we will 
show that convexity of $\calE^\eps_t(\cdot)$ and strong
$\Gamma$-convergence to $\calE^0_t$ already imply the
desired closedness.

The conditions of the dissipation potentials $\Psi^\eps_u:V\to
[0,\infty)$ are the following. 
{\renewcommand{\theequation}{\thesection.$\Uppsi^\eps$}%
\begin{subequations}
 \label{eq:Psi.eps}
 \begin{align}
  \nonumber
   &\textbf{Dissipation potentials.} \quad \forall\, \eps\in [0,1]\
   \forall \, u\in V: \\ 
  \label{eq:Psi.eps.1}
   &\quad \Psi^\eps_u:V\rightarrow [0,\infty) \text{ is lower
     semicontinuous and convex with } \Psi^\eps_u(0)=0.
   \\[0.3em]
  \nonumber
   &\textbf{Superlinearity.} \quad \forall\ R>0 \
     \exists\, g_R:[0,\infty)\to [0,\infty) \text{ superlinear}: \\
   \nonumber
   &\quad \forall\, \eps\in [0,1]\ \forall\, (v,\xi)\in V\ti V^*\ 
     \forall\, u\in V \text{ with } \ulG(u)<R:\\
   & \label{eq:Psi.eps.2} 
     \quad\qquad \Psi^\eps_u(v)\geq g_R(\|v\|)
     \text{ and } \Psi^{\eps,*}_u(\xi) \geq g_R(\|\xi\|). 
 \\
  \nonumber
   & \text{\textbf{Mosco convergence.} \ For all $R>0$ and sequences }
     (\eps_n,u_n)_{n\in \N} \subset [0,1]\ti V\\
  \label{eq:Psi.eps.3} 
   &\quad  \text{ with } \ulG(u_n)\leq R \text{ and }
   (\eps_n,v_n)\to (0,v): \quad 
   \Psi^{\eps_n}_{u_n} \Mto \Psi^0_u .  
\end{align}
\end{subequations}
\addtocounter{equation}{-1}}
%
Again we have formulated the \textsc{Mosco} convergence of the dissipation
potentials only with the limit $\eps_n\to 0$, which is sufficient for
the limit passage when solutions $u_\eps:[0,T]\to V$ are given. To
show the existence of solutions we need \ref{eq:Psi.3} for all $\eps\in
(0,1]$ as well. 
    
Finally, we impose the conditions of the non-variational perturbation
$B^\eps$, namely 
{\renewcommand{\theequation}{\thesection.B$^\eps$}%
\begin{subequations}
 \label{eq:Beps}
  \begin{align} 
  &\label{eq:Beps.1}
    \text{\textbf{Continuity.}  \ The map }\left\{\ba{ccc} [0,1]\ti
        [0,T]\ti V & \to & V^*, \\ \
    (\eps,t,u) &\mapsto& 
    B^\eps (t,u), \ea \right. \text{ is continuous.}
  \\
  & \nonumber
   \text{\textbf{Control of $B^\eps$ by the energy.}} \quad \exists\,
   C_B>0\ \forall\, (\eps, t)\in [0,1]\ti [0,T] \\
    &\quad  \forall\, u\in D^\eps : \quad 
     \Psi^{\eps,*}_u\big(B^\eps(t,u)\big)\leq C_B \calE^\eps_t(u) .
  \label{eq:Beps.2}
\end{align}
\end{subequations} 
\addtocounter{equation}{-1}}

We are now ready to formulate our result of evolutionary
$\Gamma$-converge. In \cite{Miel16EGCG} the convergence
we will established is called ``pE-convergence'' as we have to impose
the well-``p''reparedness of the initial conditions $u^0_\eps$, viz.\ 
\begin{equation}
  \label{eq:EGC.well}
  u^0_\eps \to u^0 \text{ in } V \ \text{ and } \
  \calE^\eps_0(u^0_\eps) \to \calE^0_0(u^0)<\infty \quad \text{ for } \eps
  \to 0.
\end{equation}
Moreover, in the sense of \cite{LMPR17MOGG,DoFrMi17?EGCW} we even have
the much stronger notion of EDP convergence, which means convergence
in the sense of the energy-dissipation balance. Indeed, as for the
existence result in Section \ref{se:ExistResult} we will also strongly
rely on the energy-dissipation principle and perform the limit
$\eps\to 0$ in the energy-dissipation balance \eqref{eq:EDB}. Our proof
will be an adaptation of \cite[Thm.\,4.8]{MiRoSa13NADN}. 

\begin{thm}[Evolutionary $\Gamma$-convergence] \label{th:EGC.main}
Assume that the family $\PG^\eps=(V,\calE^\eps,\Psi^\eps,B^\eps)$,
$\eps\in [0,1]$ satisfy the assumptions \eqref{eq:cond.E.eps},
\eqref{eq:Psi.eps}, and \eqref{eq:Beps}.  Moreover, assume that for
$\eps>0$ we have solutions $u_\eps:[0,T]\to V$ of $\PG^\eps$ such that
the initial conditions $u_\eps(0)=u^0_\eps$ satisfy
\eqref{eq:EGC.well}.
Then, there exists a subsequence $\eps_k\to 0$ and a solution
$u:[0,T]\to V$ of the limit system $\PG^0$ with $u(0)=u^0$ such that
the following convergences hold:
\begin{subequations}
 \label{eq:EGC.cvg}
 \begin{align}
  & \label{eq:EGC.cvg.a}
    u_{\eps_k}(t)\ \to \ u(t)\  \text{ in } \rmC^0([0,T];V); \\
  & \label{eq:EGC.cvg.b} 
   \forall\, t\in [0,T]: \quad \calE^{\eps_k}_t(u_{\eps_k}(t)) \ \to \
    \calE^0_t(u(t)); \\
  & \label{eq:EGC.cvg.c} 
    u'_{\eps_k}\  \weak\  u' \ \text{ in } \rmL^1(0,T;V); \\
  & \label{eq:EGC.cvg.d} 
    \forall\, r<s:\ \int_r^s \Psi^{\eps_k}_{u_{\eps_k}(t)} (u'_{\eps_k}(t)) \dd t \
    \to \int_r^s \Psi^0_{u(t)} (u'(t)) \dd t; \\
  & \label{eq:EGC.cvg.e} \forall\, r<s:\ \int_r^s \!\!\Psi^{\eps_k,*}_{u_{\eps_k}(t)}
    \big(B^{\eps_k}(t,u_{\eps_k}(t)){-}\xi_{\eps_k}(t)\big) \dd t \
    \to \int_r^s \!\!\Psi^{0,*}_{u(t)} \big(B^0(t,u_0(t)){-}\xi_0(t)\big) \dd t, 
 \end{align}
\end{subequations}
where $\xi_\eps(t)\in \pl\calE^\eps_t(u_\eps(t))$ for $\eps\in [0,1]$
and a.a.\ $t\in[0,T]$.   
\end{thm}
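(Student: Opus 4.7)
The plan is to mirror the structure of Lemma \ref{le:LimitPass} and Theorem \ref{th:MainExist}, replacing the discrete time-step parameter $\tau$ by the $\Gamma$-convergence parameter $\eps$. Since solutions $u_\eps$ to $\PG^\eps$ are already assumed to exist, the De~Giorgi variational interpolation machinery of Section \ref{su:VarApprox} is not needed here; instead I would start directly from the continuous energy-dissipation balance \eqref{eq:EDB} written for $u_\eps$ together with its selection $\xi_\eps(t)\in \partial \calE^\eps_t(u_\eps(t))$, and pass to the limit $\eps\to 0$.

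First I would derive uniform a~priori estimates. Combining \eqref{eq:EDB} with the Fenchel-Young inequality, the bound \eqref{eq:Beps.2}, and the energetic control \eqref{eq:cond.E.eps.3} gives a Gronwall-type argument parallel to \eqref{eq:II.35}--\eqref{eq:II.39}, yielding
\begin{align*}
  \sup_{\eps\in(0,1]}\sup_{t\in[0,T]} \calE^\eps_t(u_\eps(t)) \leq M,
\end{align*}
where $M$ depends only on the assumption constants and on $\calE^0_0(u^0)$, which is finite by the well-preparedness \eqref{eq:EGC.well}. The equi-compactness hypothesis \eqref{eq:cond.E.eps.2} then confines $\{u_\eps(t):\eps\in(0,1],\,t\in[0,T]\}$ to a compact subset of $V$, while the superlinearity \eqref{eq:Psi.eps.2} delivers uniform integrability of $(u'_\eps)$ in $\rmL^1(0,T;V)$ and of $(\xi_\eps)$, $(B^\eps(\cdot,u_\eps)-\xi_\eps)$ in $\rmL^1(0,T;V^*)$. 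By Arzelà-Ascoli and Dunford-Pettis I extract $\eps_k\to 0$ with $u_{\eps_k}\to u$ in $\rmC^0([0,T];V)$, $u'_{\eps_k}\weak u'$ in $\rmL^1(0,T;V)$, and $\xi_{\eps_k}\weak \widetilde{\xi}$ in $\rmL^1(0,T;V^*)$ along a further subsequence; the continuity \eqref{eq:Beps.1} together with the orbit compactness gives $B^{\eps_k}(\cdot,u_{\eps_k}(\cdot))\to B^0(\cdot,u(\cdot))$ strongly in $\rmL^\infty(0,T;V^*)$. A Young measure $\mathbold{\mu}=(\mu_t)_{t\in[0,T]}$ on $V\ti V^*\ti\mathbb{R}$ is then generated by the triple $(u'_{\eps_k},\xi_{\eps_k},\partial_t\calE^{\eps_k}_t(u_{\eps_k}))$, exactly as in Lemma \ref{le:LimitPass}.

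Next I would pass to the $\liminf$ in \eqref{eq:EDB} written for $u_{\eps_k}$. On the left, the liminf hypothesis \eqref{eq:cond.E.epsGamma} controls $\calE^0_t(u(t))\leq \liminf\calE^{\eps_k}_t(u_{\eps_k}(t))$, while the Mosco convergence \eqref{eq:Psi.eps.3}---valid because $\ulG(u_{\eps_k}(t))\leq M$---feeds into the Ioffe-type Young-measure lower semicontinuity theorem (Theorem \ref{th:A1}) to produce the analogues of \eqref{eq:LI.psi}--\eqref{eq:LI.psi*} with $\Psi^0$ and $\Psi^{0,*}$ in place of $\Psi$. On the right, the well-preparedness gives $\calE^{\eps_k}_0(u^0_{\eps_k})\to \calE^0_0(u^0)$; the weak-$*$ limit $\mathscr{P}$ of $\partial_t\calE^{\eps_k}_t(u_{\eps_k})$ is bounded above by $\partial_t\calE^0_t(u)$ via the closedness hypothesis \eqref{eq:cond.E.eps.5} applied on the Young-measure support; and the strong-weak pair gives $\int\langle B^{\eps_k},u'_{\eps_k}\rangle\dd r \to \int\langle B^0,u'\rangle \dd r$. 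Applying the chain rule \eqref{eq:cond.E.eps.4} to $u$ then forces all the $\liminf$-inequalities to become equalities, just as in \eqref{eq:II.62}--\eqref{eq:II.65}. A measurable selection argument produces $\xi_0(t)\in \partial\calE^0_t(u(t))$ with $B^0(t,u(t))-\xi_0(t)\in \partial\Psi^0_{u(t)}(u'(t))$, so $u$ solves $\PG^0$; the convergences \eqref{eq:EGC.cvg.b}, \eqref{eq:EGC.cvg.d}, \eqref{eq:EGC.cvg.e} then fall out of the enforced equalities, while \eqref{eq:EGC.cvg.a} and \eqref{eq:EGC.cvg.c} are already built in from the compactness step.

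The hard part will be the joint Mosco limit for the dissipation integrands, namely the liminf in $\Psi^{\eps_k}_{u_{\eps_k}(t)}(u'_{\eps_k}(t))$ and in $\Psi^{\eps_k,*}_{u_{\eps_k}(t)}(B^{\eps_k}-\xi_{\eps_k})$, where the potential index, the state $u_{\eps_k}(t)$, and the argument all vary simultaneously and where the argument converges only weakly in $\rmL^1$. This requires coupling the Mosco hypothesis \eqref{eq:Psi.eps.3} with the Young-measure framework so that the Ioffe inequality applies uniformly over the evolving state. A secondary subtlety is verifying the closedness condition \eqref{eq:cond.E.eps.5} precisely on the Young-measure support in order to identify $\widetilde{\xi}(t)\in \partial\calE^0_t(u(t))$ and to bound $\mathscr{P}(t)\leq \partial_t\calE^0_t(u(t))$, both of which are prerequisites for invoking the chain rule that closes the argument.
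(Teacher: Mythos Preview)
Your proposal is correct and follows essentially the same route as the paper: a~priori estimates from the energy-dissipation balance via Gronwall, compactness by Arzel\`a--Ascoli and Dunford--Pettis, extraction of a Young measure generated by $(u'_{\eps_k},\xi_{\eps_k},\partial_t\calE^{\eps_k}_t(u_{\eps_k}))$, liminf estimates via Mosco convergence and Balder's theorem, identification on the Young-measure support through the closedness condition \eqref{eq:cond.E.eps.5}, and closure by the chain rule plus a measurable selection. The one intermediate step the paper makes explicit and you leave implicit is Helly's selection principle applied to $t\mapsto \calE^{\eps_k}_t(u_{\eps_k}(t))$ to obtain a pointwise BV limit $\ol e^0(t)\geq \calE^0_t(u(t))$ before the chain-rule argument upgrades this to equality.
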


The proof of this result is contained in the following two Sections
\ref{su:EGC.CvgSubseq} and \ref{su:EGC.LimitPass}. However, we do not
give all the details and refer to the full proof of Theorem
\ref{th:MainExist} in Section \ref{se:ExistResult} for the details. 

\begin{rem}[Strong-weak closedness and
  $\Gamma$-convergence] \label{re:SWClosGamma}
It is a well-know fact that the strong-weak closedness in the limit
$\eps\to 0$ as assumed in \eqref{eq:cond.E.eps.5} often follows from
the $\Gamma$-convergence $\calE^\eps_t\Gto
\calE^0_t$. For the readers convenience we give the argument for the
convex case where $\pl\calE^\eps_t(u)$ is simply the convex
subdifferential, i.e.\ 
\[
\pl\calE^\eps_t(u)= \bigset{\xi \in V^*}{ \forall\, w\in V:\
  \calE^\eps_t(w)\geq \calE^\eps_t(u)+ \langle \xi,w{-}v\rangle }. 
\]
Thus, having a sequences $u_\eps \to u$ and $\xi_\eps \weak \xi_0$
with $ \xi_\eps \in \pl\calE^\eps_t(u_\eps)$ for $\eps >0$ and
$\calE^\eps_t(u_\eps)\to \ol e$, we can find, for each $w\in W$ a
recovery sequence $w_\eps \to w$ with $\calE^\eps_t(w_\eps)\to
\calE^0_t(w)$. Hence, we obtain
\[
\calE^\eps_t(w_\eps) \geq \calE^\eps_t(u_\eps) + \langle \xi_\eps ,
w_\eps {-}u_\eps\rangle \ \text{ for }\eps>0.
\]
Passing to the limit $\eps\to 0$ we obtain 
\begin{equation}
  \label{eq:w.u.xi}
  \calE^0_t(w) \geq \ol e + \langle \xi_0  , w{-}u_0\rangle, 
\end{equation}
where we used the strong convergence $w_\eps{-}u_\eps \to w{-}u_0$. 
By $\calE^\eps_t \Gto
\calE^0_t$ we already know $\calE^0_t(u_0)\leq \ol e$, but choosing
$w=u_0$ in \eqref{eq:w.u.xi} gives $\calE^0_t(u_0)= \ol e$ as
desired. With this, \eqref{eq:w.u.xi} immediately gives $\xi_0 \in
\pl\calE^0_t(u_0)$. 
\end{rem}

The above result is only one of many possible versions and several
generalizations are possible. For instance, we may combine time
discretization with time step $\tau\to 0$ with the limit $\eps \to
0$. More precisely, if we solve the time discretized problem (see Section 
\ref{su:TimeDiscret}) for $\PG^\eps$  with time step $\tau$ we obtain
an approximation $\wh U_{\tau_\eps}$. Then, it can be shown that these
approximations satisfy good a priori estimates and hence for every
sequence $(\tau_n,\eps_n)\to (0,0)$ there exists a subsequence and a
solution of $\PG^0$ such that the above convergences hold. We refer to
\cite[Thm.\,4.1]{MiRoSt08GLRR} or
\cite[Thm.\,3.12]{MiRoSa16BVSI} for results of this type.

\subsection{A priori estimates}
\label{su:EGC.Apriori}

The energy-dissipation principle states that every solution $u_\eps\in
\AC([0,T];V)$ for $\PG^\eps$, i.e.\ 
\eqref{eq:I.1} is satisfied, also satisfies the energy-dissipation
balance in the sense that there exists a measurable selection
$\xi_\eps:(0,T)\to V^*$ such that $\xi_\eps(t)\in
\pl\calE^\eps_t(u_\eps(t))$ a.e.\ in $(0,T)$ and that  
\begin{align}
 \nonumber
  &\calE^\eps_T(u_\eps(T))+ \int_0^T\!\! \Big(
  \Psi^\eps_{u_\eps(r)}(u'_\eps(r)) + \Psi^{\eps,*}_{u_\eps(r)}
  \big(B^\eps(r,u_\eps(r))-\xi_\eps(r)\big) \Big) \dd r 
\\
 &   \label{eq:EGC.EDB}
 = \calE^\eps_0(u_\eps(0)) + \int_0^T \!\! \Big(\pl_t \calE^\eps_r(u_\eps(r)) + 
  \big\langle B^\eps(t,u_\eps(r)),u'_\eps(t) \big\rangle \Big)\dd r. 
\end{align}
Estimating the last term via the \textsc{Young-Fenchel} inequality and
\eqref{eq:Beps.2} we obtain 
\[
\big\langle B^\eps(r,u_\eps(r)),u'_\eps(r) \big\rangle 
\leq 
\Psi^\eps_{u_\eps(r)}(u'_\eps(r)) + \Psi^{\eps,*}_{u_\eps(r)}
  \big(B^\eps(r,u_\eps(r))\big) \leq
  \Psi^\eps_{u_\eps(r)}(u'_\eps(r))+ C_B \calE^\eps_r(u_\eps(t))
\]
for the last term. Thus, the terms involving
$\Psi^\eps_{u_\eps(r)}(u'_\eps(r))$ and using $\Psi^{\eps,*}_{u}\geq
0$ and \eqref{eq:cond.E.eps.2} we arrive at 
\[
\calE^\eps_T(u_\eps(T))\leq \calE^\eps_0(u_\eps(0))+ \int_0^T
\big(C_T{+}C_B\big) \calE_r^\eps(u_\eps(r)) \dd r. 
\]
With $u_\eps(0)=u^0_\eps$ and the well-preparedness \eqref{eq:EGC.well}
the \textsc{Gronwall} lemma yields 
\[
G^\eps(u_\eps(t))\; \leq \;\calE_t^\eps(u_\eps(t)) \leq
2\calE^0_0(u_0^0) \,\ee^{(C_T+C_B) t} \leq \ol E:=
2\calE^0_0(u_0^0)\,\ee^{(C_T+C_B) T}.
\]
Thus, assumption \eqref{eq:cond.E.eps.2} guarantees that there exists
a compact set $K\Subset V$ such that $u_\eps(t)\in K$ for all
$(\eps,t)\in (0,1)\ti[0,T]$.  As $K\subset B_R(0)\subset V$ we can
apply the superlinearity \eqref{eq:Psi.eps.2} and the control \eqref{eq:Beps.2}
 of $B^\eps$ to estimate
\[
g_R\big(B^\eps(t,u_\eps(t))\big) \leq \Psi^{\eps,*}_{u_\eps(t)}
\big(B^\eps(t,u_\eps(t))\big) \leq C_B \calE^\eps_t(u_\eps(t)) \leq
C_B \ol E.
\]
This implies the boundedness of the non-variational perturbation,
viz.\
\begin{equation}
  \label{eq:EGC.B.bound}
  \exists\, R^*_B>0\ \forall\, (\eps,t)\in (0,1)\ti[0,T]: \quad 
 \| B^\eps(t,u_\eps(t))\|_{V^*}\leq R^*_B. 
\end{equation}

Inserting the bounds for $\calE^\eps_t(u_\eps(t))$ (and hence for
$\pl_t \calE^\eps_t(u_\eps(t))$) and for $B^\eps(t,u_\eps(t))$ into
\eqref{eq:EGC.EDB} we obtain 
\[
\int_0^T \!\! \Big(
  \Psi^\eps_{u_\eps(r)}(u'_\eps(r))- R^*_B\|u'_\eps(r)\|_V +
  \Psi^{\eps,*}_{u_\eps(r)} 
  \big(B^\eps(r,u_\eps(r))-\xi_\eps(r)\big) \Big) \dd r \leq C_E.
\]
Using that $\Psi^\eps$ and $\Psi^{\eps,*}$ are bounded from below by the
superlinear function $g_R$ (cf.\ \eqref{eq:Psi.eps.2}) and using
\eqref{eq:EGC.B.bound} again we arrive at
\begin{equation}
  \label{eq:EGC.gR.bounds}
   \exists\, C_\Psi>0\ \forall\, \eps\in (0,1]:\quad 
\int_0^T \big( g_R(\|u'_\eps(t)\|_V)+ g_R(\|\xi_\eps\|_{V^*}) \big)
\dd t \leq C_\Psi.
\end{equation}

\subsection{Convergent subsequences}
\label{su:EGC.CvgSubseq}

By \eqref{eq:EGC.gR.bounds} and the criterion of \textsc{de la
Vall\'ee-Poussin} for uniform integrability, the family
$u_\eps:[0,T]\to V$ is equi-continuous. As all values $u_\eps(t)$ lie
in the compact set $K$ the \textsc{Arzel\`a-Ascoli} theorem (e.g.\
\cite[Prop.\,3.3.1]{AmGiSa05GFMS}) gives
a subsequence $\eps_k\to 0$ such that the uniform convergence 
\eqref{eq:EGC.cvg.a}
holds. Moreover, \eqref{eq:EGC.gR.bounds} also implies weak
compactness, hence we may also assume 
$u'_{\eps_k} \weak u'_0$ in $\rmL^1(0,T;V)$, which is \eqref{eq:EGC.cvg.c}.

By the continuity \eqref{eq:Beps.1} we obtain convergence of the
non-variational terms, namely 
\begin{equation}
  \label{eq:EGC.cvg.B}
  \forall\, t\in [0,T]: \quad B^{\eps_k}(t,u_{\eps_k}(t)) \to
  B^0(t,u_0(t)) \text{ uniformly in } V^*. 
\end{equation}

Using the positivity of $\Psi^\eps$ and $\Psi^{\eps,*}$ we then obtain
that $\ol e^\eps:t\mapsto \calE^\eps_t(u_\eps(t))$ are uniformly
bounded in BV$([0,T])$, such that Helly's selection principle allows
to extract a subsequence (not relabeled) such that 
\begin{equation}
  \label{eq:EGC.E.cvg}
\forall\ t\in [0,T]:\quad \ol e^{\eps_k}(t) \to \ol e^0(t) \geq
\calE_t^0(u_0(t)),   
\end{equation}
where the last estimate follows from \eqref{eq:cond.E.epsGamma}. 

Again based on the superlinear bounds \eqref{eq:EGC.gR.bounds} we can
define extract further subsequence (not relabeled) such that $t\mapsto
(u'_{\eps_k}(t),\xi_{\eps_k}(t),\pl_t\calE^{\eps_k}_t(u_{\eps_k}(t)))$
generates a \textsc{Young} measure $\mathbold\mu=(\mu_t)_{t\in [0,T]} \in
\calY([0,T];V\ti V^*\ti \R)$ in the sense that
\begin{equation}
  \label{eq:EGC.F.mu}
  \int_0^T F\big(t,u'_{\eps_k}(t), \xi_{\eps_k}(t),
\pl_t\calE^\eps_t(u_\eps(t)) \big) \dd t  \to \int_0^T \int_{V\ti
  V^*\ti\R} F(t, v,\zeta,p)\dd \mu_t(v,\zeta,p) \dd t,
\end{equation}
for all continuous functions $F:[0,T]\ti V\ti
  V^*\ti\R \to \R$, where $V\ti V^*$ is equipped with the weak
  topology, with $F(t,v,\zeta,p)\leq C(1{+}\|v\|+\|\zeta\|_*)$. We
  refer to Appendix \ref{se:Appendix}.

\subsection{Limit passage and conclusion of the proof of Theorem \protect
  {\ref{th:EGC.main}}}  
\label{su:EGC.LimitPass}

We can now go back to the energy-dissipation balance
\eqref{eq:EGC.EDB} and pass to the limit $\eps_k\to 0$, where we
employ \textsc{Balder}'s lower semicontinuity result \cite{Bald84GALS}
for weakly normal integrands in the form of
\cite[Thm.\,4.3]{Stef08BEPD}, see Theorem \ref{th:A1}. The main point
here is that for $\bfalpha=(\alpha_1,\alpha_2,\alpha_3)\in
{[0,\infty)}^3$ the mappings
\[
F^\bfalpha_k:\;[0,T]\ti V\ti V^*\ti \R\to \R;\  (t,v,\zeta,p) \;\mapsto\;
\alpha_1\Psi_{u_{\eps_k}(t)}(v) +
\alpha_2\Psi^{\eps,*}_{u_{\eps_k}(t)}(\zeta) 
+ \alpha_3 p    ,
\]
satisfy a liminf estimate, namely 
\[
(v_k,\zeta_k,p_k)\weak (v,\zeta,p) \text{ in }V\ti V^* \ti \R \quad
\Longrightarrow \quad \liminf_{k\to \infty}
F^\bfalpha_k(t,v_k,\zeta_k,p_k) 
\geq F^\bfalpha_\infty(t,v,\zeta, p),
\]
where $F^\bfalpha_\infty(t,v,\zeta,p)= \alpha_1 \Psi^0_{u_0(t)}(v)
+ \alpha_2 \Psi^{0,*}_{u_0(t)}(\zeta) + \alpha_3 p$.  But the latter
liminf estimate 
follows easily from the \textsc{Mosco} convergence condition
\eqref{eq:Psi.eps.3}, because we already now $u_{\eps_k}\to u_0(t)$
and $\calE^{\eps_k}_t(u_{\eps_k}(t)) \leq \ol E$. In particular, we
obtain the three liminf estimates
\begin{subequations}
 \label{eq:EGC.liminf}
\begin{align}
  \label{eq:EGC.liminf.a}
  &\int_r^s \Psi^0_{u_0(t)} (u'_0(t)) \dd t \leq \liminf_{k\to \infty} 
   \int_r^s \Psi^{\eps_K}_{u_{\eps_K}(t)} (u'_{\eps_k}(t)) \dd t,
\\
  \label{eq:EGC.liminf.b}
  &\int_r^s \Psi^{0,*}_{u_0(t)}\big(
            B^{0_k}(t,u_0(t)){-}\xi_0(t))\big) \dd t
 \leq \liminf_{k\to \infty} 
   \int_r^s \Psi^{\eps_k,*}_{u_{\eps_k}(t)}\big(
  B^{\eps_k}(t,u_{\eps_k}(t)){-}\xi_{\eps_k}(t))\big) \dd t,
\\
  \label{eq:EGC.liminf.c}
& \int_r^s \pl_t \calE^0_t(u_0(t)) \dd t \leq \liminf_{k\to \infty} 
   \int_r^s \pl_t\calE^{\eps_k}_t(u_{\eps_k}(t)) \dd t,
\end{align}
\end{subequations}
where $0\leq r < s \leq T$ are arbitrary. 

Adding the three inequalities in \eqref{eq:EGC.liminf} and using the
limit $\ol e^0$ in \eqref{eq:EGC.E.cvg} we arrive at 
\begin{align}
 \nonumber
 &\ol e^0(T) + \int_0^T \!\!\int_{V\ti V^*\ti \R} \!\!\Big(\Psi^0_{u_0(t)}(v)+
  \Psi^{0,*}_{u_0(r)} \big(B^0(r,u_0(r))-\zeta\big) - p\Big) \dd
  \mu_t(v,\zeta,p) \dd t 
 \\
 \label{eq:EGC.est3}
 &\leq  \calE^0_0(u^0_0)+ \int_0^T \big\langle
  B^0(r,u_0(r)), u'_0(r) \big\rangle \dd r.
\end{align}
Here the convergence of the right-hand side follows from the
well-preparedness \eqref{eq:EGC.well} and the fact that the strong
$\rmL^{\infty}$ convergence \eqref{eq:EGC.cvg.B} and the weak convergence
\eqref{eq:EGC.cvg.c} imply the convergence of the integral. 
 
Now we exploit the main structural property of the \textsc{Young} measure
$\mathbold\mu$ which states that for a.a.\ $t\in [0,T]$ the supports
of $\mu_t$ lie in the set of accumulation points of defining
sequences. More, there is a null set $N\subset [0,T]$ (i.e.\ $|N|=0$)
such that 
\[
\forall\, t\in [0,T]{\setminus} N:\
 \mathrm{sppt}(\mu_t)\subset \mathrm{Li}(t):= \bigcap_{m=1}^\infty
\mathrm{clos_{weak}}\Big(
\bigset{(u'_{\eps_{k}}(t),\xi_{\eps_{k}}(t),\pl_t\calE^{\eps_{k}}_t(u_{\eps_{k}}(t)))
} { k\geq m} \Big) .
\]
Hence, the closedness condition \eqref{eq:cond.E.eps.5} guarantees that
\begin{align*}
&\forall\, t\in [0,T]{\setminus} N\ \forall\, (v,\zeta,p)\in
\mathrm{sppt}(\mu_t): \  
\zeta \in \pl\calE^0_t(u_0(t)), \ p\leq \pl_t\calE^0_t(u_0(t)),
\ \ol e^0(t)=\calE^0_t(u_0(t)).
\end{align*} 

We can now estimate further in \eqref{eq:EGC.est3}. By \eqref{eq:EGC.E.cvg} 
the first term $\ol e^0(T)$ is estimated from below by
$\calE^0_T(u_0(T))$. The term involving 
$\Psi^0_u(v)$ can be estimated by the convexity of $\Psi^0_u(\cdot)$
and the fact that $\mu_t$ is a probability measure with
$v$-expectation $u'_0$, i.e.\ 
\[
 u'_0(t)= \int_{V\ti V^*\ti\R}  v \dd \mu_t(v,\zeta,p).
\]
This follows simply by testing \eqref{eq:EGC.F.mu} by
$F(t,v,\zeta,p)=\langle \eta(t),v\rangle$ for all $\eta \in
\rmL^\infty(0,T;V^*)$. Thus, we have 
\[
\int_0^T \Psi_{u_0(t)}(u'_0(t)) \dd t \leq 
\int_0^T \int_{V\ti V^*\ti\R} \Psi_{u_0(t)}(v) \dd \mu_t(v,\zeta,p) \dd t,
\]

For the term involving 
$\Psi^{0,*}_u(v)$ we cannot apply Jensen's inequality as
$\pl\calE_t^0(u)$ may not be convex. Thus, for $t\in  [0,T]{\setminus}
N$ we select $\xi_0(t) \in \pl\calE_t^0(u_0(t))$ with 
\[
\Psi^{0,*}_{u_0(r)} \big(B^0(r,u_0(r))-\xi_0(t)\big) = \min\bigset{
\Psi^{0,*}_{u_0(r)} \big(B^0(r,u_0(r))-\zeta\big)}{ \zeta \in
\pl\calE_t^0(u_0(t))}. 
\]
Such a measurable selection exists, see Lemma \ref{le:A4} in Appendix
\ref{se:Appendix}. 

Finally using $p\leq \pl_t\calE^0_t(u_0(t))$ on Li$(t)$ the estimate 
\eqref{eq:EGC.est3} yields, for all $s\in (0,T]$,
\begin{align}
 \nonumber
 &\calE^0_s(u_0(s)) + \int_0^s \!\! \Big(\Psi^0_{u_0(t)}(u'_0(t)) +
  \Psi^{0,*}_{u_0(r)} \big(B^0(r,u_0(r))-\xi_0(t)\big) -
  \pl_t\calE^0_t(u_0(t)) \Big)  \dd t 
 \\
 \label{eq:EGC.est5}
 &\qquad \leq  \calE^0_0(u^0_0)+ \int_0^s \big\langle
  B^0(t,u_0(t)), u'_0(t) \big\rangle \dd t.
\end{align}
Moreover, by the \textsc{Fenchel-Young} inequality and the chain-rule
inequality \eqref{eq:cond.E.eps.4}, which 
is used for $\eps=0$ only, the 
left-hand side can be estimated from below via
\begin{align}
\nonumber
&\calE^0_s(u_0(s)) + \int_0^s \!\! \Big(\Psi^0_{u_0(t)}(u'_0(t)) +
  \Psi^{0,*}_{u_0(t)} \big(B^0(t,u_0(t))-\xi_0(t)\big) -
  \pl_t\calE^0_t(u_0(t)) \Big)  \dd t \\
\nonumber
&\overset{\text{FY}}{\geq} \calE^0_s(u_0(s)) + \int_0^s \!\!\Big(\langle
   B^0(t,u_0(t)){-}\xi_0(t) , u'_0(t)\rangle  -
  \pl_t\calE^0_t(u_0(t))  \Big) \dd t \\
\nonumber
&\overset{\text{chain}}\geq \calE^0_s(u_0(s)) + \int_0^s\!\! \Big(\langle
B^0(t,u_0(t)) , u'_0(t)\rangle -\frac{\rmd}{\rmd t}\big(
\calE_t^0(u_0(t))\big)  \Big) \dd t 
\\
\label{eq:EGC.est7}
&= \calE^0_0(u_0(0)) + \int_0^s \big\langle
  B^0(t,u_0(t)), u'_0(t) \big\rangle \dd t.
\end{align} 

Thus, we conclude that all inequalities in \eqref{eq:EGC.est5} and
\eqref{eq:EGC.est7} are equalities, which implies the the
\textsc{Fenchel-Young} estimate has to hold with equality a.e.\ in
$[0,T]$, which gives the desired differential inclusion $
B^0(t,u_0(t)) - \xi_0(t) \in \pl \Psi^0_{u_0(t)}(u'_0(t))$ or 
\[
 B^0(t,u_0(t)) \in \pl \Psi^0_{u_0(t)}(u'_0(t)) + \pl\calE^0_t(u_0(t))
 \qquad \text{a.e.\ in } [0,T].
\] 

Additionally, we observe that the liminf estimates 
\begin{align*}
\calE^0_t(u_0(t))
\leq \ol e^\infty(t)=\lim_{\eps_k\to 0} \calE^{\eps_k}_t(u_{\eps_k}(t))
\end{align*}
as well as the liminf estimates in \eqref{eq:EGC.liminf} are indeed
equalities as well. Thus, \eqref{eq:EGC.cvg.b}, \eqref{eq:EGC.cvg.d},
 and \eqref{eq:EGC.cvg.b} are established and the proof of Theorem
 \ref{th:EGC.main} is complete.

\subsection{Improved result for state-independent dissipation}
 \label{su:EGC.StateIndep}

The result on evolutionary $\Gamma$-convergence given in Theorem
\ref{th:EGC.main} has a rather strong assumption, namely the \textsc{Mosco}
convergence of $(\eps,u)\mapsto \Phi^\eps_u(\cdot)$ in the space $V$. 
This assumption is too strong for a series of important applications. 
For instance, for the parabolic equation
\[
\big( 2+ \cos(x_1/\eps)\big) u' = \Div\big( A(\tfrac1\eps x)\nabla
u\big) \text{ in } \Omega \subset \R^d, \qquad u=0 \text{ on } \pl\Omega,
\]
we may choose the gradient structure $(\bfQ,\calE^\eps,\Psi^\eps)$
with 
\[
V=\rmL^2(\Omega),\quad \calE^\eps(u)=\int_\Omega \frac12 \nabla u
\cdot A(\tfrac 1\eps x) \nabla u \dd x, \quad 
\Psi^\eps(v) = \int_\Omega \frac{2{+}\cos(x_1/\eps)}2\: v(x)^2 \dd x.
\]
However, $\Psi^\eps$ $\Gamma$-converges to $\Psi_\text{harm}$ in the
weak topology of $\rmL^2(\Omega)$ while it $\Gamma$-converges to
$\Psi_\text{arith}$ in the strong topology.

Here we want present a generalized version of \cite{LieRei15?HCHT}
where evolutionary $\Gamma$-convergence was established under the
weaker assumption $\Psi^\eps \Gto \Psi^0$, i.e.\ $\Gamma$-convergence
in the strong topology only.

If we inspect the proof in the previous subsection, then we see that
the weak $\Gamma$-convergence of $\Psi^\eps_{u_\eps}$ was used only
once, namely for deriving the liminf estimate
\eqref{eq:EGC.liminf.a}. The point is that we only derived the weak
convergence $u'_{\eps_k} \weak u'_0$ in $\rmL^1(0,T;V)$. However, the
``weak'' convergence may have two origins, namely first due to
oscillations in time and second due to weak convergence of
$u'_\eps(t) \weak u'_0(t)$ in $V$. The idea in \cite{LieRei15?HCHT} is
to consider piecewise affine interpolants $u_{\eps,\tau}$ of $u_\eps$
for fixed time steps $\tau>0$. This averages potential oscillations in
time as $u'_{\eps,\tau}$ is piecewise constant. Moreover, we can use
the strong convergence of $u_{\eps_k}(t)\to u_0(t)$ which implies that
$ u'_{\eps_k,\tau}(t)\to u'_{0,\tau}(t)$ in $V$ for a.a.\ $[t\in
[0,T]$. Finally, the limit $\tau\to 0$ is done after the limit $\eps_k\to
0$ is already performed. 
  
Our precise assumptions, which replace \eqref{eq:Psi.eps.3}, are the
following:
\begin{subequations}
\label{eq:Psi.strongG}
 \begin{align}
 \nonumber 
 &\text{{\bfseries Uniform
     continuity.} \quad For all } R>0\\
 \nonumber 
 &\quad \exists\; \text{modulus of continuity } \omega_R\  
   \forall \, \eps\in [0,1]\ \forall\, u_1,u_2 \text{ with }G^\eps(u_j)\leq R\\
 &\label{eq:Psi.strongG.a}\quad   \forall \, v\in V: \quad 
 \big|\Psi^\eps_{u_1}(v) - \Psi^\eps_{u_2}(v)\big| \leq \omega_R(\|u_1{-}u_2\|_V)
  g_R(\|v\|_V) ,\\
 \nonumber 
 &\text{{\bfseries Strong $\Gamma$-convergence.} \quad For all $R>0$
   we have}\\ 
 & \label{eq:Psi.strongG.b}
  \qquad u_\eps \to u_0\text{ and } \sup \calE^\eps_t(u_\eps)\leq R
  \quad \Longrightarrow \quad \Psi_{u_\eps}^\eps \Gto \Psi^0_{u_0},  
 \end{align}
\end{subequations}
where $g_R$ is the coercivity function defined in
\eqref{eq:Psi.eps.2}.

\begin{cor}[Strong $\Gamma$-convergence for $\calE^\eps$ and
  $\Psi^\eps$] 
\label{co:StrongGa}
All results of Theorem \ref{th:EGC.main} remain true if assumption
\eqref{eq:Psi.eps.3} is replaced by \eqref{eq:Psi.strongG}. 
\end{cor}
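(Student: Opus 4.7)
The plan is to re-examine the proof of Theorem \ref{th:EGC.main} and identify the single place where the \textsc{Mosco} convergence \eqref{eq:Psi.eps.3} was used in an essential way, namely for the liminf estimate \eqref{eq:EGC.liminf.a} along the \emph{weakly} converging sequence $u'_{\eps_k}\weak u'_0$ in $\rmL^1(0,T;V)$. Indeed, the dual liminf \eqref{eq:EGC.liminf.b} still holds under \eqref{eq:Psi.strongG}: the equivalence \eqref{eq:Gcvg.sw*} turns the strong $\Gamma$-convergence $\Psi^{\eps_k}_{u_{\eps_k}(t)}\Gto\Psi^0_{u_0(t)}$ into the weak $\Gamma$-convergence $\Psi^{\eps_k,*}_{u_{\eps_k}(t)}\Gweak\Psi^{0,*}_{u_0(t)}$, which is precisely what is needed for the liminf along the weakly convergent sequence $B^{\eps_k}(t,u_{\eps_k}(t))-\zeta_{\eps_k}$ in the Young-measure argument. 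The estimate \eqref{eq:EGC.liminf.c} does not involve $\Psi^\eps$ and so also remains. Hence only \eqref{eq:EGC.liminf.a} must be re-established.

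To recover \eqref{eq:EGC.liminf.a} under the strong $\Gamma$-convergence alone, I would follow the idea of \textsc{Liero-Reichelt} \cite{LieRei15?HCHT}. For $\tau>0$ a divisor of $T$ and nodes $t_j=j\tau$, let $\wh u_{\eps_k,\tau}$ be the piecewise affine interpolant of $u_{\eps_k}$ at the $t_j$, with piecewise constant derivative $\wh u'_{\eps_k,\tau}(t)= \tau^{-1}\bigl(u_{\eps_k}(t_j){-}u_{\eps_k}(t_{j-1})\bigr)$ on $(t_{j-1},t_j)$, and analogously $\wh u_{0,\tau}$. From the uniform convergence $u_{\eps_k}\to u_0$ in $\rmC^0([0,T];V)$ (already established in \eqref{eq:EGC.cvg.a}) I obtain, for each fixed $\tau$, the pointwise \emph{strong} convergence
\begin{equation*}
  \wh u'_{\eps_k,\tau}(t) \;\to\; \wh u'_{0,\tau}(t) \quad \text{in } V \text{ for a.a.\ } t\in(0,T),
\end{equation*}
because the right-hand side is a fixed finite difference of the trajectories. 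Combined with \eqref{eq:Psi.strongG.b} applied at each $t$, the strong $\Gamma$-$\liminf$ estimate and \textsc{Fatou}'s lemma give
\begin{equation}\label{eq:plan.A}
  \int_0^T \Psi^0_{u_0(t)}\bigl(\wh u'_{0,\tau}(t)\bigr)\dd t \;\leq\; \liminf_{k\to\infty}\int_0^T \Psi^{\eps_k}_{u_{\eps_k}(t)}\bigl(\wh u'_{\eps_k,\tau}(t)\bigr)\dd t.
\end{equation}

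Next I would compare the integrand on the right of \eqref{eq:plan.A} with $\Psi^{\eps_k}_{u_{\eps_k}(t)}(u'_{\eps_k}(t))$. Writing $\teu_\tau(t)=t_{j-1}$ for $t\in(t_{j-1},t_j)$, Jensen's inequality at the \emph{fixed} state $u_{\eps_k}(\teu_\tau(t))$ and the convexity of $\Psi^{\eps_k}_u$ yield
\begin{equation*}
 \Psi^{\eps_k}_{u_{\eps_k}(\teu_\tau(t))}\bigl(\wh u'_{\eps_k,\tau}(t)\bigr) \;\leq\; \frac{1}{\tau}\int_{t_{j-1}}^{t_j} \Psi^{\eps_k}_{u_{\eps_k}(\teu_\tau(t))}\bigl(u'_{\eps_k}(s)\bigr)\dd s.
\end{equation*}
Applying \eqref{eq:Psi.strongG.a} twice --- once to swap the outer state $u_{\eps_k}(\teu_\tau(t))\leftrightarrow u_{\eps_k}(t)$ and once inside the $s$-integral for $u_{\eps_k}(\teu_\tau(t))\leftrightarrow u_{\eps_k}(s)$ --- and using the equicontinuity modulus $\omega_{\mathrm{eq}}(\tau)$ of $(u_{\eps_k})$ in $\rmC^0([0,T];V)$ furnished by uniform integrability of $(u'_{\eps_k})$, together with the a priori bounds $\int_0^T g_R(\|u'_{\eps_k}\|)\dd t\leq C_\Psi$ from \eqref{eq:EGC.gR.bounds}, I obtain
\begin{equation}\label{eq:plan.B}
 \int_0^T \Psi^{\eps_k}_{u_{\eps_k}(t)}\bigl(\wh u'_{\eps_k,\tau}(t)\bigr)\dd t \;\leq\; \int_0^T \Psi^{\eps_k}_{u_{\eps_k}(s)}\bigl(u'_{\eps_k}(s)\bigr)\dd s + 2\,\omega_R\bigl(\omega_{\mathrm{eq}}(\tau)\bigr)\,C_\Psi.
\end{equation}

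Combining \eqref{eq:plan.A} and \eqref{eq:plan.B} gives, for every fixed $\tau>0$,
\begin{equation*}
 \int_0^T \Psi^0_{u_0(t)}\bigl(\wh u'_{0,\tau}(t)\bigr)\dd t \;\leq\; \liminf_{k\to\infty}\int_0^T \Psi^{\eps_k}_{u_{\eps_k}(t)}\bigl(u'_{\eps_k}(t)\bigr)\dd t + 2\,\omega_R\bigl(\omega_{\mathrm{eq}}(\tau)\bigr)C_\Psi.
\end{equation*}
Letting $\tau\to 0$, I use that $\wh u'_{0,\tau}(t)\to u'_0(t)$ in $V$ for a.a.\ $t$ by the \textsc{Lebesgue} differentiation theorem applied to the $\rmL^1$-function $u'_0$, and that $\Psi^0_{u_0(t)}$ is lower semicontinuous, to conclude via \textsc{Fatou} that
\begin{equation*}
 \int_0^T \Psi^0_{u_0(t)}(u'_0(t))\dd t \;\leq\; \liminf_{\tau\to 0}\int_0^T \Psi^0_{u_0(t)}\bigl(\wh u'_{0,\tau}(t)\bigr)\dd t.
\end{equation*}
This recovers \eqref{eq:EGC.liminf.a}, and the remainder of the proof of Theorem \ref{th:EGC.main} (\textsc{Young}-measure analysis, \textsc{Fenchel-Young} and chain-rule arguments) applies verbatim. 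The main obstacle is the careful bookkeeping in step \eqref{eq:plan.B}: one needs that $g_R$ absorbs $\|\wh u'_{\eps_k,\tau}\|$ as well as $\|u'_{\eps_k}\|$ uniformly in $\eps_k$ and $\tau$, which requires a further Jensen step using the convexity of $g_R$ (or of its convex envelope), and that the equicontinuity modulus $\omega_{\mathrm{eq}}$ is uniform in $\eps_k$ --- both of which follow cleanly from \eqref{eq:EGC.gR.bounds} and the \textsc{de la Vall\'ee-Poussin} criterion.
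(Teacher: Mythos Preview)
Your proposal is correct and follows essentially the same route as the paper: both reduce to re-proving the liminf estimate \eqref{eq:EGC.liminf.a} via the \textsc{Liero--Reichelt} piecewise-affine interpolant trick, using Jensen's inequality on the frozen-state integrals, the uniform-continuity swap \eqref{eq:Psi.strongG.a} controlled by $\omega_R(\mu_\tau)\int g_R$, the strong $\Gamma$-liminf plus \textsc{Fatou} for the limit $k\to\infty$, and finally the limit $\tau\to0$. The only cosmetic differences are that you perform both state swaps before the $k$-limit (the paper does one before and one after), and that your appeal to the ``\textsc{Lebesgue} differentiation theorem'' for $\wh u'_{0,\tau}\to u'_0$ a.e.\ should, strictly speaking, be phrased as $\rmL^1$-convergence of the piecewise-constant averages followed by extraction of an a.e.-convergent subsequence, exactly as the paper does.
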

\begin{proof}
  To start with, we recall that the strong $\Gamma$-convergence of
  \eqref{eq:Psi.strongG.b} implies the weak $\Gamma$-convergence of
  the \textsc{Legendre-Fenchel} dual, i.e.\
  $\Psi^{\eps,*}_{u_\eps}\overset{\Gamma}\weak \Psi^{0,*}_u$, see
  \eqref{eq:Gcvg.sw*}.  Thus, the liminf estimate
  \eqref{eq:EGC.liminf.b} follows exactly as above.
 
Thus, it remains to find a new proof for the liminf estimate
\eqref{eq:EGC.liminf.a}. Using the notation
\[
J^\eps(u,v) := \int_0^T \Psi^\eps_{u(t)}\big( v(t)\big) \dd t
\]
we have to show $\liminf_{k\to \infty}
J^{\eps_k}(u_{\eps_k},u'_{\eps_k})\geq J^0(u_0,u'_0)$, where our
sequence $(u_{\eps_k})_k$ satisfies
\begin{align*}
\text{(a) } \| u_{\eps_k}{-}u_0\|_{\rmC^0([0,T];V)} \to 0, 
 \text{ (b) } \|u'_{\eps_k}{-}u'_0\|_{\rmL^1(0,T;V)} \to 0, 
\text{ (c) }  \int_0^T \!\!g_R\big(\| u'_{\eps_k}(t)\|\big) \dd t \leq C_g,
\end{align*}
where $R\geq \sup\set{\|u_{\eps_k}\|_\infty}{ k\in \N}$. 

For time steps $\tau=T/N>0$ with $N\in \N$ we define piecewise
constant and piecewise affine interpolants $\ol u^\tau_{\eps_k}$ and $\wh
u^\tau_{\eps_k}$ as in \eqref{eq:Approx.tau}. By the uniform convergence (a) we
have equi-continuity of the sequence $(u_{\eps_k})_k$, and hence
\[
\mu_\tau := \sup\set{\| u_{\eps_k}-\ol u_{\eps_k}\|_{\rmC^0([0,T];V)}
}{k \in \N} \ \to \ 0 \quad \text{for }\tau\to 0.
\]
With \eqref{eq:Psi.strongG.a} and (c) we obtain the lower bound 
\begin{align*}
J^{\eps_k}(u_{\eps_k},u'_{\eps_k}) &\geq
J^{\eps_k}(\ol u^\tau_{\eps_k},  u'_{\eps_k}) - \int_0^T\!\!
\omega_R\big(\|u_{\eps_k}\!{-} \ol u^\tau_{\eps_k}\|\big)
g_R\big(\|u'_{\eps_k}\|\big) \dd t \geq J^{\eps_k}(\ol
u^\tau_{\eps_k},  u'_{\eps_k}) - \omega_R(\mu_\tau) C_g.
\end{align*}
On the intervals $((n{-}1)\tau,n\tau)$ the integrand $\Psi^{\eps_k}_{\ol
u^\tau_{\eps_k}(t)}(\cdot)$ is independent of $t$ and convex. Hence,
we can apply Jensen's inequality and replace
$v_k(t)=u'_{\eps_k}(\cdot)$ by its average over this interval, which
is exactly 
\[
\frac1\tau \int_{(n-1)\tau}^{n\tau} u'_{\eps_k}(r) \dd r =
\frac1\tau\big(u_{\eps_k}(n\tau)-u_{\eps_k}((n{-}1)\tau)\big)=
\wh u^{\tau}_{\eps_k}\!{}'(t) \quad \text{for } t\in ((n{-}1)\tau,n\tau).
\]
Thus, we have the lower bound $J^{\eps_k}(u_{\eps_k},u'_{\eps_k}) \geq
J^{\eps_k}(\ol u^\tau_{\eps_k},\wh u^{\tau}_{\eps_k}\!{}')-
\omega_R(\mu_\tau)C_g$.\medskip

For $k\to \infty$ we have $ \ol u_{\eps_k}\to \ol u_0$ in $V$ and $
\wh u^\tau_{\eps_k}\!{}'\to \wh u^\tau_0{}'$ in $V$ a.e.\ in
$[0,T]$. Hence, we can exploit the liminf estimate of the strong
$\Gamma$-convergence $\Psi^\eps_{u_\eps} \Gto \Psi^0_{u_0}$. \textsc{Fatou}'s
lemma leads to
\begin{align*}
&\liminf_{k\to \infty} J^{\eps_k}(u_{\eps_k},u'_{\eps_k})\geq 
\liminf_{k\to \infty} J^{\eps_k}(\ol u^\tau_{\eps_k},\wh u^{\tau}_{\eps_k}\!{}')-
\omega_R(\mu_\tau)C_g\\
&\overset{\text{Fatou}}\geq  J^0(\ol u^\tau_0,\wh u^{\tau}_0{}')-
\omega_R(\mu_\tau)C_g \  \geq \ J^0(u_0,\wh
u^{\tau}_0{}')-2\omega_R(\mu_\tau)C_g ,
\end{align*}
where we used $\|u_0-\ol u^\tau_0\|_\infty \leq \mu_\tau$ for the last
step. 

Thus, using $\omega_R(\mu_\tau)\to 0$ for $\tau\to 0$ it remains to
show that $L:=\liminf_{\tau \to 0} J^0(u_0, \wh u^{\tau}_0{}') \geq
J^0(u_0,u'_0)$. Choose a subsequence $\tau_m$ such that $ J^0(u_0, 
\wh u^{\tau_m}_0{}')\to L$. We now use the well-known fact that 
$\wh u^{\tau_m}_0{}' \to u'_0$ in $\rmL^1(0,T;V)$, which implies that 
there exists a further subsequence (not relabeled) such that 
$\wh u^{\tau_m}_0{}'(t) \to u'_0(t)$ in $V$ a.e.\ in $[0,T]$. Moreover,
$\Psi^0_{u_0(t)}(\cdot):V\to [0,\infty)$ is continuous, because it is
convex and bounded from above by the \textsc{Legendre-Fenchel} dual of $\xi
\mapsto g_R(\|\xi\|_{V^*})$. This gives $\Psi^0_{u_0(t)}(\wh
u^{\tau_m}_0{}'(t)) \to \Psi^0_{u_0(t)}(u'_0(t))$ a.e.\ in $[0,T]$,
and \textsc{Fatou}'s lemma implies $L=\liminf_{m\to \infty} J^0(u_0, \wh
u^{\tau_m}_0{}') \geq J^0(u_0,u_0')$ as desired.
 
Altogether we have established $\liminf_{k\to \infty}
J^{\eps_k}(u_{\eps_k},u'_{\eps_k})\geq J^0(u_0,u'_0)$, and thus
Corollary \ref{co:StrongGa} is proved. 
\end{proof}

\section{Homogenization of reaction-diffusion systems}
\label{se:Example}

In this section we provide a nontrivial example that highlights the
applicability of our abstract existence theory as well as the theory
of evolutionary $\Gamma$-convergence. We refer to
\cite{MiReTh14TSHN, Reic16EEEE, Reic17CECI} and the references therein
for general homogenization results that are typically for semilinear
systems where the leading order terms are decoupled.  
Our example of a reaction
diffusion system is a general quasilinear parabolic system, where the
leading terms may be coupled but need to have a variational
structure. 

Our system for the vector $u(t,x)\in \R^I$ reads as follows:
\begin{align}
\nonumber
 \hspace*{1em}
 A^\eps(x,u(t,x)) \pl_t u(t,x) &= \Div \Big(\pl_{\nabla u}
  F^\eps \!\,\big(x,u(t,x),\nabla u(t,x)\big) \Big) \\
\label{eq:ExaPDE}
 & \quad - \pl_u F^\eps\!\,(x,u(t,x),\nabla
  u(t,x)) + b^\eps(x,t,u(t,x)) &\text{in }&\Omega,\\
\nonumber
 0&=\pl_{\nabla u}
  F^\eps\!\,\big(x,u(t,x),\nabla u(t,x)\big)\nu(x)&\text{on }&\pl\Omega.
\hspace*{1em}
\end{align}
Generally we assume that $\Omega\subset \R^d$ is a
bounded domain with Lipschitz boundary $\pl\Omega$. For simplicity,
we have imposed Neumann boundary conditions only, but more general
conditions including Dirichlet or Robin boundary conditions could be
used as well.  
 
We first summarize the needed assumptions on the functions $A^\eps$,
$F^\eps$, and $b^\eps$, then show that these assumptions imply the
once needed for the existence theory in Section \ref{se:ExistResult},
and finally discuss under which conditions we have evolutionary
$\Gamma$-convergence for $\eps\to 0$.

\subsection{The existence result}
\label{su:Exa.Exist}

For the matrix $A^\eps(x,u)\in \R^{I\ti I}_\text{sym}:=\set{A\in
  \R^{I\ti I}}{A=A^\top}$  we make the assumption
\begin{subequations}
\label{eq:Exa.Ass}
\begin{align}
 \label{eq:Exa.Ass.A1} 
 &\forall\,\eps \in [0,1]:\quad  A^\eps:\Omega\ti \R^I \to
 \R_\text{sym}^{I\ti I}\text{ is a \textsc{Carath\'eodory} function}, \\ 
 \label{eq:Exa.Ass.A2} 
 &\exists\, C_A>0\  \forall\,\eps \in [0,1]\ \forall\, x\in
 \Omega\ \forall\, u,v\in  \R^I:\ \  \frac1{C_A} |v|^2 \leq \langle
 A^\eps(x,u)v , v\rangle \leq C_A|v|^2. 
\end{align}
Here $G:\Omega\ti \R^M\to \R^N$ is called a \textsc{Carath\'eodory} function,
if $x\mapsto G(x,z) $ is measurable for all $z\in \R^m$ and  $z
\mapsto G(x,z)$ is continuous for a.a.\ $x\in \Omega$. 

For simplicity, we will assume that the functions
$F^\eps(x,\cdot,\cdot)$ are convex, but much weaker conditions would be
possible (e.g.\ $\lambda$-convexity in $u$ or poly-convexity in $U=\nabla u$). 
\begin{align}
 \label{eq:Exa.Ass.F1} 
 &\forall\,\eps \in [0,1]:\quad  F^\eps:\Omega\times (\R^I\ti \R^{I\ti d}) \to
 \R\text{ is a \textsc{Carath\'eodory} function}, \\ 
 \label{eq:Exa.Ass.F2} 
 &\forall\,\eps \in [0,1]\ \forall_\text{a.a.}x\in \Omega:\quad
 F^\eps(x,\cdot,\cdot):\R^I\ti \R^{I\ti d} \to
 \R\text{ is convex} , \\
 \nonumber 
 &\exists\, C_F>0 \ \exists\, p,q>1\  \forall\,\eps \in [0,1]\ \forall\,
 (x,u,U) \in \Omega\ti \R^I\ti \R^{I\ti d}:\\
 &\label{eq:Exa.Ass.F3} \qquad  F^\eps (x,u,U)\geq
  C_F\big(1+|u|^q+|U|^p\big). 
\end{align}
For the non-gradient terms $b^\eps$ we impose the following conditions:
\begin{align}
 \label{eq:Exa.Ass.B1} 
 &\forall\,\eps \in [0,1]:\quad  b^\eps:\Omega\times ([0,T]\ti
 \R^I) \to \R^I \text{ is a \textsc{Carath\'eodory} function}, \\ 
 \nonumber
   &\exists\, h\in \rmL^2(\Omega),\;  C_B>0, \;  r>1\
 \forall\,(\eps,t,x,u) \in [0,1]\ti \Omega\ti [0,T]\ti\R^I:\\ 
 \label{eq:Exa.Ass.B2} & \qquad |b^\eps(x,t,u)| \leq h(x)+ C_B |u|^r.
\end{align}
\end{subequations}

We choose basic space $V= \rmL^2(\Omega;\R^I)$,
the energy functionals 
\[
 \calE^\eps(u)= \left\{ \ba{cl}\ds\int_\Omega F^\eps(x,u(x),\nabla
  u(x))\dd x &\text{for }u \in \rmW^{1,p}(\Omega;\R^I),\\ 
   \infty& \text{otherwise,}\ea\right. 
\]
and the dissipation potentials 
\[
\Psi^\eps_u(v):= \int_\Omega \frac12 \langle A^\eps(x,u(x)) v(x) ,
v(x)\rangle \dd x.
\]
Thus, the perturbed gradient systems $\PG^\eps=(V,\calE^\eps,
\Psi^\eps,b^\eps)$ is fully specified, and we want to apply our
abstract theory. Before doing so, we note that in the our conditions
the exponent $q$ appears three times: (i) the first relation in
\eqref{eq:Coeff.Rel} below implies $\rmW^{1,p}(\Omega) \subset
\rmL^q(\Omega)$, (ii) the coercivity \eqref{eq:Exa.Ass.F3} of $F$ asks
for the lower bound $C_F|u|^q$, and (iii) the second relation in
\eqref{eq:Coeff.Rel} says that $B(\cdot , u(\cdot))$ is controlled by
$C(1{+}\|u\|_q^q)$.

\begin{pro}\label{pr:Exa.Exist} Let the functions $A^\eps$, $F^\eps$,
  and $b^\eps$ satisfy the conditions \eqref{eq:Exa.Ass}, where the
  coefficients $p$, $q$, and $r$ satisfy the relations
  \begin{equation}
   \label{eq:Coeff.Rel}
    1-\frac dp> -\frac{d}{q} \quad \text{ and } \quad q \geq 2r. 
\end{equation}

Then, for each initial condition
  $u^0_\eps\in \rmL^2(\Omega;\R^I)$ with $\calE^\eps(u_\eps^0)<0$ there is a
  solution $u_\eps:[0,T]\to \rmL^2(\Omega;\R^I)$ of \eqref{eq:ExaPDE} 
 such that $u_\eps \in \rmH^1(0,T;\rmL^2(\Omega)) \cap
 \rmC^0_\text{weak}([0,T]; \rmW^{1,p}(\Omega))$.  
\end{pro}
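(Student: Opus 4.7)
The plan is to apply the existence Theorem \ref{th:MainExist} to the perturbed gradient system $\PG^\eps = (V,\calE^\eps,\Psi^\eps,b^\eps)$ with $V = \rmL^2(\Omega;\R^I)$, and then to extract the announced regularity from the energy-dissipation balance \eqref{eq:EDB}. Since $\calE^\eps$ is time-independent, the constant-domain property \ref{eq:cond.E.1} (with $D^\eps = \{u\in \rmW^{1,p}(\Omega;\R^I): F^\eps(\cdot,u,\nabla u)\in \rmL^1(\Omega)\}$) and the energetic control \ref{eq:cond.E.3} (trivially, as $\pl_t\calE^\eps\equiv 0$) are automatic. Lower semicontinuity on $V$ and the strong-weak closedness \ref{eq:cond.E.5} follow from the convexity hypothesis \eqref{eq:Exa.Ass.F2} via the standard argument recalled in Remark \ref{re:SWClosGamma}, while the chain rule \ref{eq:cond.E.4} is the classical one for convex integral functionals over $\rmW^{1,p}$ paired with superlinearly coercive dissipation. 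Compact sublevels \ref{eq:cond.E.2} come from the coercivity \eqref{eq:Exa.Ass.F3}, which forces boundedness in $\rmW^{1,p}(\Omega;\R^I)$, combined with the first inequality of \eqref{eq:Coeff.Rel}, yielding the compact embedding $\rmW^{1,p}(\Omega)\Subset \rmL^q(\Omega)\hookrightarrow \rmL^2(\Omega)$ (the last embedding uses $q\geq 2r>2$ and $|\Omega|<\infty$).

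For the dissipation potential, $\Psi^\eps_u(v)=\tfrac12\int_\Omega\langle A^\eps(x,u)v,v\rangle\dd x$ is a continuous quadratic form on $V$ with single-valued subdifferential $\{A^\eps(\cdot,u)v\}$, so \ref{eq:Psi.1} holds. The uniform ellipticity \eqref{eq:Exa.Ass.A2} gives $\Psi^\eps_u(v)\geq\tfrac{1}{2C_A}\|v\|_2^2$ and, by dualization, $\Psi^{\eps,*}_u(\xi)\geq\tfrac{1}{2C_A}\|\xi\|_2^2$ uniformly in $(\eps,u)$, establishing \ref{eq:Psi.2}. For \textsc{Mosco} continuity \ref{eq:Psi.3}, given $u_n\to u$ in $V$ with bounded energy, I pass to a subsequence converging a.e.\ in $\Omega$; by the Carath\'eodory property \eqref{eq:Exa.Ass.A1} the coefficients $A^\eps(\cdot,u_n)\to A^\eps(\cdot,u)$ pointwise a.e.\ under the uniform $\rmL^\infty$-bound \eqref{eq:Exa.Ass.A2}. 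Constant recovery sequences then converge by dominated convergence, and the weak liminf estimate for $v_n\weak v$ is obtained by writing $v_n=v+w_n$ with $w_n\weak 0$ and expanding
\begin{align*}
\Psi^\eps_{u_n}(v_n) = \tfrac12\!\int\langle A^\eps(u_n)v,v\rangle\dd x + \!\int\langle A^\eps(u_n)v,w_n\rangle\dd x + \tfrac12\!\int\langle A^\eps(u_n)w_n,w_n\rangle\dd x,
\end{align*}
where the three terms converge to $\Psi^\eps_u(v)$, to $0$ (since $A^\eps(\cdot,u_n)v\to A^\eps(\cdot,u)v$ strongly in $\rmL^2$ while $w_n\weak 0$), and are non-negative, respectively.

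For the non-variational perturbation, \ref{eq:B.1} is the continuity of the Nemytskij operator $u\mapsto b^\eps(\cdot,t,u)$, which by \eqref{eq:Exa.Ass.B1}-\eqref{eq:Exa.Ass.B2} maps bounded subsets of $\rmL^q(\Omega;\R^I)$ continuously into $\rmL^{q/r}(\Omega;\R^I)\hookrightarrow \rmL^2(\Omega;\R^I)=V^*$, thanks to $q\geq 2r$ and $|\Omega|<\infty$; joint continuity in $(t,u)$ is inherited from the Carath\'eodory property. For \ref{eq:B.2}, the fact that $\Psi^\eps_u$ is quadratic implies $\Psi^{\eps,*}_u(\xi/c)=c^{-2}\Psi^{\eps,*}_u(\xi)$, so
\begin{align*}
 c\,\Psi^{\eps,*}_u\!\left(\tfrac{1}{c} b^\eps(\cdot,t,u)\right)
 =\tfrac{1}{c}\,\Psi^{\eps,*}_u\!\big(b^\eps(\cdot,t,u)\big)
 \leq \tfrac{C_A}{2c}\|b^\eps(\cdot,t,u)\|_2^2.
\end{align*}
The growth \eqref{eq:Exa.Ass.B2} together with the embedding $\rmL^q(\Omega)\hookrightarrow \rmL^{2r}(\Omega)$ (valid since $q\geq 2r$) yields $\|b^\eps(\cdot,t,u)\|_2^2\leq C(1+\|u\|_q^{2r})$. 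Since $2r\leq q$ one has $\|u\|_q^{2r}\leq 1+\|u\|_q^q$, and \eqref{eq:Exa.Ass.F3} gives $\|u\|_q^q\leq C_F^{-1}\calE^\eps(u)$, so \ref{eq:B.2} holds with a suitable $\beta>0$ and $c\in(0,1)$.

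The principal obstacle is the Mosco continuity \ref{eq:Psi.3}: the state dependence of $\Psi^\eps$ enters through a coefficient that is only a.e.\ convergent, so the weak liminf estimate requires the quadratic expansion above rather than naive semicontinuity. A secondary point is the exponent bookkeeping, where the two relations in \eqref{eq:Coeff.Rel} are tailored precisely to reconcile the Sobolev compactness (first inequality) with the energy control of the non-variational term (second inequality). Once all assumptions are verified, Theorem \ref{th:MainExist} delivers $u_\eps\in\AC([0,T];V)$ satisfying \eqref{eq:EDB}, and the PDE \eqref{eq:ExaPDE} is recovered by identifying the singleton subdifferentials $\pl\Psi^\eps_u(v)=\{A^\eps(\cdot,u)v\}$ and the convex subdifferential of $\calE^\eps$ (Euler-Lagrange of the integral functional, which is legitimate by \eqref{eq:Exa.Ass.F2}). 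The uniform lower bound $\Psi^\eps_u(v)\geq\tfrac{1}{2C_A}\|v\|_2^2$ upgrades $\int_0^T\Psi^\eps_{u_\eps(r)}(u_\eps'(r))\dd r<\infty$ into $u_\eps'\in\rmL^2(0,T;\rmL^2)$, giving $u_\eps\in\rmH^1(0,T;\rmL^2)$; combined with the uniform energy bound $\sup_t\calE^\eps(u_\eps(t))<\infty$ and the strong continuity of $u_\eps$ into $V$, a standard argument yields $u_\eps\in\rmC^0_{\text{weak}}([0,T];\rmW^{1,p})$.
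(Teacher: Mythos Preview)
Your proof is correct and follows the paper's overall strategy of verifying all hypotheses of Theorem \ref{th:MainExist}; the treatment of the energy assumptions, the perturbation $B^\eps$, and the final regularity extraction is essentially the same (the paper is in fact terser on these points).

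The one substantive difference is how you establish the \textsc{Mosco} continuity \ref{eq:Psi.3}. The paper does not use your quadratic expansion $v_n=v+w_n$; instead it invokes \textsc{Attouch}'s characterization via \textsc{Moreau--Yosida} regularizations, namely $\Psi^\eps_{u_n}\Mto\Psi^\eps_u$ if and only if $\Psi^{\eps,\lambda}_{u_n}(v)\to\Psi^{\eps,\lambda}_u(v)$ for every $v\in V$ and $\lambda>0$, and observes that the regularized potential is again quadratic with matrix $\lambda A^\eps(A^\eps{+}\lambda I)^{-1}$, so pointwise convergence follows from $A^\eps(\cdot,u_n)\to A^\eps(\cdot,u)$ a.e.\ and dominated convergence. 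Your direct argument is more elementary and self-contained, exploiting the quadratic structure (in particular, the cross term $\int\langle A^\eps(u_n)v,w_n\rangle\dd x\to 0$ uses the strong $\rmL^2$ convergence of $A^\eps(\cdot,u_n)v$ against $w_n\weak 0$); the paper's route is shorter once the external result is granted and packages the weak and strong liminf/limsup estimates into a single pointwise check.
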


The proof is a consequence of our abstract existence result in Theorem
\ref{th:MainExist}.
We easily find the \textsc{Legendre-Fenchel} dual $\Psi^{\eps,*}_u(\xi)=
\int_\Omega \frac12 \langle \xi(x),(A^\eps(x,u(x)))^{-1} \xi(x)
\rangle \dd x$.  Clearly, \ref{eq:Psi.1} holds and we have the equi-coercivities
\[
\Psi^\eps_u(v)\geq \frac1{2C_A}\|v\|^2_V \quad \text{and} \quad
\Psi^{\eps,*}_u(\xi)\geq \frac1{2C_A}\|\xi\|^2_{V^*},
\] 
which imply the desired superlinearities \ref{eq:Psi.2}. 
Finally, the \textsc{Mosco} convergence
$\Psi^\eps_{u_n}\Mto \Psi^\eps_u$ (here $\eps>0$ is still fixed)
follows since $u_n\to u$ in $V$ implies that $A^\eps(\cdot,u_n(\cdot))
\to A^\eps(\cdot,u(\cdot))$ a.e.\ in $\Omega$ along suitable
subsequences. To see that this is sufficient for
\textsc{Mosco}-convergence, we use the \textsc{Moreau-Yosida}
regularizations
\[
\Psi^{\eps,\lambda}_{u} (v):= \inf\Bigset{
  \Psi^\eps_{u_n}(w)+\frac\lambda{2}\|w{-}v\|_{\rmL^2}^2}{ w\in
  \rmL^2(\Omega;\R^I)} 
\]
where $\lambda>0$. It is easy to see that $\Psi^{\eps;\lambda}_u$ is
still quadratic, but now with the matrix $\lambda
A^\eps(A^\eps{+}\lambda I)^{-1}$. By \cite[Thm.\,3.26]{Atto84VCFO} we
have $\Psi^\eps_{u_n}\Mto \Psi^\eps_u$ if and only if for all $v\in
V=\rmL^2(\Omega;\R^I)$ and all $\lambda>0$ we have the pointwise
convergence
$\Psi^{\eps,\lambda}_{u_n}(v)\to\Psi^{\eps,\lambda}_u(v)$. But this
follows immediately by the boundedness of $A^\eps$ and
\textsc{Lebesgue}'s dominated convergence theorem.  Hence,
\ref{eq:Psi.3} is shown as well.

The energy functionals $\calE^\eps$ are convex and independent of
time. Hence \ref{eq:cond.E.1} and \ref{eq:cond.E.3} hold trivially. By the coercivity of
$F^\eps$ we obtain the coercivity of $\calE^\eps$, namely 
\begin{equation}
  \label{eq:E.coercive}
  \calE^\eps(u) \geq \int_\Omega C_F\big(1 +|u|^q+|\nabla u|^p
\big) \dd x \geq \wt c \|u\|_{\rmW^{1,p}}^{\min\{p,q\}}
-\wt C,  
\end{equation}
such that sublevels are bounded in $\rmW^{1,p}(\Omega;\R^I)$. Because
this space is compactly embedded in $V=\rmL^2(\Omega;\R^I)$ by
assumption \eqref{eq:Coeff.Rel}, we conclude that \ref{eq:cond.E.2}
holds.  The chain rule \ref{eq:cond.E.4} and the weak-strong
closedness of the \textsc{Fr\'echet} subdifferential (which is the
same as the convex subdifferential) follows by convexity, see Remark
\ref{re:SWClosGamma} or \cite{MiRoSa13NADN}.

We now set $B^\eps(t,u)(x)=b^\eps(x,t,u(x))$ and obtain the continuity
\ref{eq:B.1} simply from the continuity of $b^\eps(x,\cdot,\cdot)$ and
$2r\leq q$. The energy control \ref{eq:B.2} follows from
\eqref{eq:Exa.Ass.F3} and the second condition in
\eqref{eq:Coeff.Rel}.  Thus, all the abstract assumptions of Theorem
\ref{th:MainExist} are established, and Proposition \ref{pr:Exa.Exist}
is established.

\subsection{The homogenization result}
\label{su:Exa.Homog}

We want to apply the evolutionary $\Gamma$-convergence of Section
\ref{se:EGC} for homogenization, i.e.\ we assume that the
$x$-dependence of $A^\eps$, $F^\eps$, and $b^\eps$ is of oscillatory
type, namely 
\begin{equation}
 \label{eq:Coeff.2s}
  A^\eps (x,u)=\bbA(\inveps x, u),\quad F^\eps(x,u,U)=\bbF(\inveps x, u,
  U), \quad b^\eps(x,t,u)= \bbB(\inveps x , u),
\end{equation}
where the functions $\bbA$, $\bbF$, and $\bbB$ are assumed to be
1-periodic in all directions, i.e.\ $\bbG(y{+}k)=\bbG(y)$ for all
$y\in \R^d$ and $k\in \Z^d$.  

For the quadratic dissipation potentials $\Psi^\eps_u$ we have the
following $\Gamma$-convergences:
\begin{equation}
  \label{eq:HomPsiGaCvg}
 (\eps_n, u_n) \to (0,u) \in \R \ti \rmL^2(\Omega;\R^n) \ 
\Longrightarrow \ \Big(\Psi^{\eps_n}_{u_n} \Gweak \Psi^\text{harm}_u
\text{ and } \Psi^{\eps_n}_{u_n} \Gto \Psi^\text{aver}_u \Big), 
\end{equation}
where the harmonic-mean functional $\Psi^\text{harm}_u $ and the
average functional $\Psi^\text{aver}_u$ are defined via
\begin{align*}
\Psi^\text{harm}_u(v) &=\int_\Omega \frac12\langle A^\text{harm}(u(x))
v(x),v(x)\rangle \dd x &\text{with }&
A^\text{harm}(u)^{-1}=\int_{(0,1)^d} \bbA(y,u)^{-1}\dd y,\\
\Psi^\text{aver}_u(v) &=\int_\Omega \frac12\langle A^\text{aver}(u(x))
v(x),v(x)\rangle \dd x &\text{with }&
A^\text{aver}(u)=\int_{(0,1)^d} \bbA(y,u)\dd y.
\end{align*} 
The strong $\Gamma$-convergence $\Psi^{\eps_n}_{u_n} \Gto
\Psi^\text{aver}_u$ follows simply from the pointwise convergence
$\Psi^{\eps_n}_{u_n}(v) \to \Psi^\text{aver}_u(v)$ for all $v$ and the
equi-\textsc{Lipschitz} continuity. The weak $\Gamma$-convergence
$\Psi^{\eps_n}_{u_n} \Gweak \Psi^\text{aver}_u$ follows by
\eqref{eq:Gcvg.sw*} and \textsc{Legendre-Fenchel} transform as
$\Psi_u^{\eps,*}$ is given in terms of $(A^\eps)^{-1}$, see also
\cite[Exa.\,2.36]{Brai02GCB}.

In particular, we see that \textsc{Mosco} convergence only holds for the case
that the harmonic and the arithmetic mean are equal, which means that 
$\bbA(y,u)$ has to be independent of $y$. 

For the energy functional $\calE^\eps$ we can rely on the general
theory of homogenization as surveyed in
\cite{Brai06HGC}. Using the uniform coercivity
\eqref{eq:E.coercive} we obtain weak $\Gamma$-convergence in
$\rmW^{1,p}(\Omega;\R^I)$ and, by the compact embedding, strong
$\Gamma$-convergence in $V=\rmL^2(\Omega;\R^I)$ towards the limit
\begin{align*}
&\calE^0(u)=\int_\Omega F^\text{hom}(u(x),\nabla u(x))\dd x \text{ with }\\
&F^\text{hom}(u,U):= \min\Bigset{\int_{(0,1)^d} \bbF(y,u,U{+}\nabla\Phi(y)) \dd
  y}{ \Phi\in \rmW^{1,p}_\text{per}((0,1)^d;\R^I)},
\end{align*} 
see \cite[Thm.\,5.1, pp.\,135]{Brai06HGC}.  Of course, $\calE^0:V\to
[0,\infty]$ is a again a convex and lower semicontinuous functional.
Finally, setting 
\[
B^0(t,u): x \mapsto b^\text{aver}(t,u(x)) \quad \text{ with }
b^\text{aver}(t,u)=\int_{(0,1)^d} \bbB(y,u)\dd y   
\]
we obtain the desired convergence $B^{\eps_n}(t_n,u_n)\to B^0(t,u)$ if
$(\eps_n,t_n,u_n)\to (0,t,u)$ in $[0,1]\ti [0,T]\ti V$. 

Hence, we see that Theorem \ref{th:EGC.main}, which is the
main result on evolutionary $\Gamma$-convergence, is only applicable
if we have the \textsc{Mosco} convergence $\Psi^{\eps_k}_{u_k}\Mto \Psi^0_u$, which
means  $\Psi^\text{harm}_u = \Psi^\text{aver}_u$. Thus, we need to
assume that $\bbA(y,u)$ does not depend on the microscopic periodicity
variable $y \in \R^d/_{\Z^d}$. In summary we obtain the following
result. 

\begin{thm}[Homogenization I] \label{th:Hom.I}
Consider the perturbed gradient system $\PG^\eps=(\rmL^2(\Omega;\R^I),
\calE^\eps, \Psi^\eps, b^\eps)$ be 
given as above. Assume that \eqref{eq:Exa.Ass} holds and that
\eqref{eq:Coeff.2s} holds with $\bbA$ independent of the variable $y=\inveps 
x$, the we have evolutionary $\Gamma$-convergence in the sense of
Theorem \ref{th:EGC.main} to the perturbed gradient system $(\rmL^2(\Omega;\R^I), \calE^0,
\Psi^\text{aver}, b^\text{aver})$, i.e.\ solutions $u_\eps$ of the
reaction-diffusion system \eqref{eq:ExaPDE} converge to solutions of
the homogenized system 
\begin{align}
\nonumber
 \hspace*{1em}
 A^\text{aver}(u) \pl_t u &= \Div \big(\pl_{\nabla u}
  F^\text{hom} \!\,(u,\nabla u) \big) 
- \pl_u F^\text{hom}\!\,(u,\nabla u) + 
     b^\text{aver}(t,u) &\text{in }&\Omega,\\
\label{eq:ExaPDE.eff}
 0&=\pl_{\nabla u}
  F^\text{hom}\!\,(u,\nabla u)\nu &\text{on }&\pl\Omega.
\hspace*{1em}
\end{align}
\end{thm}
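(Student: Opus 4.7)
The strategy is to deduce Theorem~\ref{th:Hom.I} as a direct corollary of the abstract evolutionary $\Gamma$-convergence result in Theorem~\ref{th:EGC.main}, applied to the family $\PG^\eps=(\rmL^2(\Omega;\R^I),\calE^\eps,\Psi^\eps,b^\eps)$. The existence of approximating solutions $u_\eps$ for $\eps>0$ is already given by Proposition~\ref{pr:Exa.Exist}, so only the hypotheses \eqref{eq:cond.E.eps}, \eqref{eq:Psi.eps} and \eqref{eq:Beps} need to be checked, and the limit equation identified.

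For the energy functionals: \ref{eq:cond.E.eps.1} holds with $D^\eps=\rmW^{1,p}(\Omega;\R^I)$, and \ref{eq:cond.E.eps.3} is trivial by time-independence. The uniform coercivity \eqref{eq:E.coercive} together with the compact embedding $\rmW^{1,p}(\Omega)\hookrightarrow\rmL^2(\Omega)$ guaranteed by \eqref{eq:Coeff.Rel} yields the equi-compactness of sublevels \ref{eq:cond.E.eps.2}. Convexity of $\calE^\eps$ gives both the chain rule \ref{eq:cond.E.eps.4} and, via Remark~\ref{re:SWClosGamma}, the strong-weak closedness \ref{eq:cond.E.eps.5}, provided the strong $\Gamma$-convergence $\calE^\eps\Gto\calE^0$ has been established in $V$. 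The latter is the classical periodic homogenization result for convex integrands with standard growth (\cite{Brai06HGC}), which yields $\calE^\eps\Gweak\calE^0$ weakly in $\rmW^{1,p}$ and hence strongly in $V=\rmL^2$ thanks to the compact embedding; this in particular gives the liminf condition \eqref{eq:cond.E.epsGamma}.

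For the dissipation potentials, \ref{eq:Psi.eps.1} is clear, and the uniform coercivities \ref{eq:Psi.eps.2} of both $\Psi^\eps_u$ and $\Psi^{\eps,*}_u$ follow from the uniform ellipticity \eqref{eq:Exa.Ass.A2} with $g_R(s)=s^2/(2C_A)$. The crucial point is the Mosco convergence \ref{eq:Psi.eps.3}: here the hypothesis that $\bbA$ is independent of $y$ is essential because, by \eqref{eq:HomPsiGaCvg}, Mosco convergence holds if and only if $A^\text{harm}=A^\text{aver}$, which forces $\bbA$ to be constant in $y$. Under this assumption $A^\eps(x,u)=\bbA(u)$ depends only on $u$; for a sequence $(\eps_n,u_n)\to(0,u)$ in $[0,1]\times V$ with $\ulG(u_n)\leq R$, we pass to a subsequence with $u_n(x)\to u(x)$ a.e., so $\bbA(u_n(\cdot))\to \bbA(u(\cdot))$ a.e., and the Moreau-Yosida argument from the proof of Proposition~\ref{pr:Exa.Exist} (using \cite[Thm.\,3.26]{Atto84VCFO}) yields $\Psi^{\eps_n}_{u_n}\Mto\Psi^\text{aver}_u=\Psi^0_u$. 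For the perturbation, \ref{eq:Beps.1} follows from the Carath\'eodory property of $\bbB$ together with the growth \eqref{eq:Exa.Ass.B2} and the embedding $\rmL^q\hookrightarrow\rmL^{2r}$ provided by $q\geq 2r$; and \ref{eq:Beps.2} follows by bounding $\Psi^{\eps,*}_u(b^\eps(t,u))\leq C\|b^\eps(\cdot,t,u(\cdot))\|_{\rmL^2}^2$ using ellipticity, estimating this via $h$ and $|u|^r$ with $2r\leq q$, and absorbing into the coercivity lower bound $\calE^\eps(u)\geq c\|u\|_{\rmL^q}^q-C$.

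With all abstract hypotheses verified, Theorem~\ref{th:EGC.main} yields a subsequence $\eps_k\to 0$ such that $u_{\eps_k}\to u$ in $\rmC^0([0,T];V)$, $u'_{\eps_k}\weak u'$ in $\rmL^1(0,T;V)$, and $u$ solves $\PG^0=(\rmL^2(\Omega;\R^I),\calE^0,\Psi^\text{aver},b^\text{aver})$, where well-preparedness of the initial datum is arranged by choosing a recovery sequence for $u^0$ in $\calE^\eps\Gto\calE^0$. Rewriting the abstract inclusion $b^\text{aver}(t,u)\in\partial\Psi^\text{aver}_u(u')+\partial\calE^0(u)$ in strong PDE form gives precisely \eqref{eq:ExaPDE.eff}, since convexity and the standard characterization of the subdifferential of the homogenized integrand $F^\text{hom}$ permit the identification as the Euler-Lagrange equation with Neumann boundary condition. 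The main obstacle in this plan is the Mosco convergence of the state-dependent dissipation potentials, which is the precise reason the hypothesis $\bbA=\bbA(u)$ is imposed; to allow genuine $y$-dependence of $\bbA$, one would instead invoke Corollary~\ref{co:StrongGa}, at the price of checking the equi-continuity \eqref{eq:Psi.strongG.a} on sublevels.
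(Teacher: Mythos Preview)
Your proposal is correct and takes essentially the same approach as the paper: the theorem is obtained as a direct application of Theorem~\ref{th:EGC.main} once the abstract hypotheses \eqref{eq:cond.E.eps}, \eqref{eq:Psi.eps}, \eqref{eq:Beps} are verified using the periodic homogenization result for $\calE^\eps$, the observation that \textsc{Mosco} convergence of $\Psi^\eps_{u_\eps}$ forces $\bbA$ to be independent of $y$, and the discussion preceding the theorem statement in Section~\ref{su:Exa.Homog}. Your write-up is in fact somewhat more explicit than the paper's, which leaves the verification largely implicit in the running text and states the theorem as a summary.
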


The case where $\bbA(y,u)$ depends on $y\in \R^d/_{/\Z^d}$ is more
difficult. Under additional assumptions we will be able to use the
improved theory developed in Corollary \ref{co:StrongGa}, as we can
use $\Psi^{\eps_k}_{u_k}\Gto \Psi^\text{aver}_u$, which gives
assumption \eqref{eq:Psi.strongG.b}. However, we need to
establish the uniform continuity \eqref{eq:Psi.strongG.a}. For this we
note that $G^\eps(u_j)\leq R$ implies $\| u_j\|_{\rmW^{1,p}}\leq
C_R$. Now, assuming $p>d$ we first observe $\|u_j\|_{\rmL^\infty}\leq
\wt C_R<\infty$, and a Gagliardo-Nirenberg estimate yields  
\[
\|u_1{-}u_2\|_{\rmL^\infty} \leq C_{\rmG\rmN}
\|u_1{-}u_2\|_{\rmL^2}^\theta \|u_1{-}u_2\|^{1-\theta}_{\rmW^{1,p}}
\leq  C_{\rmG\rmN} (2C_R)^{1-\theta} \|u_1{-}u_2\|_{\rmL^2}^\theta.
\]
Now assuming the uniform continuity 
\begin{equation}
  \label{eq:bbAUnifCont}
\begin{aligned}
&\forall\, \rho>0\ \exists\text{ modulus of contin.\,} \omega_\rho\
\forall\, y\in (0,1)^d\ \forall\, u_j\in B_\rho(0)\subset \R^I:\\
&\qquad 
\big|  \bbA(y,u_1)-\bbA(y,u_2)\big| \leq \omega_\rho(|u_1{-}u_2|) ,
\end{aligned}
\end{equation}
we can estimate the difference $ \Psi^\eps_{u_1}(v)-
\Psi^\eps_{u_2}(v)$ of the dissipation potentials pointwise under the
integral and obtain
\[
\forall\, u_j\in V \text{ with } G^\eps(u_j)\leq R: \quad 
\big|\Psi^\eps_{u_1}(v)- \Psi^\eps_{u_2}(v)\big| \leq \omega_{\wh
  C_R} \big( C_{\rmG\rmN} (2C_R)^{1-\theta}
\|u_1{-}u_2\|_{\rmL^2}^\theta\big) \| v\|_{\rmL^2}^2. 
\]
This is exactly the desired uniform continuity \eqref{eq:Psi.strongG.a}.
Thus, Corollary \ref{co:StrongGa} is applicable under the additional
assumption that $p>d$ and that \eqref{eq:bbAUnifCont} holds, which
gives our second homogenization result, where $\bbA$ now may depend
periodically on $y=\inveps x$.

\begin{thm}[Homogenization II] \label{th:Hom.II}
Consider the perturbed gradient systems $(\rmL^2(\Omega;\R^I), \calE^\eps, \Psi^\eps, b^\eps)$
given as above. Assume that \eqref{eq:Exa.Ass} holds with $p>d$ and that
\eqref{eq:Coeff.2s} together with \eqref{eq:bbAUnifCont}. Then all the
conclusions of Theorem \ref{th:Hom.I} remain true. 
\end{thm}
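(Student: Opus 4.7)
The plan is to apply Corollary~\ref{co:StrongGa} to the family $\PG^\eps=(\rmL^2(\Omega;\R^I),\calE^\eps,\Psi^\eps,b^\eps)$. Most prerequisites are already in place from Section~\ref{su:Exa.Exist}: the energy assumptions \eqref{eq:cond.E.eps} hold uniformly in $\eps$ from the uniform bounds on $\bbF$ (with coercivity \eqref{eq:E.coercive} giving the equi-compactness \eqref{eq:cond.E.eps.2}), the perturbation conditions \eqref{eq:Beps} follow from the growth of $\bbB$, and \eqref{eq:Psi.eps.1}--\eqref{eq:Psi.eps.2} come from the uniform ellipticity \eqref{eq:Exa.Ass.A2}. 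The $\Gamma$-convergence $\calE^\eps\Gto\calE^0$ with cell-formula limit $F^\text{hom}$, the liminf estimate \eqref{eq:cond.E.epsGamma}, the strong-weak closedness \eqref{eq:cond.E.eps.5} (via convexity of $\calE^\eps$, cf.\ Remark~\ref{re:SWClosGamma}), and the convergence of $B^{\eps_n}$ to $b^\text{aver}$ are identical to those in Theorem~\ref{th:Hom.I}.

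The only genuinely new content is the verification of \eqref{eq:Psi.strongG}. For the strong $\Gamma$-convergence \eqref{eq:Psi.strongG.b}, I would combine the pointwise limit $\Psi^{\eps_n}_{u_n}(v)\to \Psi^\text{aver}_{u_0}(v)$ for each fixed $v\in V$---which holds by periodic weak-$*$ convergence of $\bbA(\inveps_n\,\cdot\,,u_0(\cdot))$ to $A^\text{aver}(u_0(\cdot))$ in $\rmL^\infty$, corrected by the $\rmL^\infty$-continuity of $\bbA$ in its second slot (which is controlled since $p>d$ gives $u_n \to u_0$ in $\rmL^\infty$ along subsequences)---with the equi-Lipschitz bound $|\Psi^\eps_u(v_1)-\Psi^\eps_u(v_2)|\leq C_A(\|v_1\|_{\rmL^2}+\|v_2\|_{\rmL^2})\|v_1{-}v_2\|_{\rmL^2}$. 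Together these upgrade the pointwise limit into a strong $\Gamma$-limit: one picks $v_n = v$ as a recovery sequence, and handles arbitrary strongly convergent $v_n\to v$ via the Lipschitz correction.

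The heart of the proof is the uniform continuity \eqref{eq:Psi.strongG.a}. For $u_1,u_2$ with $G^\eps(u_j)\leq R$, the coercivity \eqref{eq:E.coercive} gives $\|u_j\|_{\rmW^{1,p}}\leq C_R$. Since $p>d$, the Sobolev embedding provides $\|u_j\|_{\rmL^\infty}\leq \wh C_R$, and a Gagliardo--Nirenberg interpolation furnishes an exponent $\theta\in(0,1)$ with
\[
\|u_1{-}u_2\|_{\rmL^\infty}\leq C_{\rmG\rmN}\,(2C_R)^{1-\theta}\,\|u_1{-}u_2\|_{\rmL^2}^\theta.
\]
Feeding this into the uniform continuity \eqref{eq:bbAUnifCont} of $\bbA(y,\cdot)$ and integrating the resulting bound for $|\bbA(\inveps x,u_1(x))-\bbA(\inveps x,u_2(x))|$ against $\tfrac12|v(x)|^2$ yields \eqref{eq:Psi.strongG.a} with modulus $\omega_R(s)=C\,\omega_{\wh C_R}(C_{\rmG\rmN}(2C_R)^{1-\theta}s^\theta)$ and with the quadratic coercivity $g_R(s) \sim s^2/(2C_A)$ from \eqref{eq:Exa.Ass.A2}. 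With both parts of \eqref{eq:Psi.strongG} established, Corollary~\ref{co:StrongGa} delivers all conclusions of Theorem~\ref{th:Hom.I}. The main obstacle is precisely this last step: without $p>d$ one loses the $\rmL^\infty$-control on $u_1{-}u_2$, and the pointwise continuity of $\bbA$ cannot be transferred into an integrated estimate of the required form $\omega_R(\cdot)g_R(\cdot)$, so the hypothesis $p>d$ is what makes the strong-$\Gamma$ route viable when $\bbA$ genuinely oscillates in $y$.
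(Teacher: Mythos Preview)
Your proposal is correct and follows essentially the same route as the paper: verify the two parts of \eqref{eq:Psi.strongG} and then invoke Corollary~\ref{co:StrongGa}. In particular, your derivation of the uniform continuity \eqref{eq:Psi.strongG.a} via the $\rmW^{1,p}$-bound from $G^\eps(u_j)\leq R$, the Sobolev embedding for $p>d$, the Gagliardo--Nirenberg interpolation $\|u_1{-}u_2\|_{\rmL^\infty}\leq C_{\rmG\rmN}(2C_R)^{1-\theta}\|u_1{-}u_2\|_{\rmL^2}^\theta$, and the pointwise use of \eqref{eq:bbAUnifCont} under the integral, is exactly the argument the paper gives in the paragraph preceding the theorem; the only cosmetic difference is that the paper already records the strong $\Gamma$-convergence $\Psi^{\eps_n}_{u_n}\Gto\Psi^\text{aver}_u$ in \eqref{eq:HomPsiGaCvg} for merely $\rmL^2$-convergent $u_n$ (via pointwise convergence plus equi-Lipschitz continuity), so your appeal to $\rmL^\infty$-convergence of $u_n$ for that part is unnecessary but harmless.
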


Indeed, we conjecture that these two additional conditions (either
$\bbA$ independent of $y=\inveps x$ or \eqref{eq:bbAUnifCont}) are not
really necessary. Using two-scale unfolding as in \cite{MiReTh14TSHN,
  Reic16EEEE, Reic17CECI} and   
a suitable version of \textsc{Ioffe}'s theorem it
should be possible to prove the fundamental liminf estimate 
\[
\int_0^T \Psi^\text{aver}_{u(t)}(u'(t))\dd t \leq \liminf_{k\to
  \infty} \int_0^T \Psi^{\eps_k}_{u_{\eps_k}(t)}(u'_{\eps_k}(t))\dd t
\]
in much more general cases. 
\EEE

\appendix
\section{Appendix}
\label{se:Appendix}

In this section, we provide some tools on parametrized \textsc{Young}
measures which we made use in the last section. First, we give some
notions related to \textsc{Young} measures and which was originally
introduced by \textsc{Balder} \cite{Bald84GALS}.

In the following, let $\mathcal{V}$ be a reflexive (separable)
\textsc{Banach} space. In fact, we employed the following results to
the separable and reflexive \textsc{Banach} space $\mathcal{V}=V\times
V^*\times R$ endowed with the product topology. Further, let for an
interval $\mathscr{L}_{(0,T)}$ be the \textsc{Lebesgue}
$\sigma$-algebra of $(0,T)$ and let $\mathscr{B}(\mathcal{V})$ be the
\textsc{Borel} $\sigma$-algebra of $\mathcal{V}$.

Then, we say that a $\mathscr{L}_{(0,T)} \otimes
\mathscr{B}(\mathcal{V})$-measurable function $\mathcal{H}(0,T)\times
\mathcal{V}\rightarrow (-\infty,+\infty]$ is a \textit{weakly-normal
  integrand} if for almost every $t\in(0,T)$ the map $w\mapsto
\mathcal{H}(t,w)$ is sequentially lower semicontinuous with respect to
the weak topology of $\mathcal{V}$.  
\medskip

Furthermore, we denote by $\mathscr{M}(0,T;\mathcal{V})$ the set of
all $\mathscr{L}_{(0,T)}$-measurable functions $y:(0,T)\rightarrow
\mathcal{V}$. Then, a sequence $(w_n)_{n\in\mathbb{N}}\subset
\mathscr{M}(0,T;\mathcal{V})$ is said to be \textit{weakly-tight} if
there exists a weakly-normal integrand $\mathcal{H}:(0,T)\times
\mathcal{V}\rightarrow (-\infty,+\infty]$ such that the map $w\mapsto
\mathcal{H}(t,w)$ has weakly compact sublevels in $\mathcal{V}$ for
a.a. $t\in(0,T)$, and there holds
\begin{align}
\label{eq:A1}
\sup_{n\in \mathbb{N}} \int_0^T \mathcal{H}(t,w_n(t))\dd r <+\infty. 
\end{align} 
Finally, a family $\mathbold{\mu}=(\mu_t)_{t\in (0,T)}$ of
\textsc{Borel} probability measures on $\mathcal{V}$ is called
\textsc{Young} measure if on $(0,T)$ the map $t\mapsto \mu_t(B)$ is
$\mathscr{L}_{(0,T)}$-measurable for all $B\in
\mathscr{B}(\mathcal{V})$. With $\mathscr{Y}(0,T;\mathcal{V})$ we
denote the set of all \textsc{Young} measures in $\mathcal{V}$.

\begin{thm}\label{th:A1}
  Let $\mathcal{H}_n,\mathcal{H}: (0,T)\times \mathcal{V}\rightarrow
  (-\infty,+\infty]$ be for all $n\in \mathbb{N}$ weakly normal
  integrand such that for all $w\in \mathcal{V}$ and for almost every
  $t\in(0,T)$ we have
  \begin{align}
    \label{eq:A2}
   \mathcal{H}(t,w)\leq \inf \lbrace \liminf_{n\rightarrow \infty}
   \mathcal{H}_n  (t,w_n)  \mid w_n\rightharpoonup w \quad
   \text{in } \mathcal{V} \rbrace. 
  \end{align} 
  Let $(w_n)_{n\in\mathbb{N}}\subset \mathscr{M}(0,T;\mathcal{V})$ be
  a weakly-tight sequence. Then, there exists a subsequence
  $(w_{n_k})_{k\in \mathbb{N}}$ and a \textsc{Young} measure
  $\mathbold{\mu}=(\mu_t)_{t\in(0,T)}$ such that for almost every
  $t\in(0,T)$ we have 
  \begin{align}
   \label{eq:A3}
   \mathrm{sppt}(\mu_t)\subset \mathrm{Li}(t):=
   \bigcap_{p=1}^{\infty}\mathrm{clos_{weak}} \big( \lbrace w_{n_k}(t)
   \mid k\geq p \rbrace \big), 
  \end{align} 
  i.e.\ $\mu_t$ is concentrated on the set of all limit points of the
  sequence $(w_{n_k}(t))_{k\in \mathbb{N}}$ with respect to the weak
  topology $W$ of $\mathcal{V}$, where $\overline{A}^{W}$ denotes the
  weak closure of a subset $A\subset\mathcal{V}$, and if the sequence
  $t \mapsto \mathcal{H}^-_n(t,w_{n_k}(t)):=\max \lbrace
  -\mathcal{H}_n(t,w_{n_k}(t)),0\rbrace$ is uniformly integrable,
  there holds
  \begin{align}
   \label{eq:A4}
   \int_0^T \int_{\mathcal{V}} \mathcal{H}(t,w)\dd \mu_t(w)\dd t\leq
   \liminf_{k\rightarrow}\int_0^T \mathcal{H}_{n_k}(t,w_{n_k}(t))\dd
   t. 
  \end{align}
\end{thm}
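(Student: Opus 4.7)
The plan is to follow the classical strategy of \textsc{Balder} for Young measures on reflexive \textsc{Banach} spaces, splitting the argument into (i) extraction of a limiting Young measure by a \textsc{Prokhorov}-type compactness theorem adapted to the weak topology of $\mathcal{V}$, (ii) verification of the support property \eqref{eq:A3}, and (iii) the liminf estimate \eqref{eq:A4} via a generalized \textsc{Fatou} lemma that exploits the uniform integrability of the negative parts.

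First, I will use the weakly-tightness assumption to extract the Young measure. By \eqref{eq:A1} there is a weakly-normal integrand (the one witnessing weak-tightness, which I can without loss of generality assume nonnegative, possibly after truncation) whose sublevels are weakly sequentially compact in $\mathcal{V}$; since $\mathcal{V}$ is reflexive and separable, these are sequentially compact in a metrizable weak topology on bounded sets. This places the parametrized measures $t\mapsto \delta_{w_n(t)}$ in a narrowly compact subset of $\mathscr{Y}(0,T;\mathcal{V})$ (with respect to the weak topology of $\mathcal{V}$), so that a version of \textsc{Prokhorov}'s theorem for Young measures, as developed in \textsc{Balder} \cite{Bald84GALS} and in \cite[Thm.\,5.3.1]{AmGiSa05GFMS}, yields a subsequence $(w_{n_k})_{k\in\mathbb{N}}$ and a Young measure $\mathbold{\mu}=(\mu_t)_{t\in (0,T)}\in \mathscr{Y}(0,T;\mathcal{V})$ which is the narrow limit of the Dirac Young measures $\delta_{w_{n_k}(\cdot)}$.

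Second, I will verify the support inclusion \eqref{eq:A3}, which is essentially a \textsc{Portmanteau}-type statement. Fix $t$ outside the null set where narrow limits exist and suppose $w\notin \mathrm{Li}(t)$. Then there is $p\in\mathbb{N}$ such that $w$ does not lie in the weak closure of $\{w_{n_k}(t) : k\geq p\}$, so one finds a weakly open neighborhood $U\ni w$ with $U\cap\{w_{n_k}(t) : k\geq p\}=\emptyset$. Choosing a weakly lower semicontinuous function $0\leq \varphi\leq \chi_U$ with $\varphi(w)>0$ and testing the narrow convergence along $\delta_{w_{n_k}}\to \mathbold{\mu}$ against $\varphi$ yields $\int_\mathcal{V}\varphi\dd\mu_t\leq \liminf_k \varphi(w_{n_k}(t))=0$, hence $\mu_t(U)=0$. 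Varying $w$ gives $\mathrm{sppt}(\mu_t)\subset\mathrm{Li}(t)$.

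Third, and this is the main step, I will prove the liminf inequality \eqref{eq:A4}. For a.a. $t$ and every $w\in\mathrm{sppt}(\mu_t)\subset\mathrm{Li}(t)$ there is a subsubsequence $(k_l)$ with $w_{n_{k_l}}(t)\rightharpoonup w$; by \eqref{eq:A2} this gives
\begin{equation*}
\mathcal{H}(t,w)\leq \liminf_{l\to\infty}\mathcal{H}_{n_{k_l}}(t,w_{n_{k_l}}(t)) \qquad \mu_t\text{-a.e.}, \text{ for a.a.\ }t\in(0,T).
\end{equation*}
Integrating against $\mu_t$ and then in $t$ produces an inner liminf that must still be exchanged with the time integral. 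At this point the uniform integrability of $t\mapsto \mathcal{H}_n^-(t,w_{n_k}(t))$ enters: writing $\mathcal{H}_n = \mathcal{H}_n^+ - \mathcal{H}_n^-$, I apply the ordinary \textsc{Fatou} lemma to the nonnegative parts and \textsc{Vitali}'s convergence theorem (or the \textsc{Dunford}-\textsc{Pettis} criterion) to the uniformly integrable negative parts. Combined with the pointwise bound, this yields \eqref{eq:A4}. The technical heart of this last step is precisely the \textsc{Balder} weakly-normal integrand version of \textsc{Fatou} (see \cite[Thm.\,4.3]{Stef08BEPD}), and the main obstacle is ensuring that the pointwise liminf is taken along the \emph{correct} subsubsequence depending on $(t,w)$ while still yielding a measurable bound; this is handled by a standard diagonalization together with the joint measurability of $(t,w)\mapsto \mathcal{H}(t,w)$ and $(t,w)\mapsto \mathcal{H}_n(t,w_n(t))$ built into the notion of weakly-normal integrand.
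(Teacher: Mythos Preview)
The paper does not prove this theorem at all: its entire proof reads ``This is shown in \textsc{Stefanelli} \cite[Thm.\,4.3, pp.\,1626]{Stef08BEPD}.'' Your proposal is therefore not competing with an argument in the paper but rather reconstructing the content of the cited reference, and indeed you yourself point to \cite[Thm.\,4.3]{Stef08BEPD} at the end. In that sense your outline is appropriate and follows the standard \textsc{Balder} route (Prokhorov-type compactness for parametrized measures, support identification, generalized \textsc{Fatou}).

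One genuine soft spot is your step (ii). Narrow convergence of Young measures is a statement about integrals over $(0,T)\times\mathcal{V}$, not about convergence of $\mu^k_t:=\delta_{w_{n_k}(t)}$ to $\mu_t$ at a fixed $t$; in particular, ``testing the narrow convergence against $\varphi$'' does not directly yield $\int_\mathcal{V}\varphi\,\dd\mu_t\leq\liminf_k\varphi(w_{n_k}(t))$ pointwise in $t$. The support inclusion $\mathrm{sppt}(\mu_t)\subset\mathrm{Li}(t)$ is instead obtained inside the compactness construction itself (via disintegration of the limit and the fact that, on a metrizable weak topology, the limit measure is concentrated on the set of accumulation points), see \cite{Bald84GALS} or \cite[Thm.\,5.4.4]{AmGiSa05GFMS}. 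Your conclusion is correct, but the justification you give needs to be replaced by this structural argument rather than a pointwise \textsc{Portmanteau} step.

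Step (iii) is fine in outline; the ``diagonalization'' you mention is precisely what \textsc{Balder}'s theorem packages, and your splitting $\mathcal{H}_n=\mathcal{H}_n^+-\mathcal{H}_n^-$ with \textsc{Fatou} on the positive part and uniform integrability on the negative part is the standard mechanism.
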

\begin{proof}
This is shown in  \textsc{Stefanelli} \cite[Thm.\,4.3, pp.\,1626]{Stef08BEPD}.
\end{proof} 

As corollary of the previous theorem, we have the so-called
Fundamental Theorem for weak topologies which provides a
characterization of weak limits by \textsc{Young}-measures.

\begin{thm}\label{th:A2}
 (Fundamental Theorem for weak topologies)
  $\newline$ Let $1\leq p\leq \infty$ and let
  $(w_n)_{n\in\mathbb{N}}\subset \rmL^p(0,T;\mathcal{V})$ be a bounded
  sequence. If $p=1$, we suppose further that $(w_n)_{n\in\mathbb{N}}$
  is uniformly integrable in $\rmL^1(0,T;\mathcal{V})$. Then, there
  exists a subsequence $(w_{n_k})_{k\in\mathbb{N}}$ and a \textsc{Young}
  measure $\mathbold{\mu}=(\mu_t)_{t\in(0,T)}\in
  \mathscr{Y}(0,T;\mathcal{V})$ such that for almost every $t\in(0,T)$
  relation \eqref{eq:A3} holds and, setting
\begin{align}
\label{eq:A5}
\textsc{w}(t):=\int_\mathcal{V} w \, \dd \mu_t(w) \quad
\text{a.a. }t\in(0,T). 
\end{align} 
there holds
\begin{align}
\label{eq:A6}
w_{n_k}\rightharpoonup \textsc{w} \quad \text{in
}\rmL^p(0,T;\mathcal{V}) \quad \text{as }k\rightarrow \infty,
\end{align} with $\rightharpoonup$ replaced by $ \rightharpoonup^*$ if
$p=\infty$.
\end{thm}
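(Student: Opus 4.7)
The plan is to reduce the statement to an application of Theorem \ref{th:A1}: first extract a \textsc{Young} measure $\mathbold\mu$ from weak-tightness, then define $\textsc{w}$ as its barycenter, and finally identify this barycenter as the weak limit by testing against continuous linear functionals.

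First I would verify that $(w_n)_{n\in\N}$ is weakly-tight in the sense of \eqref{eq:A1}. For $p\in(1,\infty]$ the reflexivity of $\mathcal{V}$ makes the integrand $\mathcal{H}(t,w):=\|w\|_{\mathcal{V}}^p$ (or $\|w\|_{\mathcal{V}}$ for $p=\infty$) weakly-normal with weakly compact sublevels (closed balls in a reflexive space are weakly compact), and the $\rmL^p$-boundedness gives a uniform bound on $\int_0^T \mathcal{H}(t,w_n(t))\dd t$. For $p=1$ the uniform integrability together with the criterion of \textsc{de la Vall\'ee-Poussin} provides a superlinear function $\theta:[0,\infty)\to[0,\infty)$ with $\sup_n \int_0^T \theta(\|w_n(t)\|_{\mathcal{V}})\dd t<\infty$, and $\mathcal{H}(t,w):=\theta(\|w\|_{\mathcal{V}})$ again has weakly compact sublevels. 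Hence Theorem \ref{th:A1} applies and yields a subsequence $(w_{n_k})_k$ and a \textsc{Young} measure $\mathbold\mu=(\mu_t)_{t\in(0,T)}$ satisfying the support property \eqref{eq:A3}.

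Next I would check that the barycenter $\textsc{w}(t):=\int_{\mathcal{V}} w\dd\mu_t(w)$ from \eqref{eq:A5} is well-defined and belongs to $\rmL^p(0,T;\mathcal{V})$. Well-definedness as a \textsc{Bochner} integral follows from the fact that $\mu_t$ has finite first moment, which in turn follows by applying \eqref{eq:A4} to $\mathcal{H}_n(t,w)=\mathcal{H}(t,w)=\|w\|_{\mathcal{V}}$ (clearly weakly lower semicontinuous and bounded below, so the $\mathcal{H}_n^-$ uniform integrability is trivial):
\[
\int_0^T \int_{\mathcal{V}} \|w\|_{\mathcal{V}}\dd\mu_t(w)\dd t
  \leq \liminf_{k\to\infty}\int_0^T \|w_{n_k}(t)\|_{\mathcal{V}}\dd t < \infty.
\]
The analogous bound with $\|\cdot\|_{\mathcal{V}}^p$ in place of $\|\cdot\|_{\mathcal{V}}$, combined with \textsc{Jensen}'s inequality applied to $r\mapsto\|r\|_\mathcal{V}^p$ under the probability measure $\mu_t$, yields $\|\textsc{w}\|_{\rmL^p(0,T;\mathcal{V})}\leq \liminf_k \|w_{n_k}\|_{\rmL^p(0,T;\mathcal{V})}$, so $\textsc{w}\in \rmL^p(0,T;\mathcal{V})$ (for $p=\infty$ one argues similarly using the essential bound).

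Finally, the key step is identifying $\textsc{w}$ as the weak limit. Let $\varphi\in \rmL^{p'}(0,T;\mathcal{V}^*)$ with $1/p+1/p'=1$ (or $p'=1$ if $p=\infty$). I apply Theorem \ref{th:A1} to the pair $\pm\mathcal{H}(t,w):=\pm\langle\varphi(t),w\rangle$. Both are weakly normal integrands on $(0,T)\ti\mathcal{V}$, and for any sequence $w_n\rightharpoonup w$ in $\mathcal{V}$ (for fixed $t$) the continuity of $\langle\varphi(t),\cdot\rangle$ gives equality in \eqref{eq:A2}. The requisite uniform integrability of $t\mapsto |\langle\varphi(t),w_{n_k}(t)\rangle|$ is a direct consequence of \textsc{H\"older}'s inequality together with the $\rmL^p$-boundedness of $(w_{n_k})_k$ and the integrability of $\|\varphi\|_{\mathcal{V}^*}^{p'}$. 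Therefore \eqref{eq:A4} applied to both signs gives
\[
\int_0^T \langle\varphi(t),\textsc{w}(t)\rangle\dd t
  = \int_0^T\!\!\int_{\mathcal{V}} \langle\varphi(t),w\rangle\dd\mu_t(w)\dd t
  = \lim_{k\to\infty}\int_0^T \langle\varphi(t),w_{n_k}(t)\rangle\dd t,
\]
which is exactly \eqref{eq:A6}, with $\rightharpoonup$ replaced by $\rightharpoonup^*$ in the case $p=\infty$.

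The main obstacle I expect is not a deep one but a careful bookkeeping: choosing the correct weakly-normal integrand witnessing weak-tightness in all three regimes $p\in\{1\}\cup(1,\infty)\cup\{\infty\}$, and justifying that the barycenter $\textsc{w}(t)$ is measurable and integrable before one tests against $\rmL^{p'}$-functionals. Beyond that, the identification of the weak limit is automatic from the linearity trick $(\pm\mathcal{H})$ inside Theorem \ref{th:A1}.
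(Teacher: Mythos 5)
The paper gives no proof of Theorem~\ref{th:A2}: it simply asserts it ``as corollary of the previous theorem'' (Theorem~\ref{th:A1}) without filling in the argument. Your reconstruction proceeds along exactly the route the authors indicate, and the main steps are correct: the choices of tightness integrand in the three regimes $p=1$, $1<p<\infty$, $p=\infty$ are right; the de la Vall\'ee-Poussin argument for $p=1$ is the standard one; the moment estimate via \eqref{eq:A4} with $\mathcal{H}(t,w)=\|w\|_{\mathcal V}$ gives the finite barycenter; and testing the same bound with $\pm\langle\varphi(t),\cdot\rangle$ for $\varphi\in \rmL^{p'}(0,T;\mathcal V^*)$ (where $\rmL^1(0,T;\mathcal V^*)^*=\rmL^\infty(0,T;\mathcal V)$ since $\mathcal V$ is separable reflexive) identifies the barycenter as the weak (respectively weak$^*$) limit. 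The uniform-integrability check for the product $\|\varphi(\cdot)\|_*\,\|w_{n_k}(\cdot)\|$ goes through by absolute continuity of the $\rmL^{p'}$-norm of $\varphi$ (for $p>1$) or by the assumed uniform integrability of $\|w_{n_k}\|$ together with $\varphi\in\rmL^\infty$ (for $p=1$).

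One point worth flagging: you invoke Theorem~\ref{th:A1} repeatedly with different integrands $\mathcal{H}_n$, but as the paper states the theorem, the subsequence (and a priori the \textsc{Young} measure) depends on the choice of $\mathcal{H}_n$, so strictly speaking each fresh application could pass to a further subsequence with a potentially different limit measure. Your argument tacitly uses the stronger version — standard in \textsc{Balder}'s and \textsc{Stefanelli}'s formulations — in which the \textsc{Young} measure is canonically attached to the subsequence and \eqref{eq:A4} then holds simultaneously for every admissible weakly-normal integrand with uniformly integrable negative part. This is exactly what the cited \cite[Thm.\,4.3]{Stef08BEPD} provides, so the gap is only one of exposition relative to the paper's wording of Theorem~\ref{th:A1}, not of substance. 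With that clarification, your proof is a correct and complete filling-in of the omitted argument.
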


\begin{lem}\label{le:A3}  
  Let the perturbed gradient system $(V,\calE,\Psi,B)$ satisfy the Assumptions
  \textnormal{(2.E), (2.$\Uppsi$)}, and 
  \textnormal{(2.B)} and let $u\in
  \AC(0,T;\mathcal{V})$ be an absolutely continuous curve such that
\begin{align}
\label{eq:A7}
  \partial \calE_t(u(t)) \neq \emptyset \quad \text{for a.a. }t\in(0,T), \quad
\text{and} \quad \sup_{t\in(0,T)}\calE_t(u(t))<+\infty. 
\end{align} 
Furthermore, let $\mathbold{\mu}=(\mu_t)_{t\in(0,T)}\in
\mathscr{Y}(0,T;\mathcal{V})$ be a \textsc{Young} measure such that
\begin{align}
\label{eq:A8}
\int_0^T \int_{V\times V^*\times \mathbb{R}} \left(
  \Psi_{u(t)}(v)+\Psi^*_{u(t)}(B(t,u(t))-\zeta)\right) \dd 
   \mu_t(v,\zeta,p)\dd t<+\infty, 
  \\
   u'(t)=\int_{V\times V^* \times \mathbb{R}} v \dd  
   \mu_t(v,\zeta,p) \quad \text{for a.a. }t\in (0,T),\notag
\end{align} 
and for almost all $t\in(0,T)$ for all $(v,\zeta,p)\in
\text{supp}(\mu_t)$ there holds $\zeta \in \partial \calE_t(u(t))$ and
$p\leq \partial_t \calE_t(u(t))$. Then,
\begin{align}
\label{eq:A9}
\begin{split}
  \text{the map $t\mapsto \calE_t(u(t)) $ is absolutely continuous on
    $(0,T)$, and}\\ 
  \frac{\rmd}{\rmd t}\calE_t(u(t))\geq \int_{V\times V^* \times
    \mathbb{R}} (\langle u'(t),\zeta\rangle+p)\dd \mu_t(v,\zeta,p)
  \quad \text{for a.a. }t\in(0,T).
\end{split}
\end{align}
\end{lem}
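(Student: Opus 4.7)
My plan is to reduce the claim to a single application of the chain-rule hypothesis \ref{eq:cond.E.4} via the barycentric selection extracted from the \textsc{Young} measure $\mathbold{\mu}$. Concretely, I set
\[
\widetilde{\xi}(t) := \int_{V\times V^*\times \mathbb{R}} \zeta\, \dd\mu_t(v,\zeta,p), \qquad
\mathscr{P}(t) := \int_{V\times V^*\times \mathbb{R}} p\, \dd\mu_t(v,\zeta,p),
\]
both being $\mathscr{L}_{(0,T)}$-measurable by the properties of \textsc{Young} measures (Theorem \ref{th:A2}), and apply \ref{eq:cond.E.4} to the pair $(u,\widetilde{\xi})$.

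To justify the application I verify the three prerequisites. \emph{(a)} For a.a.\ $t$ the measure $\mu_t$ is concentrated on $V\ti \partial\calE_t(u(t)) \ti (-\infty,\partial_t\calE_t(u(t))]$ by hypothesis, so the $V^*$-marginal of $\mu_t$ is a probability measure supported on $\partial\calE_t(u(t))$. Since this set is closed and convex (which is automatic whenever $\calE_t$ is convex, as in the relevant applications, and otherwise recovered by approximating $\mu_t$ with atomic measures and invoking \ref{eq:cond.E.5}), its barycenter $\widetilde{\xi}(t)$ belongs to $\partial\calE_t(u(t))$. \emph{(b)} Since $u'(t)=\int v\,\dd\mu_t$ and $\Psi_{u(t)}$ is convex, Jensen's inequality gives $\Psi_{u(t)}(u'(t))\leq \int \Psi_{u(t)}(v)\,\dd\mu_t$, and integration in time combined with \eqref{eq:A8} yields $\int_0^T\Psi_{u(t)}(u'(t))\,\dd t < \infty$. \emph{(c)} Likewise, Jensen applied to the convex $\Psi^*_{u(t)}$ gives $\Psi^*_{u(t)}(\widetilde{\xi}(t))\leq \int \Psi^*_{u(t)}(\zeta)\,\dd\mu_t$; decomposing $\zeta=\tfrac12(2B(t,u(t)))+\tfrac12(-2(B(t,u(t))-\zeta))$ and using convexity of $\Psi^*_{u(t)}$ together with \ref{eq:B.2} (to bound the $\Psi^*(2B)$-term via the uniform bound on $\calE_t(u(t))$) and \eqref{eq:A8} (for the $\Psi^*(B-\zeta)$-term) yields $\int_0^T\Psi^*_{u(t)}(\widetilde{\xi}(t))\,\dd t < \infty$.

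With (a)--(c) established, \ref{eq:cond.E.4} gives absolute continuity of $t\mapsto \calE_t(u(t))$ together with
\[
\tfrac{\rmd}{\rmd t}\calE_t(u(t))\; \geq\; \langle \widetilde{\xi}(t),u'(t)\rangle + \partial_t\calE_t(u(t))\quad \text{a.e.\ in }(0,T).
\]
To conclude, I rewrite $\langle\widetilde{\xi}(t),u'(t)\rangle = \int \langle \zeta,u'(t)\rangle\,\dd\mu_t$ by linearity of the integral, and use the pointwise bound $p\leq \partial_t\calE_t(u(t))$ on $\mathrm{supp}(\mu_t)$ to replace $\partial_t\calE_t(u(t))\geq \int p\,\dd\mu_t$. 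Adding the two yields the desired inequality \eqref{eq:A9}.

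The main obstacle is step (a): ensuring that the barycenter of the $V^*$-marginal of $\mu_t$ remains in $\partial\calE_t(u(t))$. This is transparent in the convex case but needs the approximation-plus-closedness argument outlined above in full generality, and it is precisely here that the strong-weak closedness \ref{eq:cond.E.5} of the subdifferential graph enters. A secondary, purely technical, point is the integrability estimate (c), where the convex decomposition leading from $\Psi^*(B-\zeta)$ to $\Psi^*(\zeta)$ uses that $\Psi^*_{u(t)}$ is bounded on bounded sets (a consequence of the coercivity \ref{eq:Psi.2} and the everywhere-finiteness of $\Psi^*_u$ noted in Remark \ref{re:Assump.Psi} $i)$).
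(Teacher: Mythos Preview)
Your barycenter-plus-chain-rule strategy is exactly the route taken in the reference \cite[Prop.\,B.1]{MiRoSa13NADN} to which the paper defers, so the overall plan is right. Two points deserve correction, though.

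\textbf{Step (a) is easier than you think, and your proposed fix is wrong.} The \textsc{Fr\'echet} subdifferential $\partial\calE_t(u)$ is \emph{always} a closed convex subset of $V^*$, whether or not $\calE_t$ is convex: if $\xi_1,\xi_2\in\partial\calE_t(u)$ and $\lambda\in[0,1]$, then
\[
\liminf_{v\to u}\frac{\calE_t(v)-\calE_t(u)-\langle\lambda\xi_1{+}(1{-}\lambda)\xi_2,v{-}u\rangle}{\|v{-}u\|}
\;\geq\;\lambda\cdot 0+(1{-}\lambda)\cdot 0.
\]
Hence the barycenter $\widetilde\xi(t)$ of the $V^*$-marginal lies in $\partial\calE_t(u(t))$ for free. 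Your alternative (``approximate $\mu_t$ by atomic measures and invoke \ref{eq:cond.E.5}'') would not work in any case: \ref{eq:cond.E.5} is a closedness statement under limits, not a convexity statement.

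\textbf{Step (c) has a genuine gap as written.} Your decomposition $\zeta=\tfrac12(2B)+\tfrac12(-2(B{-}\zeta))$ together with convexity yields
\[
\Psi^*_{u(t)}(\zeta)\leq \tfrac12\Psi^*_{u(t)}(2B(t,u(t)))+\tfrac12\Psi^*_{u(t)}\big({-}2(B(t,u(t)){-}\zeta)\big),
\]
and the second term is \emph{not} controlled by \eqref{eq:A8}: nothing in the assumptions forces $\Psi^*_u$ to be even or to satisfy a doubling estimate, so $\Psi^*_u(-2\eta)$ cannot in general be bounded by $\Psi^*_u(\eta)$. The same mismatch actually surfaces in the paper's main proof, where \ref{eq:cond.E.4} is invoked for a selection $\xi$ satisfying only $\int_0^T\Psi^*_{u}(B{-}\xi)\,\dd t<\infty$. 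The resolution is that the integrability hypothesis in \ref{eq:cond.E.4} should be read as $\int_0^T\Psi^*_{u(t)}(-\xi(t))\,\dd t<\infty$ (as in \cite{MiRoSa13NADN}) or, in the perturbed setting, $\int_0^T\Psi^*_{u(t)}(B(t,u(t)){-}\xi(t))\,\dd t<\infty$; with that reading, (c) follows immediately from \textsc{Jensen} applied to \eqref{eq:A8} and no decomposition is needed.
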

\begin{proof}
  This can be proven in exactly the same manner as in
  \cite[Prop.\,B.1, pp.\,305]{MiRoSa13NADN}.
\end{proof}

\begin{lem}\label{le:A4} (Measurable selection) Let the perturbed
  gradient system $(V,\calE,\Psi,B)$ satisfy the Assumptions
  \textnormal{(2.E), (2.$\Uppsi$)}, and 
  \textnormal{(2.B)}.  Furthermore,
  let $u\in \AC([0,T];V)$ be an absolutely continuous curve complying
  with \eqref{eq:A7}, and suppose that the set
\begin{align}
\label{eq:A10}
\mathcal{S}(t,u(t),u'(t)):=\lbrace &(\zeta,p)\in V^*\times \mathbb{R} \mid \zeta \in \partial \calE_t(u(t))\cap (B(t,u(t))-\partial \Psi_{u(t)}(u'(t)), \notag \\
& p\leq \partial_t \calE_t(u(t)) \rbrace \quad \text{ is non-empty for
  all }t\in(0,T). 
\end{align} Then, there exists measurable functions $\xi: (0,T)\rightarrow V^*, p:(0,T)\rightarrow \mathbb{R}$ such that 
\begin{align}
\label{eq:A11}
(\xi(t),p(t))\in \argmin\lbrace \Psi^*_{u(t)}(B(t,u(t))-\zeta)-p \mid (\zeta,p) \in \mathcal{S}(t,u(t),u'(t)) \rbrace 
\end{align} for a.a. $t\in(0,T)$.
\end{lem}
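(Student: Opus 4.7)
The plan is to reduce the problem to a pure measurable-selection question by exploiting Assumption~\ref{eq:Psi.1}. Indeed, for any $\zeta_1,\zeta_2$ belonging to the affine set $B(t,u(t))-\partial\Psi_{u(t)}(u'(t))$, the vectors $B(t,u(t))-\zeta_i$ both lie in $\partial\Psi_{u(t)}(u'(t))$, and \ref{eq:Psi.1} therefore gives
\[
\Psi^*_{u(t)}\bigl(B(t,u(t))-\zeta_1\bigr)=\Psi^*_{u(t)}\bigl(B(t,u(t))-\zeta_2\bigr).
\]
Hence the functional $\zeta\mapsto \Psi^*_{u(t)}(B(t,u(t))-\zeta)$ is constant on the $\zeta$-projection of $\mathcal{S}(t,u(t),u'(t))$. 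On the other hand, the $p$-component of the minimization problem is decoupled and trivial: minimizing $-p$ over $p\leq \partial_t\calE_t(u(t))$ yields the unique value $p(t):=\partial_t\calE_t(u(t))$, which is measurable because by \ref{eq:cond.E.3} the map $t\mapsto \partial_t\calE_t(u(t))$ is a pointwise derivative of the Carath\'eodory-type map $t\mapsto \calE_t(u(t))$ and hence Lebesgue measurable. So \eqref{eq:A11} will follow as soon as I can produce any measurable selection $t\mapsto \xi(t)$ of the closed multifunction
\[
\mathcal{F}(t):=\partial\calE_t(u(t))\cap\bigl(B(t,u(t))-\partial\Psi_{u(t)}(u'(t))\bigr)\subset V^*.
\]

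For the selection itself I will invoke \textsc{Castaing-Valadier} \cite[Thm.\,III.6 and Thm.\,III.22]{CasVal77CAMM}, exactly as already used in Lemma~\ref{le:Main.Lem}. Since $V^*$ is separable and reflexive, it suffices to verify that $\mathcal{F}$ has non-empty, (norm-)closed values and a measurable graph in $(0,T)\times V^*$. Non-emptiness is precisely the hypothesis~\eqref{eq:A10}, and closedness follows from closedness of each of the two intersecting sets (the convex subdifferentials have norm-closed values). To check measurability of the graph I will write $\mathrm{graph}(\mathcal{F})$ as the intersection of the graphs of $t\mapsto\partial\calE_t(u(t))$ and $t\mapsto B(t,u(t))-\partial\Psi_{u(t)}(u'(t))$ and show that both are Borel in $[0,T]\times V^*$.

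The graph of $t\mapsto \partial\calE_t(u(t))$ is Borel: the strong--weak closedness in \ref{eq:cond.E.5} together with the continuity of $t\mapsto u(t)$ and the uniform bound on $\calE_t(u(t))$ from \eqref{eq:A7} yields that $\{(t,\zeta)\in[0,T]\times V^*: \zeta\in\partial\calE_t(u(t))\}$ is sequentially closed in the strong$\times$weak topology, hence its intersection with any bounded ball is weakly closed, and a countable union over balls gives a Borel set in $[0,T]\times V^*$. For the second graph, I use that $u(\cdot)\in\AC([0,T];V)$ together with \ref{eq:B.1} and \ref{eq:Psi.3} implies continuity of $t\mapsto B(t,u(t))$ in $V^*$ and the strong--weak closedness of the graph of $\partial\Psi_{u(\cdot)}(u'(\cdot))$ on a full-measure subset where $u'(t)$ is defined; restricting to such a set and invoking the boundedness of $\partial\Psi_u$ on bounded sets from Remark~\ref{re:Assump.Psi}(i), another ball-by-ball argument produces a Borel graph. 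Thus $\mathcal{F}$ has Borel graph with non-empty closed values, so \textsc{Castaing-Valadier} produces a Lebesgue-measurable selection $\xi:(0,T)\to V^*$, and setting $p(t):=\partial_t\calE_t(u(t))$ gives \eqref{eq:A11}.

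The main obstacle here is the measurability of the two subdifferential multifunctions; once the strong--weak closedness of their graphs (which only uses \ref{eq:cond.E.5} and the definition of the convex subdifferential) is combined with the reflexivity of $V$ to replace ``weak sequential closedness'' by Borel measurability on balls, the rest is a straightforward book-keeping argument. No deep tools beyond Castaing--Valadier are needed.
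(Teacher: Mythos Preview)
Your reduction via Assumption~\ref{eq:Psi.1} is correct and is a genuine simplification over what the paper does (the paper gives no argument at all and simply defers to \cite[Lem.\,B.2]{MiRoSa13NADN}, where the full argmin structure is handled directly). Observing that $\Psi^*_{u(t)}(B(t,u(t))-\cdot)$ is constant on $B(t,u(t))-\partial\Psi_{u(t)}(u'(t))$ and that $p(t)=\partial_t\calE_t(u(t))$ is forced turns the problem into the selection of \emph{any} measurable section of $\mathcal{F}(t)$, which is a real gain.

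There is, however, a gap in your measurability argument for the second factor $t\mapsto B(t,u(t))-\partial\Psi_{u(t)}(u'(t))$. You attempt to show that its graph is Borel by a sequential-closedness argument, but this cannot work: $u'$ is only Lebesgue-measurable, so for $t_n\to t$ you have no control over $u'(t_n)$ and no reason for the graph to be closed in the $t$-variable, even on the full-measure set where $u'$ exists. Strong--weak closedness of $\partial\Psi_u(v)$ is a statement about convergence in $(u,v)$-space, not in $t$-space. The same objection applies, more mildly, to your first factor: condition~\ref{eq:cond.E.5} is formulated for fixed $t$, so it does not immediately give closedness of $\{(t,\zeta):\zeta\in\partial\calE_t(u(t))\}$ under $t_n\to t$.

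The repair is to argue measurability of the multifunctions directly rather than through Borel graphs. For the convex part, use that $(u,v,\zeta)\mapsto \Psi_u(v)+\Psi^*_u(\zeta)-\langle\zeta,v\rangle$ is a normal integrand (by \ref{eq:Psi.3} and lower semicontinuity), so its zero set $\{(u,v,\zeta):\zeta\in\partial\Psi_u(v)\}$ is $\mathscr{L}\otimes\mathscr{B}$-measurable after composing with the measurable map $t\mapsto(u(t),u'(t))$; this is the route taken in \cite{MiRoSa13NADN}. For the Fr\'echet subdifferential of $\calE$, one similarly uses that the epigraphical multifunction is measurable. Once both factors are $\mathscr{L}\otimes\mathscr{B}(V^*)$-measurable with closed values, their intersection is too, and Aumann's or Castaing--Valadier's theorem applies as you intended.
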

\begin{proof}
Like the previous result, this can also be proven in the same way as
\cite[Lem.\,B.2, pp.\,307]{MiRoSa13NADN}.
\end{proof}

\footnotesize

%

\newcommand{\etalchar}[1]{$^{#1}$}
\def\cprime{$'$}

\end{document}